\crefname{appendix}{supplement}{supplements}
\newcommand{\kindc}[1]{{\color{ForestGreen}{#1}}} 
\newcommand{\typec}[1]{{\color{RoyalBlue}{#1}}}
\newcommand{\termc}[1]{{\color{RedOrange}{#1}}}
\newcommand{\blk}[1]{{\color{black}{#1}}}
\newcommand{\gray}[1]{{\color{gray}{#1}}}
\newcommand\algst{AlgST\xspace}
\newcommand\fuse{FuSe$^{\{\}}$\xspace}
\newcommand{\grmeq}{\; ::= \;\;}
\newcommand{\grmor}{\;\mid\;}
\newcommand{\syntax}{\texttt}
\newcommand{\kindstyle}[1]{\kindc{\textbf{\textsc{#1}}}}
\newcommand{\kindm}{\kindstyle m}
\newcommand{\kindp}{\kindstyle p}
\newcommand{\kinds}{\kindstyle s}
\newcommand{\kindt}{\kindstyle t}
\newcommand{\kind}{\kindc{\kappa}}
\newcommand{\kinda}[2]{\kindc{\overline{#1}}\rightarrow\kindc{#2}}
\newcommand{\kindas}[2]{\kindc{#1}\rightarrow\kindc{#2}} 
\newcommand\lin{\kindstyle{lin}}
\newcommand\un{\kindstyle{un}}
\newcommand{\kindtu}{\kindt^\un}
\newcommand{\kindtl}{\kindt^\lin}
\newcommand{\operator}[1]{\blk{\operatorname{#1}}}
\newcommand{\polarizedparam}[2]{\blk{#1}\blk(\typec{#2}\blk)}
\newcommand{\tosessionop}{\mathsection}
\newcommand{\tosession}[2]{\blk{\tosessionop(}\typec{#1}\blk{).}\typec{#2}} 
\newcommand\pneg[1]{\typec{-#1}} 
\newcommand{\types}[1]{\typec{\syntax{#1}}} 
\newcommand\TEndW{\ensuremath{\types{End}_{\typec ?}}} 
\newcommand\TEndT{\ensuremath{\types{End}_{\typec !}}} 
\newcommand{\TG}[2]{\typec{{#1}.{#2}}} 
\newcommand{\TIn}[2]{\TG{?#1}{#2}}
\newcommand{\TOut}[2]{\TG{!#1}{#2}}
\newcommand{\TGIn}[1]{\typec{?#1}}
\newcommand{\TGOut}[1]{\typec{!#1}}
\newcommand{\TSkip}{\types{Skip}}
\newcommand{\TSeq}[2]{\typec{#1;#2}}
\newcommand{\TChoice}[2]{\typec{#1\oplus#2}}
\newcommand{\TBranch}[2]{\typec{#1\mathbin{\&} #2}}
\newcommand{\TOne}{\types{1}}
\newcommand\TFIn[2]{\tosession{\polarizedparam - {#1}}{#2}}
\newcommand\TFOut[2]{\tosession{\polarizedparam + {#1}}{#2}}
\newcommand\TFMinus[1]{\blk{-(}\typec{#1}\blk)}
\newcommand\protocolk{\types{protocol}}
\newcommand\datak{\types{data}}
\newcommand{\TUnit}{\types{Unit}}
\newcommand{\TInt}{\types{Int}}
\newcommand\TFun[3][\mult]{\typec{#2 \to_{#1} #3}}
\newcommand{\TPair}[2]{\typec{#1 \otimes #2}}
\newcommand{\dualk}{\types{Dual}}
\newcommand{\TDual}[1]{\typec{\dualk~{#1}}}
\newcommand{\TMinus}[1]{\typec{-{#1}}}
\newcommand{\tbind}[2]{\typec{{#1}\blk\colon\kindc{#2}}}
\newcommand{\TForall}[3]{\typec{\forall\tbind{#1}{#2}{.#3}}}
\newcommand{\TPApp}[2]{\typec{{#1}~{#2}}}
\newcommand{\matchin}[1][i]{\TFIn   {\overline{T_{#1}}\tsubs{\overline U}{\overline\alpha}}{S}}
\newcommand{\matchout}[1][i]{\TFOut {\overline{T_{#1}}\tsubs{\overline U}{\overline\alpha}}{S'}}
\newcommand{\protocolsfull}{\protocolk\ \typec{\proto\ \overline{\alpha}}~\typec{=}~\typec{\{{\termc{\constr_i}\ \overline{T_i}}\}_{i\in I}}
}
\newcommand{\exps}[1]{\termc{\syntax{#1}}}
\newcommand{\sendk}{\exps{send}}
\newcommand{\receivek}{\exps{receive}}
\newcommand{\letk}{\exps{let}}
\newcommand{\ink}{\exps{in}}
\newcommand{\newk}{\exps{new}}
\newcommand{\reck}{\exps{rec}}
\newcommand{\forkk}{\exps{fork}}
\newcommand{\matchk}{\exps{match}}
\newcommand{\withk}{\exps{with}}
\newcommand{\terminatek}{\exps{terminate}}
\newcommand{\waitk}{\exps{wait}}
\newcommand{\unitk}{\exps{*}} 
\newcommand{\selectk}{\exps{select}}
\newcommand{\constr}{\termc C} 
\newcommand{\proto}{\typec{\ensuremath{\rho}}}
\newcommand{\abse}[3]{\termc{\lambda{#1}\colon{\typec #2}.{#3}}}
\newcommand{\absne}[3]{\termc{\lambda{#1}.{#3}}}
\newcommand{\tabse}[3]{\termc{\Lambda}\,\typec{{#1}\colon{#2}}\termc{. #3}}
\newcommand{\rece}[3]{\termc{\reck\ {#1}\colon{\typec #2}.{#3}}}
\newcommand{\appe}[2]{\termc{#1\,#2}}
\newcommand{\letu}[2]{\termc{\letk\,\unitk = #1\,\ink\,#2}}
\newcommand{\lete}[4]{\termc{\letk\,\langle #1, #2\rangle = #3\,\ink\,{#4}}}
\newcommand{\paire}[2]{\termc{\langle #1, #2\rangle}}
\newcommand{\matche}[2]{\matchk\;\termc #1\;\withk\;\termc{#2}}
\newcommand{\matchwithe}[6]{\termc{\matche{#1}{\{{#2}_{#5}\;{#3}_{#5}\rightarrow{#4}_{#5}\}_{{#5}\in{#6}}}}}
\newcommand{\selecte}[1]{\selectk\,\termc{#1}} 
\newcommand{\closee}[1]{\terminatek\;\termc{#1}}
\newcommand{\waite}[1]{\waitk\;\termc{#1}}
\newcommand{\tappe}[2]{\termc{#1[}\typec{#2}\termc{]}}
\newcommand{\ebind}[2]{\termc{{#1}\blk\colon\typec{#2}}}
\newcommand{\eubind}[2]{\termc{{#1}\blk{\colon^{\!\!\!\star}}\typec{#2}}} 
\newcommand\Normal[2][\pm]{\operator{nrm}^{\blk #1}\blk(\typec{#2}\blk)}
\newcommand\FVT[1]{\operator{fv} ({#1})} 
\newcommand\FVE[1]{\FVT{#1}} 
\newcommand\FVL[1]{\FVT{#1}} 
\newcommand{\expp}[1]{\termc{\langle{#1}\rangle}}
\newcommand{\PAR}{\mid}
\newcommand{\parp}[2]{\termc{#1 \PAR #2}}
\newcommand{\newp}[3]{\termc{(\nu{#1}{#2})\,{#3}}}
\newcommand{\lts}[1]{\xrightarrow{#1}}
\newcommand{\trans}[3]{\termc{#1}\lts{#2}\termc{#3}}
\newcommand{\transCtx}[3]{{#1}\lts{#2}{#3}}
\newcommand{\forkl}[1]{\syntax{fork}\,{#1}}
\newcommand{\sendl}[2]{{#1}!{#2}}
\newcommand{\receivel}[2]{{#1}?{#2}}
\newcommand{\openl}[1]{{#1}!}
\newcommand{\closel}[1]{{#1?}}
\newcommand{\newl}[3]{(\nu{#1}{#2})\colon{#3}}
\newcommand{\scopel}[4]{(\nu{#1}{#2}){#3}!{#4}}
\newcommand{\parl}[2]{{#1}\PAR{#2}}
\newcommand{\cfsttoalgst}[1]{\blk{\llbracket {#1} \rrbracket}}
\newcommand{\cfsttoprot}[1]{\blk{\llparenthesis {#1} \rrparenthesis}}
\newcommand{\algsttocfst}[1]{\blk{\Lbag {#1} \Rbag}}
\newcommand{\emptyseq}[0]{\varepsilon}
\newcommand{\iso}[1]{\gray{{#1}}}
\newcommand{\isodecl}[1]{\iso{#1}\gray{\colon}}
\newcommand\isConv[4][\Delta]{#1 \vdash \typec{#2} \equiv \typec{#3} : {\kindc
    {#4}}}
\newcommand{\isEquiv}[2]{\typec{#1}\equiv_A\typec{#2}} 
\newcommand\isProc[2][\Gamma]{{#1} \vdash \termc{#2}}
\newcommand\isCtx[3][\Delta]{{#1} \vdash {#2}  \Leftarrow {#3}}
\newcommand\twoCtx[2]{#1 \mid #2}
\newcommand\typeSynth[3][\Delta]{#1 \vdash {\typec{#2}} \Rightarrow {#3}}
\newcommand\typeAgainst[3][\Delta]{#1 \vdash {\typec{#2}} \Leftarrow {#3}}
\newcommand\expSynth[5][\Delta]{\twoCtx{#1}{#2} \vdash \termc{#3} \Rightarrow
  \twoCtx{\typec{#4}}{#5}}
\newcommand\expAgainst[5][\Delta]{\twoCtx{#1}{#2} \vdash {\termc{#3}} \Leftarrow
  \twoCtx{\typec{#4}}{#5}}
\newcommand{\tsubs}[2]{\blk[\typec{#1}\blk/\typec{#2}\blk]} 
\newcommand{\esubs}[2]{\blk[\termc{#1}\blk/\termc{#2}\blk]} 
\newcommand{\declrel}[2]{\emph{#1}\hfill\fbox{{#2}}}
\newcommand{\Empty}{\cdot}
\newcommand\isSubKind[2]{\kindc{#1} \le \kindc{#2}}
\newcommand{\typeofop}{\operator{typeof}}
\newcommand\typeof[1]{\typeofop\operator{(}\termc{#1}\blk{)}}
\newcommand{\infrule}[3]{\inferrule[#1]{#2}{#3}}
\newcommand{\axiom}[2]{\infrule{#1}{}{#2}}
\newcommand{\rulename}[2]{\textsc{{#1}-{#2}}\xspace}
\newcommand{\rulenametype}[1]{\rulename{T}{#1}}  
\newcommand{\rulenametconv}[1]{\rulename{C}{#1}}  
\newcommand{\rulenameexp}[1]{\rulename{E}{#1}}   
\newcommand{\rulenametypeEndW}{\rulenametype{End?}}
\newcommand{\rulenametypeEndT}{\rulenametype{End!}}
\newcommand{\rulenametypeDotIn}{\rulenametype{In}}
\newcommand{\rulenametypeDotOut}{\rulenametype{Out}}
\newcommand{\rulenametypeMsgNeg}{\rulenametype{MsgNeg}}
\newcommand{\rulenametypeDualS}{\rulenametype{Dual}}
\newcommand{\rulenametypeDual}{\rulenametype{Dual}}
\newcommand{\rulenametypeUnit}{\rulenametype{Unit}}
\newcommand{\rulenametypeArrow}{\rulenametype{Arrow}}
\newcommand{\rulenametypeProtocol}{\rulenametype{Protocol}}
\newcommand{\rulenametypeVar}{\rulenametype{Var}}
\newcommand{\rulenametypePoly}{\rulenametype{Poly}}
\newcommand{\rulenametypeSub}{\rulenametype{Sub}}
\newcommand{\rulenametypePair}{\rulenametype{Pair}}
\newcommand{\rulenametconvReflex}{\rulenametconv{Refl}}
\newcommand{\rulenametconvSymm}{\rulenametconv{Sym}}
\newcommand{\rulenametconvTrans}{\rulenametconv{Trans}}
\newcommand{\rulenametconvSub}{\rulenametconv{Sub}}
\newcommand{\rulenametconvDoubleS}{\rulenametconv{DualInv}}
\newcommand{\rulenametconvDualEndW}{\rulenametconv{DualEnd?}}
\newcommand{\rulenametconvDualEndT}{\rulenametconv{DualEnd!}}
\newcommand{\rulenametconvDualOutM}{\rulenametconv{DualOut}}
\newcommand{\rulenametconvDualInM}{\rulenametconv{DualIn}}
\newcommand{\rulenametconvNegOut}{\rulenametconv{NegOut}}
\newcommand{\rulenametconvNegIn}{\rulenametconv{NegIn}}
\newcommand{\rulenametconvNegInv}{\rulenametconv{NegInv}}
\newcommand{\rulenameexpConst}{\rulenameexp{Const}}
\newcommand{\rulenameexpUVar}{\rulenameexp{Var$\star$}}
\newcommand{\rulenameexpVar}{\rulenameexp{Var}}
\newcommand{\rulenameexpAbs}{\rulenameexp{Abs}}
\newcommand{\rulenameexpAbsn}{\rulenameexp{Abs'}}
\newcommand{\rulenameexpRec}{\rulenameexp{Rec}}
\newcommand{\rulenameexpApp}{\rulenameexp{App}}
\newcommand{\rulenameexpAppn}{\rulenameexp{App'}}
\newcommand{\rulenameexpTAbs}{\rulenameexp{TAbs}}
\newcommand{\rulenameexpTApp}{\rulenameexp{TApp}}
\newcommand{\rulenameexpPair}{\rulenameexp{Pair}}
\newcommand{\rulenameexpLetUnit}{\rulenameexp{Let*}}
\newcommand{\rulenameexpLet}{\rulenameexp{Let}}
\newcommand{\rulenameexpMatch}{\rulenameexp{Match}}
\newcommand{\rulenameexpCheck}{\rulenameexp{Check}}
\newcommand{\cf}{cf.\xspace}
\begin{document}
\title{Parameterized Algebraic Protocols}

\author{Andreia Mordido\inst{1} \and
Janek Spaderna\inst{2} \and
Peter Thiemann\inst{2} \and
Vasco T. Vasconcelos\inst{1}
}

\authorrunning{A. Mordido, J. Spaderna, P. Thiemann, and V.T. Vasconcelos}

\institute{LASIGE, Faculdade de Ciências, Universidade de Lisboa, Lisbon, Portugal \and
University of Freiburg, Freiburg, Germany}

\maketitle

\begin{abstract}

  We propose algebraic protocols that enable the definition of 
  protocol templates and session types analogous to the definition 
  of domain-specific types with algebraic datatypes. Parameterized 
  algebraic protocols subsume all regular as well as most 
  context-free and nested session types and, at the same time, 
  replace the expensive superlinear algorithms for type checking by 
  a nominal check that runs in linear time. Algebraic protocols in 
  combination with polymorphism increase expressiveness and modularity 
  by facilitating new ways of parameterizing and composing session types.
\end{abstract}


\section{Introduction}
\label{sec:introduction}

The advantages of modeling data with algebraic datatypes (ADTs) 
have been recognised since the 80s 
\cite{Burstall1977a,DBLP:conf/lfp/BurstallMS80}. 
ADTs are easy to define and use, they enable the construction of
domain-specific types, and are amenable to efficient nominal type checking.
Functional programming languages quickly embraced ADTs along
with the concept of pattern matching which is still gaining momentum.  

Algebraic datatypes integrate recursive types, sum types, product
types, and generativity (each ADT definition creates a new type) in a
single declaration. The smooth integration of different typing concepts
makes ADTs easy to learn and use. It also simplifies the 
implementation of type checkers because they can elide superlinear algorithms for checking
equivalence of equirecursive types (where recursive types are
considered equal to their unfolding) in favor of a linear comparison of nominal types.

We show that the ADT approach for defining recursive
datatypes can be carried over to 
\emph{algebraic protocols} (AP), a novel method for defining a large class
of recursive protocols. Analogous to ADTs, APs integrate 
concepts from the theory of session types
\cite{DBLP:conf/concur/Honda93,DBLP:conf/parle/TakeuchiHK94}
(recursive types, choice types, and sequence types) and generativity
in a single declaration. Analogous to ADTs, APs are easy to use and
are amenable to efficient type checking. 

We develop APs by analogy to ADTs. Consider the following ADT for lists of
integers.
\begin{lstlisting}
  data IntList = Nil | Cons Int IntList
\end{lstlisting}
The datatype \lstinline+IntList+  comes with two data constructors,
\lstinline+Nil+  to construct an 
empty list and \lstinline+Cons+ to construct a list from an element
of type \lstinline+Int+ and a list of type \lstinline+IntList+. The
alternative between the constructors corresponds to a sum type and the
two arguments of \lstinline+Cons+ give rise to a product
type. Clearly, \lstinline+IntList+ is a recursive type. Moreover, its
definition introduces \lstinline+IntList+ as a new type, different
from any other. Type equivalence is simple: we just compare ADTs
by name.

Although \lstinline+IntList+ specifies a (static) data type, 
it carries an implicit dynamics and it should be 
quite natural for the reader to imagine elements of this type as
finite sequences of messages between two communication partners: if
there are  further elements, the sender selects the  
constructor \lstinline+Cons+, sends a value of type 
\lstinline+Int+ and then recurses. If there are no further elements,
the sender selects \lstinline+Nil+ and stops the transmission. 
On the other end, the receiver awaits a constructor. On receiving a
\lstinline+Cons+ it receives an \lstinline+Int+ and recurses. It stops
on receiving a \lstinline+Nil+. 


This view has striking parallels to the theory of binary session
types~\cite{DBLP:conf/concur/Honda93,DBLP:conf/esop/HondaVK98,DBLP:conf/parle/TakeuchiHK94}.
Session types provide a rich type discipline for communication channels that
statically guarantees protocol fidelity as well as type soundness across
communication channels. Here is a session type for the protocol associated to
the \lstinline+IntList+ ADT, written in the syntax of existing
systems~\cite{DBLP:journals/corr/abs-2106-06658,lindley17:_light_funct_session_types,DBLP:journals/jfp/GayV10}
and as seen from the sender's perspective:\footnote{Writing
  \lstinline+IntListS+ for the session type corresponding to
  \lstinline+IntList+.}
\begin{lstlisting}
  type IntListS = mualpha. oplus{Cons: !Int.alpha, Nil: EndT}
\end{lstlisting}
The leading $\mu\alpha\dots$ constructs an equirecursive type. The body is
an internal choice type $\oplus$ with two alternatives indicated by the tags
\lstinline+Cons+ and \lstinline+Nil+. If the sender selects
\lstinline+Cons+, it must continue according to
\lstinline+!Int.alpha+, which means it must send an integer and then
recurse. If the sender selects \lstinline+Nil+, then the protocol
comes to an end as indicated by the type \lstinline+EndT+.
Type \lstinline+EndT+ represents a channel endpoint ready to be closed. The dual
type, \lstinline+EndW+, represents the other endpoint, waiting to be closed.
As the name
suggests, a session type describes a full session of interactions
between the communication partners up to the end.

Like this example, most session type systems 
rely on structural typing and equirecursion; 
this way, unfolding does not give rise to communication.
Traditional session type systems restrict 
recursion to the tail position (as in \lstinline/IntListS/), so that types are effectively finite
automata with regular trace languages. Consequently, type equivalence
and subtyping for such systems are decidable in polynomial time.  Recently, richer forms of recursion
have been considered that lift the restriction to tail recursion, so
that types are akin to, e.g.,  deterministic context-free grammars
\cite{DBLP:conf/icfp/ThiemannV16,DBLP:conf/esop/Padovani17,DBLP:journals/toplas/Padovani19,DBLP:journals/corr/abs-2106-06658,
  DBLP:conf/esop/DasDMP21}.  
Unfortunately, when types are endowed with richer forms 
of recursion, type equivalence gets more complex 
(doubly-exponential \cite{DBLP:conf/esop/DasDMP21}). 
The implementation of a type equivalence algorithm for these systems is nontrivial
\cite{DBLP:conf/tacas/AlmeidaMV20}; incomplete algorithms have been
used to speed up type equivalence \cite{DBLP:conf/esop/DasDMP21};
annotations have been proposed~\cite{DBLP:journals/toplas/Padovani19}
that make type checking effective. But the algorithms for
type equivalence remain asymptotically superlinear in all these works.

The sad truth is that structural session types are quite verbose and
too complex for the average programmer, thus hampering their adoption. While equirecursive types are
expressive, understanding their equivalence is hard for programmers
and inefficient in language implementations.

What if programming with expressive recursive protocols were as easy as
programming with ADTs? The theory and practice of algebraic protocols
that we propose in this paper delivers exactly such an
approach. Enhanced with parameters and polymorphism in the style of
System~F, APs provide a uniquely simple and efficient approach to modularity and abstraction
when defining and implementing communication protocols.

Here is the declaration of an algebraic
protocol corresponding to the session type \lstinline+IntListS+:\footnote{Writing
  \lstinline+IntListP+ for the protocol type corresponding to
  \lstinline+IntList+.}
\begin{lstlisting}
  protocol IntListP = Nil | Cons Int IntListP
\end{lstlisting}
The keyword \lstinline+protocol+, in place of \lstinline+data+, denotes an
algebraic \emph{protocol} type rather than an algebraic \emph{data} type. A
protocol type is like a template that describes a composable segment of a
session. It neither describes a full session, nor is it committed to a
particular direction of communication. Rather, a protocol type has to be
\emph{materialized} into a session type by assigning a direction and specifying
a continuation session: we write \lstinline+!IntListP.S+ to send a sequence of
integers and continue with session type \lstinline+S+ or \lstinline+?IntListP.S+
to receive a sequence of integers and continue with \lstinline+S+, so that the
session type \lstinline/IntListS/ corresponds to \lstinline/!IntListP.EndT/.
Hence, protocol types relate to session types just like composable continuations
relate to continuations \cite{DBLP:conf/popl/Felleisen88,DanvyFilinski1992}: the
former describe a segment of interaction (execution) while the latter describe
interaction (execution) to the end.

Just like ADTs, we can equip APs with parameters as in
\lstinline+ListP a+:
\begin{lstlisting}
  protocol ListP a = Nil | Cons a ListP
\end{lstlisting}
While parameters range over types in ADT definitions, they
range over \emph{protocols} in AP definitions! This facility is key to
the modularity of APs. For example, the protocol
\lstinline+ListP Q+ repeats the protocol \lstinline+Q+. It is
also possible to flip the direction of \lstinline+Q+ by considering the protocol
\lstinline+ListP -Q+: repeat the protocol \lstinline+Q+ but with reversed roles
of sender and receiver (more in \cref{sec:simple-protocols-add}). With this
modular design, we can define finite streams of
anything expressible as an algebraic protocol. The
external and internal choices in previous session type systems are
captured by the sum-of-product nature of algebraic types---the sum
type replaced by internal/external choice and the product type
replaced by sequencing as in the linear logic
interpretation of session types \cite{DBLP:journals/mscs/CairesPT16}.

Algebraic protocol types are nominal and isorecursive. 
This combination of features reduces the complexity of type equivalence from
asymptotically doubly-exponential \cite{DBLP:conf/tacas/AlmeidaMV20,DBLP:conf/esop/DasDMP21} 
to linear. Notably, the expressivity of algebraic protocols 
is on par with that of session types with non-regular recursion
\cite{DBLP:conf/icfp/ThiemannV16,DBLP:conf/esop/DasDMP21}, 
so all protocols expressible in such theories can be adapted 
to algebraic protocols (without paying the price of hosting complex or 
incomplete algorithms in the compiler). Interestingly, types 
are now also iso-associative, thus saving us from defining 
hard-coded associativity laws, due to the basic properties 
of sequential composition 
\cite{DBLP:conf/icfp/ThiemannV16,DBLP:journals/corr/abs-2106-06658}.

\subsection*{Contributions}
\label{sec:contributions}

\begin{itemize}
\item Algebraic protocols (AP). We propose a new approach for defining
  recursive protocols compositionally inspired by parameterized algebraic
  datatypes. This approach combines naming, recursion, and
  choice-of-sequence types analogously to algebraic
  datatypes. It adds a novel notion of direction to the component
  protocols of a tag. Checking type equivalence becomes simple and efficient
  (worst case linear in the size of the types) as APs are nominal. 
\item Algebraic session types (\algst). We define a core calculus that is based
  on linear System~F with (non-linear) recursion and features session types
  based on APs. Type checking for \algst is specified by bidirectional typing
  rules that have a direct algorithmic reading.
  %
  We prove type soundness via preservation and
  progress.
\item Full implementation as a publicly available artifact
  \cite{artifact}, including practical extensions (cf.\ \cref{sec:implementation}).
\end{itemize}

\subsection*{Overview}
\label{sec:overview}

\Cref{sec:algebr-sess-types} contains an informal introduction of algebraic
protocols with examples; we illustrate the gain in
expressivity and modularity with respect to other work in
\cref{sec:param-prot}. \Cref{sec:types} discusses the type structure
of \algst and establishes key results for efficient type checking: correctness of type normalization and
worst case complexity of type equivalence. \Cref{sec:processes} defines
statics and dynamics of expressions 
and processes 
followed by a presentation of the metatheoretical results.
\Cref{sec:implementation} discusses the implementation; since all rules are
algorithmic, their appropriate reading leads directly to a type checker and an interpreter.
\Cref{sec:related,sec:conclusion} discuss related work and conclude.
Proofs, auxiliary results, further examples, an informal outline of an embedding
of context-free session types to \algst, and further discussion are provided as supplementary material. The
implementation is publicly available \cite{artifact}.


\section{Algebraic Protocols and session types}
\label{sec:algebr-sess-types}

This section provides a programmer's view of algebraic session types (\algst). It is driven by examples and
introduces the concepts of algebraic protocols and session types. Our
syntax is inspired by Haskell, as implemented in our artifact. Examples make
liberal use of practical extensions of the formal system (cf.\ \cref{sec:implementation}).
%


\subsection{Algebraic datatypes as protocols}
\label{sec:algebr-datatyp-as}

Suppose we want to transmit abstract syntax trees for a language with integers
and addition. Their type might be defined by an algebraic datatype definition as
in Haskell or ML.
\begin{lstlisting}
data Ast = Con Int | Add Ast Ast
\end{lstlisting}
This declaration defines the type \lstinline+Ast+ along with the types of the constructor
functions, which can be used to construct \lstinline+Ast+ values as well as for pattern matching
on them:
\begin{lstlisting}
  Con :        Int -> Ast
  Add : Ast -> Ast -> Ast
\end{lstlisting}
The type \lstinline+Ast+ is recursive and its \lstinline+Add+
constructor takes two \lstinline+Ast+ values as arguments.
To define a protocol for transmitting values of type \lstinline+Ast+, we merely change the keyword
\lstinline+data+ to \lstinline+protocol+ as in
\begin{lstlisting}
protocol AstP = ConP Int | AddP AstP AstP
\end{lstlisting}
This declaration defines the protocol type \lstinline+AstP+ along with tags \lstinline+ConP+
and \lstinline+AddP+, which serve to introduce and eliminate branching in the protocol.
The resulting \lstinline+AstP+ protocol can be used in
two directions, \lstinline+!AstP+ for sending and \lstinline+?AstP+ for receiving
data.\footnote{The implementation allows us to reuse a
  \lstinline+data+ declaration as a \lstinline+protocol+ declaration,
  thus overloading the constructors.} 
Protocol types only describe behavior, they do not classify run-time values. They inhabit a
dedicated kind \lstinline+KP+.

We illustrate the \lstinline+AstP+ protocol with a function to send an
\lstinline+Ast+ value. We write functions
in \emph{channel passing style}, which takes a channel as its argument
and returns it after processing some prefix of the channel's session
type. The function's type is universally quantified over the continuation
type \lstinline+s+ of the unexplored part of the channel.
The argument
type adds some prefix to \lstinline+s+ that describes the part of the
session performed by the function. In our example the prefix is \lstinline+!AstP+.
\begin{lstlisting}
sendAst : Ast -> forall(s:S). !AstP.s -> s
sendAst t [s] c = case t of {
    Con x   -> select ConP [s] c |> sendInt [s] x,
    Add l r -> select AddP [s] c |> sendAst l [!AstP.s] |> sendAst r [s] }
\end{lstlisting}

We use the \lstinline+|>+ operator for reverse function
application. Defined by \lstinline+x |> f = f x+, the operator associates
to the left and features low priority, so that \lstinline+x |> f |> g+ means \lstinline+g (f x)+.
The square brackets indicate type abstraction (on the left) and type
application (in expressions).
%
The predefined \lstinline+sendInt+ operation has type
\lstinline+forall(s:S). Int -> !Int.s -> s+.
The use of pattern matching with  \lstinline+case+ on algebraic
datatypes is standard.
As usual in session type systems, channel values of session type are linear
and the \lstinline+select+ operation makes an internal choice between alternatives. Selecting a protocol
constructor sends an encoding of the corresponding tag and ``pushes'' the transmission of the
fields on the channel type. That is:
\begin{lstlisting}
  select ConP [s] : !AstP.s -> !Int.s
  select AddP [s] : !AstP.s -> !AstP.!AstP.s
\end{lstlisting}
A function to process the receiver end of this protocol must take an argument of type \lstinline+?AstP.s+.
\begin{lstlisting}
recvAst : forall(s:S). ?AstP.s -> (Ast, s)
recvAst [s] c = match c with {
    ConP c -> let (x, c)  = receiveInt [s] c    in (Con x, c),
    AddP c -> let (tl, c) = recvAst [?AstP.s] c in
              let (tr, c) = recvAst [s] c       in (Add tl tr, c) }
\end{lstlisting}
The \lstinline+match+ on a session-typed channel receives
a tag and branches according to it. Branching also ``pushes'' transmission of the fields on the
channel type analogous to the \lstinline+select+ operation:
\begin{lstlisting}
c : ?AstP.s |- match c with { ConP c -> ...   -- c : ?Int.s
                              AddP c -> ... } -- c : ?AstP.?AstP.s
\end{lstlisting}
Left of the turnstile $\vdash$, we write the typing of
\lstinline+c+ before the \matchk. In the body of the \matchk, we write
the typing of \lstinline+c+ after matching against the selectors \lstinline+ConP+ and \lstinline+AddP+,
respectively.
The \lstinline+match c+ (with the type on the left) consumes the linear
binding for channel \lstinline+c+. The pattern \lstinline+ConP c+
reintroduces variable \lstinline+c+, binds it with the channel
after processing the selection and gives it an updated type (on the right) for the body of the
corresponding branch. The consideration for pattern \lstinline+AddP c+
is analogous. The choice of the branch is external and depends
on the received selector tag.

The protocol in this subsection goes beyond the reach of most session type systems because
they only support tail recursive session types.
Thus, functions like \lstinline+sendAst+ and \lstinline+recvAst+ have
no meaningful type in traditional
session type systems \cite{DBLP:journals/jfp/GayV10,lindley17:_light_funct_session_types}.
However, they can be written using context-free or nested session
types \cite{DBLP:journals/corr/abs-2106-06658,DBLP:conf/icfp/ThiemannV16,DBLP:conf/esop/DasDMP21}. We
discuss the differences in \cref{sec:related}.

\subsection{Simple protocols with polarities}
\label{sec:simple-protocols-add}

The protocol type \lstinline+AstP+ from
\cref{sec:algebr-datatyp-as}  describes unidirectional
communication. This section extends protocol types with
\emph{polarities} to model bidirectional communication.
The protocol type \lstinline+Arith+ of an arithmetic server (cf.\ 
\cite{DBLP:conf/esop/GayH99}) serves as an example. The server takes a command, \lstinline+Neg+ or
\lstinline+Add+, accepts one or two inputs to the commands and produces its output.
\begin{lstlisting}
protocol Arith = Neg Int -Int | Add Int Int -Int
\end{lstlisting}
The new element in this declaration is the \emph{polarity} ``\lstinline+-+'' which indicates a
reversal of the direction of communication. A client would use this protocol in ``forward
direction'', that is, in a session type of the form \lstinline+!Arith.s+ whereas the server would
implement \lstinline+?Arith.t+, for some session type \lstinline+t+ dual to  \lstinline+s+. On \lstinline+!Arith.s+,
selection of a tag works as in \cref{sec:algebr-datatyp-as}, but negative polarities
flip the direction:
\begin{lstlisting}
  select Neg [s] : !Arith.s -> !Int.?Int.s
  select Add [s] : !Arith.s -> !Int.!Int.?Int.s
\end{lstlisting}
Branching operates analogously on types in the assumption:
\begin{lstlisting}
c : ?Arith.s |- match c with { Neg c -> ...   -- c : ?Int.!Int.s
                               Add c -> ... } -- c : ?Int.?Int.!Int.s
\end{lstlisting}
The implementation of the server is straightforward. (We leave the client code to the reader.)
\begin{lstlisting}
serveArith : forall(s:S). ?Arith.s -> s
serveArith [s] c = match c with {
    Neg c -> let (x, c) = receiveInt [!Int.s] c in
             sendInt [s] (0-x) c,
    Add c -> let (x, c) = receiveInt [?Int.!Int.s] c in
             let (y, c) = receiveInt [!Int.s] c in
             sendInt [s] (x+y) c }
\end{lstlisting}

The tag declaration \lstinline+Neg Int -Int+ shows that the tag arguments are protocols themselves, rather than
plain types. This fact is witnessed by the use of \lstinline+AstP+ in a tag argument for
\lstinline+AstP+ as well as by the use of \lstinline+-Int+ in a tag argument for
\lstinline+Arith+.
Ordinary types like \lstinline+Int+ are promoted to protocol types.

\subsection{Parameterized protocols and modularity}
\label{sec:param-prot}

Algebraic protocols can have protocol parameters,  just like algebraic data
types have type parameters. This facility provides a powerful and versatile approach for declaring
modular protocols.
As a first example, we declare a protocol that repeatedly runs
another protocol:
\begin{lstlisting}
protocol Stream a = Next a (Stream a)
\end{lstlisting}
Using this protocol we define a process that sends infinitely many ones on a channel:
\begin{lstlisting}
ones : !Stream Int.EndT -> Unit
ones c = select Next [Int,EndT] c |> sendInt [!Stream Int.EndT] 1 |> ones
\end{lstlisting}
When using the \lstinline+Stream+ protocol, the continuation session
type \lstinline+EndT+ does not matter because the protocol
never reaches the continuation. For a parameterized protocol,
\lstinline+select+ takes type parameters corresponding to the protocol
parameters (\lstinline+Int+) as well as the continuation session (\lstinline+EndT+).
Algebraic protocols sidestep issues with contractiveness of
recursive definitions and thus guardedness of the protocols: we always
have to make explicit progress by selecting the \lstinline+Next+ tag.

Up to this point, the examples we presented are definable with some
effort in previous work on session types. The following examples
exploit features that make \algst stand out:
\begin{itemize}
\item Traditional session types lack composability in the sense that
  abstraction over behavior is only possible in the tail
  position. Polymorphic context-free session types
  \cite{DBLP:journals/corr/abs-2106-06658} lift this restriction, but
  at the price of a type equivalence
  algorithm that takes doubly exponential time in the
  worst-case. \algst allows abstraction of any part of a protocol
  while providing a linear-time algorithm for type equivalence.
\item The protocol kind is a unique feature of \algst that enables
  abstraction over entire subprotocols and their orientation at any place in a
  protocol. This facility enables us to define a toolbox of generic
  protocol operators, inspired by regular language operations, and
  flip between client and server views using the negation operator.
\end{itemize}

\subsubsection{Generic servers}
\label{sec:generic-servers}

Parametric protocols enable the construction of generic servers that abstract over the
subsidiary protocols. Our first example created a stream of integers,
but what if we wanted a server for \lstinline+Stream Arith+ that  runs the \lstinline+Arith+ protocol ad infinitum?
To this end we write a generic server, \lstinline+stream+, that implements
the \lstinline+Stream a+ protocol, for any \lstinline+a+, and invokes a subsidiary server for the
sub-protocol \lstinline+a+. We abstract the type of this
sub-server with a parameterized type alias:
\begin{lstlisting}
type Service a = forall(s:S). ?a.s -> s
\end{lstlisting}
Thus armed, we can write the generic server and use it to run
\lstinline+Stream Arith+.
\begin{lstlisting}
stream : forall(a:P). Service a -> ?Stream a.EndT -> Unit
stream [a] serve c = match c with {
    Next c -> serve [?Stream a.EndT] c |> stream [a] serve }

streamArith : ?Stream Arith.EndT -> Unit
streamArith = stream [Arith] serveArith
\end{lstlisting}
We call the server constructed in this way \emph{passive} as it waits to receive a \lstinline+Next+
tag before running the subsidiary protocol. Dually, we can define an
\emph{active} server. 

\subsubsection{Active servers --- an exercise in duality and negation}
\label{sec:active-servers-an}

An active server offers the
\lstinline+Next+ tag to its client, indicating its readiness to accept a
request. Such a server generalizes the \lstinline+ones+ server.
All we need to do is flip two bits in the type:
\begin{lstlisting}
streamAct : forall(a:P). Service a -> !Stream -a.EndT -> Unit
streamAct [a] svc c =
  select Next [-a, EndT] c |> svc [!Stream -a.EndT] |> streamAct [a] svc
\end{lstlisting}
Why do we have to write  \lstinline+!Stream -a+ in the type?
One might expect the active server to run on a channel
whose type is the dual of \lstinline+?Stream a.EndT+, which is
\lstinline+!Stream a.EndW+.
But this change of direction also affects the
subsidiary protocol in parameter \lstinline+a+ as illustrated by the
typing of \lstinline+select+:
\begin{lstlisting}
  c : !Stream Arith.EndT |-
  select Next [Arith, EndT] c : !Arith.!Stream a. EndT
\end{lstlisting}
But this type is not an instance of the type
\lstinline+Service Arith = forall(s:S). ?Arith.s+.
Thanks to polarities, we can adapt to \lstinline+c : !Stream -Arith+ which switches the
direction to an output:
\begin{lstlisting}
  select Next [-Arith, EndT] c : ?Arith.!Stream -Arith. EndT
\end{lstlisting}

This way, we can reuse \lstinline+serveArith+ as follows.
\begin{lstlisting}
streamActArith : !Stream -Arith.EndT -> Unit
streamActArith = streamAct [Arith] serveArith
\end{lstlisting}

The same active server can also serve the infinite stream of \lstinline+ones+:
\begin{lstlisting}
streamActOnes : !Stream Int.EndT -> Unit
streamActOnes = streamAct [-Int] (sendInt 1)
\end{lstlisting}
Interestingly, the typing of \lstinline+streamActOnes+ involves \emph{double negation of
polarities}. Observe that \lstinline+sendInt 1 : Service -Int+, so that the transmitted stream is
in fact \lstinline+Stream -(-Int)+ $\equiv$ \lstinline+Stream Int+.

\subsubsection{A toolbox for generic servers}
\label{sec:toolb-gener-serv}

Inspired by the protocols \lstinline+Stream a+ and \lstinline+ListP a+
(from \cref{sec:introduction}), we can build a library of parameterized 
protocols with generic servers for them. These protocols reify the
standard type constructors, reminiscent of regular language operators.
\begin{lstlisting}
protocol Seq a b    = Seq a b                  -- product
protocol Either a b = Left a | Right b         -- sum
protocol Repeat a   = More a (Repeat a) | Quit -- iteration
\end{lstlisting}
We leave the straightforward implementation of the corresponding passive
generic servers, \lstinline+seq+, \lstinline+either+, and
\lstinline+repeat+, as well as their active variants, to the reader.

These protocol types and their generic servers are sufficient to
construct skeleton servers for all protocols whose flow can be described by regular languages. For example, we
can build the server for \lstinline+Repeat Arith+ by combination of
generic servers with the servers for negation and addition.
\begin{lstlisting}
type Neg    = Seq Int -Int
type Add    = Seq Int (Seq Int -Int)
type Arith  = Either Neg Add

serveNeg : Service Neg
serveNeg [s] c = match c with {
  Seq c -> let (x, c) = receiveInt [!Int.s] c in
           sendInt [s] (0-x) c }

serveAdd : Service Add
serveAdd [s] c = match c with {
  Seq c -> let (x, c) = receiveInt [?Seq Int -Int.s] c in
           match c with {
    Seq c -> let (y, c) = receiveInt [!Int.s] c in
	     sendInt [s] (x+y) c }}

serveArith : Service Arith
serveArith = either [Neg] serveNeg [Add] serveAdd

serveAriths : Service (Repeat Arith)
serveAriths = repeat [Arith] serveArith
\end{lstlisting}

There is some overhead compared to the arithmetic server shown in
\cref{sec:simple-protocols-add}. Previously, a client only had to handle the tags
\lstinline+Neg+ and \lstinline+Add+ besides the \lstinline+Repeat+
protocol. Here, \lstinline+Neg+ and \lstinline+Add+ are replaced by
\lstinline+Left+ and \lstinline+Right+ and the client has to handle
the \lstinline+Seq+ tags on top. Thus the protocol constructed from
the generic parts has some overheads, as we discuss in
\cref{sec:are-generic-servers}. We provide further examples of 
generic servers in \cref{sec:toolb-gener-serv-1}.






\section{Types}
\label{sec:types}

The type structure underlying \algst comprises three parts: the kind
structure, the type language and protocol definitions. The type
language and protocol definitions are mutually recursive.  To
simplify the exposition, the formalized language is based on linear
System~F. While we formalize protocol
definitions, we do not formalize the standard treatment of (linear)
algebraic datatypes and pattern matching, although we use it in examples.

The type language encompasses types with different features
(as illustrated in \cref{sec:algebr-sess-types}). 
For a correct type abstraction, we need to endow the system with a classification 
of types through kinds.
%
%
\algst distinguishes three kinds.
\begin{description}
\item[$\kinds$] classifies session types, which in turn classify
  endpoints of communication channels.
\item[$\kindt$] classifies all types that classify run-time
  values.
\item[$\kindp$] classifies protocol types. The kind $\kindp$ subsumes any other
  kind by an implicit lifting that associates a type with a protocol
  suitable to transmit it. A protocol type per se classifies pure
  behavior, but no run-time values.
\end{description}

Kinds are linearly ordered in a subkinding relation, which is the
reflexive transitive closure of the strict ordering $\kinds
< \kindt < \kindp$. Type formation includes a standard kind
subsumption rule.





Here is the grammar of types:
%
\begin{align*}
  \typec S, \typec T, \typec U \grmeq&
  \TUnit \grmor \TFun[] T U \grmor \TPair{T}{U} \grmor
  \TForall \alpha \kind T \grmor \typec{\alpha} \grmor
  && \text{functional types}
  \\
  & \TIn TS \grmor \TOut TS \grmor \TEndW \grmor \TEndT \grmor \TDual S \grmor
  && \text{session types}
  \\
  & \TPApp \proto {\overline{T}} \grmor \pneg T 
  && \text{protocol types}
\end{align*}
%
We mostly use the metavariable $\typec S$ to range over session
types (kind $\kinds$). Metavariables $\typec T, \typec U$ range over functional types
(kind $\kindt$) and protocol types (kind $\kindp$).
Functional types comprise the usual types of linear System F: unit,
function, product, universal quantification, and type variables.
The session type operators are receiving, sending,
passive and active termination, and the dual operator. The dual
operator swaps the direction of all communications in the spine of a
session type. Thus, a session type is a
sequence of communications either terminated by $\TEndW$ or by
$\TEndT$, or extensible with a type variable $\typec\alpha$ of kind $\kinds$.

The protocol type
$\TPApp \proto {\overline{T}}$ refers to the protocol declared as
$\proto$ with parameters $\typec{\overline{T}}$. The
operator ``\typec{$-$}'' reverses the direction of communication of its
protocol parameter. While duality transforms session types
outside-in, the reverse operator transforms protocols inside-out.

%

A kind context $\Delta$ maps type variables to kinds and protocol
names to first-order arrow kinds that determine the
number of parameters of the  protocol.
Formally, we should write $\kindas\varepsilon\kindp$ for the kind of a
protocol constructor without parameters, but we often abbreviate it to $\kindp$.
\begin{align*}
  \Delta & \grmeq \Empty
           \grmor \Delta, \tbind\alpha\kind
           \grmor \Delta, \tbind\proto{\kinda\kindp\kindp}
\end{align*}


\begin{figure}[t!]
  \declrel{Type formation (synthesis, check-against)}{$\typeSynth T
    \kind$,\quad $\typeAgainst T \kind$}
  \begin{mathpar}
    \infrule{\rulenametypeUnit}{}{\typeSynth\TUnit\kindt}

    \infrule{\rulenametypeArrow}
    {\typeAgainst T\kindt \\ \typeAgainst U\kindt}
    {\typeSynth{\TFun[] TU}\kindt}
    
    \infrule{\rulenametypePair}
    {\typeAgainst T\kindt \\ \typeAgainst U\kindt}
    {\typeSynth{\TPair TU}\kindt}

    \infrule{\rulenametypeVar}{\tbind\alpha\kind \in \Delta}{\typeSynth\alpha\kind}
    
    \infrule{{\rulenametypePoly}}
    {\typeAgainst[\Delta,\tbind\alpha\kind] T {\kindc{\kindt}}}
    {\typeSynth {\TForall \alpha\kind T}{\kindt}}

%
%
    \infrule{\rulenametypeDotIn}
    {\typeAgainst T \kindp \\ \typeAgainst S \kinds}
    {\typeSynth{?T.S} \kinds}

    \infrule{\rulenametypeDotOut}
    {\typeAgainst T \kindp \\ \typeAgainst S \kinds}
    {\typeSynth{!T.S} \kinds}

    \begin{minipage}{2cm} 
      \begin{mathpar}  
        \infrule{\rulenametypeEndW}{}{\typeSynth\TEndW\kinds}

        \infrule{\rulenametypeEndT}{}{\typeSynth\TEndT\kinds}
      \end{mathpar} 
    \end{minipage}

    \infrule{\rulenametypeDualS}
    {\typeAgainst S \kinds}
    {\typeSynth{\TDual S} \kinds}
    
    \infrule{\rulenametypeProtocol}
    { \tbind \proto {\kinda\kindp\kindp} \in \Delta
      \\ \typeAgainst{\overline T}{\overline{\kindp}}}
    {\typeSynth{\TPApp \proto {\overline{T}}}\kindp}

    \infrule{{\rulenametypeMsgNeg}}
    {\typeAgainst{T}{\kindp}}
    {\typeSynth{{-T}}\kindp}

    \infrule{\rulenametypeSub}
    {\typeSynth T\kind
      \\ \isSubKind \kind \kind'}
    {\typeAgainst T\kind'}
  \end{mathpar}
  \caption{Algorithmic type formation rules}
  \label{fig:type-formation}
\end{figure}

%
%
%
%
%



\Cref{fig:type-formation} contains the type formation rules. We
present them in algorithmic form from the beginning. The judgment
$\typeSynth T \kind$ assigns  to type $\typec T$ its minimal kind $\kind$ under kind
context $\Delta$.  The corresponding checking judgment $\typeAgainst T \kind$ takes
inputs $\typec T$ and $\kind$ and checks if the synthesized kind is a
subkind of the expected one.

The types for unit (\rulenametypeUnit),  functions
(\rulenametypeArrow), pairs (\rulenametypePair), and universal
polymorphism (\rulenametypePoly)
have kind $\kindt$ like their constituents. Variables have any kind
that has been assigned to them (\rulenametypeVar).
\algst's notion of session type distinguishes passive $\TEndW$ and active
$\TEndT$ termination of a session (\rulenametypeEndW, \rulenametypeEndT), and the
dual operator (\rulenametypeDual). Conventional session types require types for
the values exchanged in messages; here we ask for protocols instead
(\rulenametypeDotIn and \rulenametypeDotOut).
Rule {\rulenametypeProtocol} builds a protocol type from a protocol name
$\proto$. We assume that the kind context contains a binding for $\proto$ that
determines its arity. All parameters to a protocol must be protocols
themselves. The latter constitutes no restriction because any type can be
lifted to a protocol. 
A protocol type can change direction using the reverse operator (rule {\rulenametypeMsgNeg}).



Algebraic protocol types $\TPApp \proto {\overline{\alpha}}$ are defined analogously to algebraic
datatypes.
\begin{gather*}
  \protocolsfull
\end{gather*}
This declaration defines the protocol type constructor $\proto$ that takes as
many parameters as indicated by $\typec{\overline\alpha}$. The protocol has choices
indexed by $i\in I$. Each choice is tagged by a \emph{selector tag} $\termc{\constr_i}$
that guards a sequence of subprotocols $\typec{\overline{T_i}}$, to be processed in that
order. We assume that selector tags are globally unique. Program may use any
full application $\TPApp \proto {\overline U}$ as a protocol type.
%
%
%
%
We check a single protocol declaration using the formation rules where $\Delta$ may
refer to previously defined protocols:\footnote{For mutually
  recursive protocols $\typec{\overline\proto}$, we add the
  assumptions for all $\typec{\overline\proto}$ when checking each individual $\typec{\proto_i}$.}
\begin{mathpar}
  \inferrule{
    \protocolsfull \\
    \typeAgainst[\Delta, \tbind\proto{\kinda\kindp\kindp},
    \tbind{\overline{\alpha}}{\overline{\kindp}}]
    {\overline{T_i}}{\kindp}\\
    (\forall i\in I)
  }{
    \typeSynth{\proto}{\kinda\kindp\kindp}
  }
\end{mathpar}



\begin{figure}[t!]
  \declrel{Type conversion}{$\isConv T T \kind$}
  \begin{mathpar}
    \infrule{\rulenametconvReflex}{\typeAgainst T \kind}{\isConv T T \kind}

    \infrule{\rulenametconvSymm}{\isConv {T_1} {T_2} \kind}{\isConv {T_2} {T_1} \kind}

    \infrule{\rulenametconvTrans}{\isConv {T_1} {T_2} \kind \\ \isConv {T_2} {T_3} \kind}{\isConv {T_1} {T_3} \kind}

    \infrule{\rulenametconvDualEndW}{}{\isConv{\TDual\TEndW}\TEndT\kinds}
    
    \infrule{\rulenametconvDualEndT}{}{\isConv{\TDual\TEndT}\TEndW\kinds}
    
    \infrule{\rulenametconvDualInM}{\typeAgainst T \kindp \\ \typeAgainst S {\kinds}}{
      \isConv{\TDual{(\TIn TS)}} {\TOut T{(\TDual S)}} \kinds}

    \infrule{\rulenametconvDualOutM}{\typeAgainst T \kindp \\ \typeAgainst S {\kinds}}{
      \isConv{\TDual{(\TOut TS)}} {\TIn T{(\TDual S)}} \kinds}

    \infrule{\rulenametconvDoubleS}{\typeAgainst S {\kinds}}{\isConv{\TDual{(\TDual S)}} S {\kinds}}

    \infrule{\rulenametconvSub}{
      \isConv {T_1}{T_2} {\kind} \\ \isSubKind\kind{\kind'}
    }{
      \isConv {T_1}{T_2} {\kind'}
    }
    
    \infrule{\rulenametconvNegIn}{\typeAgainst T \kindp \\ \typeAgainst S \kinds}{\isConv
      {\TIn{(-T)}S}{\TOut TS} \kinds}

    \infrule{\rulenametconvNegOut}{\typeAgainst T \kindp \\ \typeAgainst S \kinds}{\isConv
      {\TOut{(-T)}S}{\TIn TS} \kinds}
    
    \infrule{\rulenametconvNegInv}{\typeAgainst T \kindp}{\isConv {-(-T)}T \kindp}
  \end{mathpar}
  \caption{Type conversion (congruence rules omitted)}
  \label{fig:type-conversion}
\end{figure}

Types obey a kind-indexed conversion
relation. \Cref{fig:type-conversion} presents the declarative
rules. Conversion is reflexive, symmetric, and transitive
(\rulenametconvReflex, \rulenametconvSymm, \rulenametconvTrans).
We define duality as a conversion on session types. The rules entail
that duality changes the direction of transmissions from the outside
(\rulenametconvDualEndW, \rulenametconvDualEndT,
\rulenametconvDualInM, \rulenametconvDualOutM)
and that duality is involutory (\rulenametconvDoubleS). Kind subsumption is lifted to conversion (\rulenametconvSub). The last line
governs the interaction between the reverse operator and the session
type constructors: reverse flips the direction from the inside
(\rulenametconvNegIn, \rulenametconvNegOut). Reverse is
also involutory (\rulenametconvNegInv). We omit the standard congruence rules.





While type formation is algorithmic, the declarative definition of
type conversion is more perspicuous. For the
upcoming algorithmic expression typing relation we define a normalization
algorithm for types, which is sound and complete with respect to the
declarative rules.

In a type in normal form,  the reverse operator can only occur at most once at the top level (of
a type of kind $\kindp$) and at the top level of the parameters of a protocol type. The
$\dualk$ operator only appears on type variables, at the end of
the spine of a session type. Thus an algorithm to compute normal forms
must push all $\dualk$ operators down the spine of a session type
by applying the conversion rules \rulenametconvDualInM, \rulenametconvDualOutM,
\rulenametconvDualEndW, and \rulenametconvDualEndT from left to
right. If it encounters another $\dualk$ operator along the
spine, it applies rule \rulenametconvDoubleS. We cannot further resolve
$\TDual\alpha$, hence it is a normal form.

The reverse operator only applies to protocol types. So it can only
occur in a message type like $\TIn{(-T)}S$ and in the parameter of a
protocol type. In a message type, normalization must apply rules
\rulenametconvNegIn and \rulenametconvNegOut exhaustively from left to
right, which removes the reverse operator from message types. In the
parameter of a protocol type, normalization applies rule
\rulenametconvNegInv exhaustively. Starting with an odd number of
reverse operators, a single operator remains; starting with an even
number, all reverse operators are removed.

\begin{figure}[t!]
    \declrel{Type normalization}{$\Normal[+]T=\typec T \quad \Normal[-]T=\typec T$}

  \begin{minipage}{0.49\linewidth}
    \begin{align*}
    \Normal[+]\TUnit &= \TUnit \\
    \Normal[+]{\TFun[] TU} &= \TFun[]{\Normal[+] T}{\Normal[+] U} \\
    \Normal[+]{\TPair TU} &= \TPair{\Normal[+] T}{\Normal[+] U} \\
    \Normal[+]{\TForall \alpha \kind T} &= \TForall\alpha \kind {\Normal[+] T} \\
    \Normal[+]\alpha &= \typec\alpha \\
    \Normal[+]{\TIn TS} &= \TFIn{\Normal[+]T}{\Normal[+] S} \\
    \Normal[+]{\TOut TS} &= \TFOut{\Normal[+]T}{\Normal[+]S} \\
    \Normal[+]\TEndW &= \TEndW \\
    \Normal[+]\TEndT &= \TEndT\\
    \end{align*}
  \end{minipage}
  \begin{minipage}{0.49\linewidth}
    \begin{align*}
    \Normal[+]{\TDual T} &= \Normal[-]{T} \\
    \Normal[+]{\TPApp \proto{\overline T}} &= \typec{\TPApp \proto {\overline{\Normal[+] {T}}}} \\
    \Normal[+]{\pneg T} &= \polarizedparam - {\Normal[+] T}\\
    \Normal[-]{\TDual T} &= \Normal[+]{T} \\
    \Normal[-]\alpha &= \TDual\alpha \\
    \Normal[-]{\typec{\TIn TS}} &= \TFOut{\Normal[+]T}{\Normal[-]S} \\
    \Normal[-]{\typec{\TOut TS}} &= \TFIn{\Normal[+]T}{\Normal[-]S} \\
    \Normal[-]\TEndW &= \TEndT \\
    \Normal[-]\TEndT &= \TEndW \\
    \end{align*}
  \end{minipage}

  \declrel{Materialization and the directional operators}
  {$\tosession TT = \typec{T} \quad 
    \polarizedparam - T = \typec{T} \quad
    \polarizedparam + T = \typec{T}$}
  \begin{align*}
    \tosession {\pneg T} U &= \TIn TU & 
    \polarizedparam - {\pneg T} &= \polarizedparam + T &
    \polarizedparam + {\pneg T} &= \polarizedparam - T
    \\
    \tosession TU &= \typec{\TOut TU} & 
    \polarizedparam - T &= \pneg T &
    \polarizedparam + T &= \typec{T}
  \end{align*}
  In the second line $\typec T \neq \pneg U$ in all cases.
  \caption{Type normalization and auxiliary metafunctions on session types and protocol types}
  \label{fig:normalisation}
  \label{fig:type-polarized-operator}
\end{figure}


Following this intuition normalization is defined in \cref{fig:normalisation} by two mutually
recursive functions $\Normal[+]\_$ and $\Normal[-]\_$ along with three
auxiliary functions. 
Positive normalization $\Normal[+]\_$ traverses and reconstructs all
non-session type constructs. It also gets invoked on the type of the
transmitted value for session types, i.e., on $\typec T$ in type
$\TOut TS$.
Normalizing the $\dualk$ operator switches forth and back  between
positive and negative normalization, $\Normal[+]\_$ and $\Normal[-]\_$.

Negative normalization with superscript
$^-$ indicates a pending $\dualk$ type constructor, so that
the function $\Normal[-]\_$ is only used for session
types. Normalization pushes the pending dual constructor along the spine of a session type to
implement rules \rulenametconvDualInM, \rulenametconvDualOutM,
\rulenametconvDualEndW, and \rulenametconvDualEndT. The pending dualization gets reified on
a type variable.

Normalizing a reversed protocol $\TMinus T$ invokes the negative directional
function $\polarizedparam - {T'}$ on $\typec
T$'s normal form to enforce the
conversion rule \rulenametconvDoubleS.

%
Here is an example of a step-by-step computation of the positive normal form of
 $\TDual{(\TIn{(-\TInt)}{\alpha})}$.
%
  \begin{align*}
    & \Normal[+]{\TDual{(\TIn{(-\TInt)}{\alpha})}} \\
    ={}& \Normal[-]{{\TIn{(-\TInt)}{\alpha}}}  && \text{negative
                                                  normalization
                                                  remembers pending $\dualk$}\\
    ={}& \TFOut{\Normal[+]{-\TInt}}{\Normal[-]\alpha}  &&
                                                          \text{$\Normal[+]\_$
                                                          on the
                                                          payload,
                                                          $\Normal[-]\_$
                                                          on the spine}\\
    ={}& \TFOut{\TFMinus{\Normal[+]{\TInt}}}{\TDual\alpha} &&
                                                              \text{negative
                                                              normalization
                                                              reifies
                                                              $\dualk$
                                                              on type variable}\\
    ={}& \tosession{\TMinus\TInt}{\TDual\alpha} &&
                                                   \text{materialization
                                                   applied to normal
                                                   form \dots}\\
    ={}& \TIn{{\TInt}}{\TDual\alpha} && \text{\dots{} fixes the direction}
  \end{align*}

To test whether a type equivalence goal $\isConv TU\kind$ holds, we
compute
\emph{algorithmic type equivalence}, notation $\isEquiv TU$, which is
defined by comparision of normal forms: $\Normal[+]T = \Normal[+]U$  up to  $\alpha$-equivalence.
%
The following propositions establish soundness and
completeness of the normalization algorithm, as well as its worst case running
time.

%
\begin{theorem}[Soundness]
  \label{thm:soundness-type-equiv}
  Let $\typeSynth T\kind$ and $\typeSynth U\kind$.
  \begin{enumerate}
  \item\label{it:soundness-type-equiv-t} If $\Normal[+] T =_\alpha \Normal[+] U$,
    then $\isConv TU\kind$.
  \item\label{it:soundness-type-equiv-s} If $\kind = \kinds$ and  $\Normal[-] T =_\alpha \Normal[-]
    U$, then $\isConv TU\kinds$.
  \end{enumerate}
\end{theorem}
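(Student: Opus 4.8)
The plan is to reduce \cref{thm:soundness-type-equiv} to a single lemma --- that every well-kinded type is convertible to its normal form --- and then chain conversions through the normal forms. Concretely, I would first prove the \emph{normalization-soundness lemma}: whenever $\typeSynth T\kind$, (a)~$\isConv T{\Normal[+]T}\kind$, and (b)~if moreover $\kind=\kinds$, then $\isConv{\TDual T}{\Normal[-]T}\kinds$. Part~(b) is stated with a leading $\dualk$ because $\Normal[-]\_$ computes the normal form of $\TDual T$ rather than of $\typec T$ itself.

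I would prove (a) and (b) simultaneously by induction on the structure of $\typec T$ (equivalently, on the kinding derivation). Both $\Normal[+]\_$ and $\Normal[-]\_$ recurse only on structurally smaller subterms, and $\Normal[-]\_$ is invoked only on session types; the kinding hypothesis guarantees the latter by inversion, e.g.\ in the case $\typec T=\TDual{T_1}$ the subterm $\typec{T_1}$ must have kind $\kinds$, so $\Normal[-]{T_1}$ is defined, the inductive hypothesis~(a) on $\typec{T_1}$ together with \rulenametconvDoubleS discharges goal~(b) for $\typec T$, and symmetrically (b) on $\typec{T_1}$ is exactly goal~(a) for $\typec T$. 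For the non-session constructors (unit, function, pair, $\forall$, type variable, $\TPApp\proto{\overline T}$, and a leading reverse) I push the inductive hypotheses through the (omitted) congruence rules, invoking \rulenametconvSub to lift to the expected kind; the remaining $\dualk$ subcases are discharged by \rulenametconvDualEndW, \rulenametconvDualEndT, \rulenametconvDualInM, \rulenametconvDualOutM, \rulenametconvDoubleS, and \rulenametconvReflex. What is left, in the message cases and the leading-reverse case, are three facts about the metafunctions of \cref{fig:normalisation}: $\isConv{\TIn TS}{\TFIn TS}\kinds$, $\isConv{\TOut TS}{\TFOut TS}\kinds$, and $\isConv{\pneg T}{\polarizedparam - T}\kindp$, each on already-normalized arguments. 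Each follows by a binary case split on whether $\typec T$ is a reverse $\pneg{T'}$: the affirmative branch is a single application of \rulenametconvNegIn, \rulenametconvNegOut, or \rulenametconvNegInv (possibly under congruence), and the other branch is \rulenametconvReflex. The split is genuinely binary because in a normalized argument a leading reverse occurs at most once, so the directional simplification never cascades.

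Given the lemma, the theorem is immediate. For part~(\ref{it:soundness-type-equiv-t}): by~(a), $\isConv T{\Normal[+]T}\kind$, and by~(a) with \rulenametconvSymm, $\isConv{\Normal[+]U}U\kind$; the hypothesis $\Normal[+]T=_\alpha\Normal[+]U$ gives $\isConv{\Normal[+]T}{\Normal[+]U}\kind$ by \rulenametconvReflex (types being identified up to renaming of bound variables), and two uses of \rulenametconvTrans conclude $\isConv TU\kind$. For part~(\ref{it:soundness-type-equiv-s}) with $\kind=\kinds$: \rulenametypeDualS yields $\typeSynth{\TDual T}\kinds$ and $\typeSynth{\TDual U}\kinds$; from~(b), \rulenametconvSymm, \rulenametconvReflex on $\Normal[-]T=_\alpha\Normal[-]U$, and \rulenametconvTrans we obtain $\isConv{\TDual T}{\TDual U}\kinds$; the congruence rule for $\dualk$ lifts this to $\isConv{\TDual{(\TDual T)}}{\TDual{(\TDual U)}}\kinds$, and \rulenametconvDoubleS on both ends (with \rulenametconvSymm and \rulenametconvTrans) collapses the double duals to $\isConv TU\kinds$.

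The step requiring the most care is the lemma, and within it the bookkeeping of kinds: \rulenametconvReflex, \rulenametconvNegIn, and their relatives carry well-formedness premises, so I would also need the routine companion lemma that normalization preserves kinding --- $\typeSynth T\kind$ implies $\typeAgainst{\Normal[+]T}\kind$, and $\typeAgainst{\Normal[-]T}\kinds$ when $\kind=\kinds$ --- proved by the same structural induction, so that each rule application on a normalized subterm is legitimate. The only genuinely delicate point beyond that is checking that the direction (input versus output) that $\tosession{\cdot}{\cdot}$ assigns after simplifying the polarity of its first argument matches exactly what \rulenametconvNegIn and \rulenametconvNegOut predict; everything else is routine congruence plumbing.
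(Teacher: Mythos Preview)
Your proposal is correct and follows essentially the same route as the paper: the paper isolates the same key lemma (there called \emph{Kind preservation}, \cref{lemma:kind-preservation}: $\isConv{\Normal[+]T}T\kind$ and $\isConv{\Normal[-]T}{\TDual T}\kinds$), defers its proof to the accompanying Agda development, and then derives both items of \cref{thm:soundness-type-equiv} by symmetry and transitivity through the normal forms, exactly as you outline. Your sketch of the structural induction for the lemma, including the handling of $\tosessionop$ and the directional operators via \rulenametconvNegIn/\rulenametconvNegOut/\rulenametconvNegInv, is the content the paper delegates to Agda.
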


\begin{theorem}[Completeness]
  \label{thm:completeness-type-equiv}
  Let $\isConv TU\kind$.
  \begin{enumerate}
  \item  $\Normal[+]T =_\alpha \Normal[+]U$.
  \item    If $\kind=\kinds$, then $\Normal[-]T =_\alpha \Normal[-]U$.
  \end{enumerate}
\end{theorem}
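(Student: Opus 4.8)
The plan is to argue by induction on the derivation of $\isConv TU\kind$, proving both clauses \emph{simultaneously}. The clauses are genuinely intertwined: since $\Normal[+]{\TDual S} = \Normal[-]S$ by definition, the congruence case for $\dualk$ in clause~1 must appeal to clause~2 on the (structurally smaller) premise; conversely, since $\Normal[-]{\TIn TS}$ and $\Normal[-]{\TOut TS}$ normalize their payload \emph{positively}, clause~2 must appeal to clause~1 on a premise. No circularity arises since each appeal is to a sub-derivation. Before starting I would record three routine preliminaries used throughout. (i)~$\Normal[+]{\cdot}$ and $\Normal[-]{\cdot}$ are total and structurally recursive (a protocol application $\TPApp\proto{\overline T}$ is never unfolded, only its arguments normalized) and commute with $\alpha$-renaming, hence send $=_\alpha$-equal types to $=_\alpha$-equal types; likewise $\polarizedparam{+}{\cdot}$, $\polarizedparam{-}{\cdot}$, and $\tosession{\cdot}{\cdot}$ only inspect whether their (first) argument has the shape $\pneg{\cdot}$, so they respect $=_\alpha$ and commute with the type constructors. (ii)~$\isConv TU\kind$ entails $\typeAgainst T\kind$ and $\typeAgainst U\kind$ (regularity, by a quick induction on the conversion derivation). (iii)~---~the one non-trivial ingredient~---~for every well-kinded $\typec T$ the results $\Normal[+]{\typec T}$ and $\Normal[-]{\typec T}$ are \emph{in normal form} in the sense described just before \cref{fig:normalisation}; in particular they contain no nested reverse operator $\pneg{(\pneg T)}$, and $\polarizedparam{\pm}{\cdot}$ maps normal forms to normal forms. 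I would prove fact~(iii) first, by a direct structural induction on $\typec T$ that makes no reference to conversion.

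The structural cases are immediate. For \rulenametconvReflex the two sides are syntactically identical; \rulenametconvSymm and \rulenametconvTrans follow from $=_\alpha$ being symmetric and transitive; in \rulenametconvSub the normal forms do not mention the kind index, and for clause~2 one observes that $\kind' = \kinds$ forces $\kind = \kinds$ (nothing sits strictly below $\kinds$), so the induction hypothesis still applies. For the omitted congruence rules I would note, once, that for each type constructor $c$ the value $\Normal[+]{c(\overline T)}$ (and $\Normal[-]{c(\overline T)}$ when $c$ is a session-type former) is obtained by applying $c$, possibly wrapped in $\polarizedparam{\pm}{\cdot}$ and $\tosession{\cdot}{\cdot}$, to the normal forms of the components; the case then closes by the induction hypotheses together with the compatibility properties of preliminary~(i). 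The cross-clause appeals surface here: the $\dualk$-congruence case of clause~1 invokes clause~2 of the IH, and the $\TIn$/$\TOut$-congruence cases of clause~2 invoke clause~1 of the IH on the payload. If the calculus already identifies $\alpha$-equivalent types there is nothing further; otherwise an explicit $\alpha$-congruence rule is discharged by preliminary~(i).

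The heart of the argument is the small set of non-structural axioms, each handled by evaluating $\Normal[+]{\cdot}$ ---~and, when $\kind = \kinds$, $\Normal[-]{\cdot}$~--- on both sides and comparing. The $\dualk$-laws \rulenametconvDualEndW, \rulenametconvDualEndT, \rulenametconvDualInM, \rulenametconvDualOutM, and \rulenametconvDoubleS collapse to one-line computations that agree \emph{literally}, because the defining clauses of $\Normal[-]{\cdot}$ on $\TIn TS$, $\TOut TS$, $\TEndW$, $\TEndT$ and of $\Normal[+]{\cdot}$ on $\TDual S$ are precisely the rewrites licensed by these rules; e.g.\ $\Normal[+]{\TDual{(\TIn TS)}} = \Normal[-]{\TIn TS} = \TFOut{\Normal[+]T}{\Normal[-]S}$ while $\Normal[+]{\TOut T{(\TDual S)}} = \TFOut{\Normal[+]T}{\Normal[+]{\TDual S}} = \TFOut{\Normal[+]T}{\Normal[-]S}$. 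The reverse-laws \rulenametconvNegIn, \rulenametconvNegOut, and \rulenametconvNegInv are where preliminary~(iii) is essential: for \rulenametconvNegInv one computes $\Normal[+]{\pneg{(\pneg T)}} = \polarizedparam{-}{(\polarizedparam{-}{\Normal[+]T})}$ and must see this equals $\Normal[+]T$, but the identity $\polarizedparam{-}{(\polarizedparam{-}{X})} = X$ \emph{fails} for a generic $X$ (take $X = \pneg{(\pneg U)}$) and holds exactly when $X$ carries no top-level nested reverse, which is guaranteed since $X = \Normal[+]T$ is in normal form; similarly \rulenametconvNegIn and \rulenametconvNegOut reduce to the identities $\polarizedparam{-}{(\polarizedparam{-}{X})} = X$ and $\polarizedparam{+}{X} = X$ on normalized $X$, together with compatibility of $\tosession{\cdot}{\cdot}$. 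With preliminary~(iii) in hand, all of these become routine case analyses on whether the relevant output begins with $\pneg{\cdot}$.

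The main obstacle I anticipate is not any single case but pinning down the normal-form invariant~(iii) and the companion compatibility/idempotence lemmas for $\polarizedparam{\pm}{\cdot}$ and $\tosession{\cdot}{\cdot}$ with enough precision to cover the reverse-operator cases: these are the only places where a careless computation would be outright \emph{wrong} rather than merely lengthy. Everything that remains ---~the bulk of the rules~--- is mechanical bookkeeping that mirrors the structure of \cref{fig:normalisation}, and clause~2 comes along for free from the same induction.
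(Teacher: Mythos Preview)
Your proposal is correct and follows essentially the same approach that the paper's Agda development must formalize: the paper defers the proof to \texttt{nf-complete} and \texttt{nf-complete-}, but separately states exactly the auxiliary facts you single out, namely the idempotence/interaction laws for $\polarizedparam{\pm}{\cdot}$ (\cref{lem:ops-dual}) and the syntactic characterization of normal forms (\cref{lemma:syntax-of-type-normal-forms}), which is your preliminary~(iii). Your identification of the mutual dependence between the two clauses and of the reverse-operator axioms as the only non-mechanical cases is accurate.
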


\begin{theorem}[Time complexity of type equivalence]
  \label{thm:complexity}
  The test for\/ $\isEquiv TU$ runs in $O(|\typec T|+|\typec U|)$.
\end{theorem}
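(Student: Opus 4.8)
Computing $\isEquiv TU$ consists of two phases: first normalize both sides, obtaining $\Normal[+]{T}$ and $\Normal[+]{U}$ via the equations of \cref{fig:normalisation}; then test $\Normal[+]{T} =_\alpha \Normal[+]{U}$. The plan is to bound each phase linearly and add. For the first phase I would prove, by simultaneous structural induction on the input type $\typec V$, the two claims: (i) computing $\Normal[+]{V}$ and $\Normal[-]{V}$ takes $O(|\typec V|)$ elementary steps, and (ii) the outputs satisfy $|\Normal[+]{V}| \le c\,|\typec V|$ and $|\Normal[-]{V}| \le c\,|\typec V|$ for a fixed constant $c$. The shape of \cref{fig:normalisation} makes this go through: in each equation the head constructor of the left-hand side is consumed, the right-hand side adds only $O(1)$ constructors, every recursive call is on a proper subterm, and no subterm is ever duplicated---in particular the payload $\typec T$ of $\TIn TS$ or $\TOut TS$ is traversed by the single call $\Normal[+]{T}$. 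Decisively, normalization is \emph{nominal}: on $\TPApp\proto{\overline T}$ it recurses only into the parameters $\overline T$, never into the declared body of $\proto$; there is no unfolding, which is exactly what rules out the superlinear blow-up that afflicts equirecursive equivalence. The auxiliary functions of \cref{fig:normalisation}---materialization and the two directional operators---are invoked only on already-normalized arguments, at whose head the reverse operator occurs at most once, so each invocation is $O(1)$; and the alternation between $\Normal[+]{\_}$ and $\Normal[-]{\_}$ triggered by a stack of $k$ nested $\TDual{\_}$ operators (likewise the peeling of a stack of $\pneg{\_}$ operators) costs $O(k)$, which is charged against the $k$ such operators present in the input. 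Hence $\Normal[+]{T}$ is produced in $O(|\typec T|)$ time and has size $O(|\typec T|)$, and likewise for $\typec U$.

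For the second phase, the only binder in the type grammar is the universal quantifier $\TForall\alpha\kind T$, so $=_\alpha$ on the normal forms is decidable by a single simultaneous traversal that maintains a partial bijection between the bound names encountered so far; this is the standard linear-time $\alpha$-equivalence check (assuming unit-cost comparison of atomic names, or, equivalently, emitting de Bruijn indices during normalization at $O(1)$ extra cost per binder and per occurrence, after which $=_\alpha$ becomes plain structural equality). Thus the second phase runs in $O(|\Normal[+]{T}| + |\Normal[+]{U}|)$. Combining the two phases with the size bound from the first, the whole test runs in $O(|\Normal[+]{T}| + |\Normal[+]{U}|) + O(|\typec T| + |\typec U|) = O(|\typec T| + |\typec U|)$.

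The delicate point is the amortized accounting in the first phase: making rigorous that the repeated reduction of stacked $\TDual{\_}$/$\pneg{\_}$ operators, interleaved with the switches between $\Normal[+]{\_}$ and $\Normal[-]{\_}$, neither exceeds linear total work nor yields more than linearly many output constructors. The cleanest way to discharge this is a potential argument: give each $\TDual{\_}$ node and each $\pneg{\_}$ node of the input one unit of potential, so that every switch or peel step is paid for by released potential, leaving an amortized $O(1)$ cost per input node. Everything else is a routine induction over the type syntax of \cref{fig:type-formation} following the case analysis of \cref{fig:normalisation}.
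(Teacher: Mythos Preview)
Your proposal is correct and follows essentially the same approach as the paper: bound normalization by $O(|\typec T|)$ using the fact that the directional/materialization auxiliaries are $O(1)$ on already-normalized arguments (because a normal form carries at most one leading ``$-$''), and then compare normal forms. The paper derives the theorem as a corollary of the syntactic characterization of normal forms (Lemma~\ref{lemma:syntax-of-type-normal-forms}) together with the preceding proposition that normalization visits each node once; your write-up is simply more explicit, in particular about the $\alpha$-equivalence phase and the amortized accounting for stacked $\dualk$/$-$ operators, which the paper leaves implicit.
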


\section{Expressions and Processes}
\label{sec:processes}

This section introduces the notions of expressions and processes, algorithmic
typing for expressions, and operational semantics for expressions and processes
based on labelled transition relations.
%
%
The syntax of constants $\termc c$, values $\termc v$, expressions
$\termc e$ and processes $\termc p$ is defined by the
grammar below.
\begin{align*}
  \termc c \grmeq & \unitk
       \grmor \forkk 
       \grmor \newk
       \grmor \receivek 
       \grmor \sendk 
       \grmor \selecte C
       \grmor \waitk
       \grmor \terminatek
  \\
  \termc v \grmeq & \termc{c}
      \grmor \termc{x}
      \grmor \abse{x}{T}{e}
      \grmor \rece xTv
      \grmor \tabse\alpha\kind v
      \grmor \paire{v}{v}
      \grmor \rece xTv
      \grmor \tappe{\receivek}{T}
  \\& \tappe{\tappe{\receivek}{T}}{T}
      \grmor \tappe{\sendk}{T}
      \grmor \tappe{\tappe{\sendk}{T}}{T}
      \grmor \appe{\tappe{\tappe{\sendk}{T}}{T}}{v}
      \grmor \tappe{\selecte C}{\overline T}
  \\
  \termc e \grmeq & \termc{v}
       \grmor \appe{e}{e}
       \grmor \tappe{e}{T}
       \grmor \letu ee
       \grmor \paire ee
       \grmor \lete xxee
      \grmor
  \\& \matchwithe eCxeiI
  \\
  \termc p \grmeq & \expp e 
  \grmor \parp{p}{p}
  \grmor \newp{x}{y}{p} 
\end{align*}

All constructors are standard either in linear functional languages or in
session type languages. From functional languages we find function introduction
and elimination ($\abse xTe$ and $\appe{e_1}{e_2}$), linear pair introduction
and elimination ($\paire{e_1}{e_2}$ and $\lete xy{e_1}{e_2}$), linear unit
introduction and elimination ($\unitk$ and $\letu{e_1}{e_2}$). From session
languages we find most of the constructors in the form of constants, including
$\forkk$ to spawn a new thread, $\newk$ to create a new channel, $\receivek$ to
receive a value on a given channel, $\sendk$ to send a value on a given channel,
$\waitk$ to wait for a channel to be closed, $\terminatek$ to force channel
closing, and $\appe \selectk {C}$ to exercise a choice on a given
channel indicated by selector name $\constr$. The
operator to offer a choice on some channel cannot be captured by a type in our
type language, hence we have made it an expression ($\matchwithe eCxeiI$).
At the level of processes, we have expressions as threads $\expp e$,
parallel composition $\termc{p_1 \PAR p_2}$, and 
introduction of a channel by declaring its
two ends $\termc x, \termc y$: $\newp xyp$.


\begin{figure}[t!]
  \declrel{Types for constants}{$\typeof c = \typec T$}
  %
  \begin{align*}
    \typeof \unitk & = \TUnit
    &
    \typeof \forkk & = \TFun[] {(\TFun[] \TUnit \TUnit)} \TUnit
    \\
    \typeof \newk & = \TForall\alpha\kinds{\TPair\alpha{\TDual \alpha}}
    &
    \typeof \terminatek &= \TFun[]{\TEndT}{\TUnit}
    \\
    \typeof \receivek &= \TForall\alpha\kindt{\TForall\beta\kinds
                        {?\alpha. \beta \to \TPair\alpha\beta}}
    & 
    \typeof \waitk &= \TFun[]{\TEndW}{\TUnit}
    \\
    \typeof \sendk &= \TForall\alpha\kindt{\TForall\beta\kinds { \alpha \to {! \alpha}. \beta \to \beta}}
  \end{align*}
  $
    \typeof {\appe \selectk {C_k}} = \TForall{\overline{\alpha}}{\overline\kindp}
                                     {
                                     {\TForall \beta \kinds {\TFun[]
                                     {\TOut{(\TPApp \proto {\overline\alpha})}\beta}
                                     {\TFOut{\overline{T_k}}\beta}
                                     }}}
                                    \text{ if }\protocolsfull \text{ and } k\in I
  $
  \caption{Types for constants}
  \label{fig:type-constants}
\end{figure}



\begin{figure}[t!]
\declrel{Type synthesis}{$\expSynth\Gamma eT\Gamma$}
\begin{mathpar}
  \axiom{\rulenameexpConst}{\expSynth{\Gamma}{c}{\typeof{\termc c}}{\Gamma}}
  
  \infrule{\rulenameexpVar}{\typeAgainst{T}{\kindt}}
  {\expSynth{\Gamma,\ebind xT}{x}{\typec T}{\Gamma}}

  \infrule{\rulenameexpUVar}{\typeAgainst{T}{\kindt}}
  {\expSynth{\Gamma,\eubind xT}{x}{\typec T}{\Gamma,\eubind xT}}

  \infrule{\rulenameexpLetUnit}
  {\expAgainst{\Gamma_1}{e_1}{\TUnit}{\Gamma_2} \\\\
    \expSynth{\Gamma_2}{e_2}{\typec T}{\Gamma_3}
  }
  {\expSynth{\Gamma_1}{\letu {e_1}{e_2}}{\typec T}{\Gamma_3}}
  
  \infrule{\rulenameexpAbs}
  {\typeAgainst{T}{\kindt} \\ \Normal[+]T = \typec V \\\\
    \expSynth{\Gamma_1, \ebind x{V}}{e}{U}{\Gamma_2} \\ \termc{x}\not\in\Gamma_2}
  {\expSynth{\Gamma_1}{\abse xTe}{\TFun[]{V}{U}}{\Gamma_2}}

  \infrule{\rulenameexpApp}
  {\expSynth{\Gamma_1}{e_1}{\TFun[]{T}{U}}{\Gamma_2}\\
    \expAgainst{\Gamma_2}{e_2}{T}{\Gamma_3}} 
  {\expSynth{\Gamma_1}{\appe {e_1}{e_2}}{\typec U}{\Gamma_3}}
  \quad
  \infrule{\rulenameexpPair}{\expSynth{\Gamma_1}{e_1}{T}{\Gamma_2} \\
  \expSynth{\Gamma_2}{e_2}{U}{\Gamma_3}} 
  {\expSynth{\Gamma_1}{\paire {e_1}{e_2}}{\TPair {\typec T} {U}}{\Gamma_3}}

  \infrule{\rulenameexpRec}
  {\typeAgainst{T}{\kindt} \\
    \Normal[+]{\TFun[]TU}=\typec V \\
    \expAgainst{\Gamma, \eubind x{V}}{v}{{V}}{\Gamma, \eubind x{V}}}
  {\expSynth{\Gamma}{\rece x{{\TFun[]TU}}v}{V}{\Gamma}}

  \infrule{\rulenameexpTAbs}
  {\expSynth[\Delta, \typec{\alpha\colon \kind}]{\Gamma_1}{v}{\typec T}{\Gamma_2}} 
  {\expSynth{\Gamma_1}{\tabse \alpha\kind v}{\TForall \alpha\kind T}{\Gamma_2}}

  \infrule{\rulenameexpTApp}
  {\expSynth{\Gamma_1}{e}{\TForall \alpha\kind U}{\Gamma_2} \\ \typeAgainst{T}{\kind}} 
  {\expSynth{\Gamma_1}{\tappe e T}{\Normal[+]{\typec U\tsubs{T}{\alpha}}}{\Gamma_2}}


  \infrule{\rulenameexpLet}
  {\expSynth{\Gamma_1}{e_1}{\TPair TU}{\Gamma_2}\\
    \expSynth{\Gamma_2, \ebind x{\Normal[+]T}, \ebind y{\Normal[+]U}}{e_2}{V}{\Gamma_3}\\
    \termc{x},\termc{y}\not\in\Gamma_3} 
  {\expSynth{\Gamma_1}{\lete xy{e_1}{e_2}}{\typec V}{\Gamma_3}}

  \infrule{\rulenameexpMatch}{
    \protocolsfull \\
    \expSynth {\Gamma_1} e {\TIn{(\TPApp \proto {\overline U})}{S}} {\Gamma_2} \\
    \forall i\in I: \\ \expSynth {\Gamma_2,\ebind{x_i}{\matchin}} {e_i} {V_i} {\Gamma_i} \\
    \termc{x_i}\not\in\Gamma_i \\
    \typec{V} =_\alpha \typec{V_i} \\
    {\Gamma_3}=_\alpha{\Gamma_i}
  }{
    \expSynth {\Gamma_1} {\matchwithe eCxeiI} {V} {\Gamma_3}
  }
\end{mathpar}
\declrel{Type against}{$\expAgainst\Gamma eT\Gamma$}
\begin{mathpar}
  \infrule{\rulenameexpCheck}
  {\expSynth{\Gamma_1} e U {\Gamma_2} \\ \typec T =_\alpha\typec U}
  {\expAgainst{\Gamma_1} e T {\Gamma_2}}  
\end{mathpar}
\caption{Algorithmic typing rules for expressions}
\label{fig:alg-typing}
\end{figure}

Expression typing is again mostly standard with respect to (linear) functional
or session type languages. The definitions for typing assume a set of
protocol declarations as an implicit argument. The typings of
$\appe\selectk{C}$ (\cref{fig:type-constants}) and $\matchk$
(\cref{fig:alg-typing}, \rulenameexpMatch) access these protocol declarations. 
The $\typeofop$ operator in~\cref{fig:type-constants} returns the
types for constants in normal form. The typing rules for expressions are defined
in~\cref{fig:alg-typing}. 
The type system is algorithmic, implemented with two mutually recursive
judgments: one to synthetize a type of an expression, the other to check an
expression against a type. The typing rules rely on type variable contexts
$\Delta$ introduced in \cref{sec:types}, on term variable contexts $\Gamma$
assigning types $\typec T$ to term variables $\termc x$. There are two sorts of
entries in term variable contexts: linear entries, noted $\ebind xT$, and
unrestricted entries, noted $\eubind xT$. The latter are used for recursive
functions only (\cf rules \rulenameexpUVar and \rulenameexpRec) and are inspired
by work on quantitative type theory
\cite{DBLP:conf/lics/Atkey18,DBLP:conf/birthday/McBride16,DBLP:journals/pacmpl/BernardyBNJS18}.

The type
system maintains the invariant that entries in term variable contexts are always normalized:
the rules that write into contexts, namely
\rulenameexpAbs, \rulenameexpRec and \rulenameexpLet, normalize the
new entry. 
Judgments are of the form $\expSynth{\Gamma_1} eT{\Gamma_2}$ or
$\expAgainst{\Gamma_1} eT{\Gamma_2}$, where $\Gamma_2$ represents the
part of context $\Gamma_1$ \cite{walker:substructural-type-systems}
that is not consumed by $\termc e$ and $\typec T$ is in normal form.
The axioms
(\rulenameexpConst and \rulenameexpVar) mostly copy the input context to the
output. Rule \rulenameexpApp is a good example of context threading: 
to type expression $\appe {e_1}{e_2}$, expression $\termc{e_1}$ is given the
input context $\Gamma_1$, produces $\Gamma_2$ which is given to $\termc{e_2}$,
which turn produces $\Gamma_3$ that constitutes the outgoing context of
application $\appe {e_1}{e_2}$.
The rules that add linear entries to the context (\rulenameexpAbs, \rulenameexpLet
and \rulenameexpMatch) all check that the new variables are not present
in the outgoing context, thus ensuring that the corresponding values are fully
consumed. 
Rule \rulenameexpTApp explicitly normalizes the output type for normalization is
not preserved by substitution; all rules that introduce new entries in the
context (\rulenameexpAbs, \rulenameexpRec, \rulenameexpLet, and
\rulenameexpMatch) make sure that types are normalized, thus ensuring that contexts contain only
normalized types.


Rules \rulenameexpRec and \rulenameexpUVar govern the typing of
recursive functions. Rule \rulenameexpRec binds the recursive function $x$
using an unrestricted binding $\eubind x\_$ so that any number of
recursive calls is possible in the body via \rulenameexpUVar. Unrestricted
bindings provide for termination of recursive functions. Such functions cannot use linear variables from the
environment, which we enforce by requiring the incoming context
$\Gamma$ to be equal to the outgoing context.

The truly new rule is \rulenameexpMatch, which indicates that all
rules are parametric with respect to an implicit set of protocol
declarations: there is a declaration for each protocol $\proto$ which
has a kinding in $\Delta$. The rule starts by synthesizing the 
input type of the expression to be matched;
the type is 
$\TIn{(\TPApp \proto {\overline U})}{S}$ for some
protocol constructor $\proto$. At this point, the declaration of protocol
$\proto$ is looked up in the implicit set of declarations. The
$\matchk$ branch for each choice of selector $\termc{C_i}$ is then typed in turn. The
parameters $\typec{\overline \alpha}$ of the protocol are instantiated with
the arguments $\typec{\overline U}$ to produce a sequence of types
$\typec{\overline{T_i}\tsubs{\overline U}{\overline \alpha}}$. This sequence of
types is converted to a normalized session type as described in \cref{sec:types},
which is assigned to variable $\termc{x_i}$. Now, each branch synthesizes a type
$\typec{V_i}$ and context $\Gamma_i$, for $i\in I$. In all branches,
the output types and the contexts must coincide (up to $\alpha$
equivalence) with the output type 
and context of the $\matchk$, i.e., $\typec V$ and $\Gamma_3$. In practice,
the implementation selects one branch, say $k\in I$, and compares all
other output types and contexts with $\typec{V_k}$ and $\Gamma_k$.
In this rule, and also in the type for $\selectk$ in \cref{fig:type-constants}, we extend the
directional operators and  materialization (from \cref{fig:type-polarized-operator}) 
to sequences
of parameters by mapping:
$\polarizedparam{\pm}{\overline{T}}{} =
\overline{\polarizedparam{\pm}{T}{}}$ as well as
$\tosession {\varepsilon}S = \typec S$ and $\tosession {T\overline T}S =
\tosession{T}{\tosession {\overline T}S}$.

Rule \rulenameexpCheck relies on the previous judgement to check expression $\termc e$ against 
type $\typec T$, up to $\alpha$-equivalence. This simple comparison is
sufficient because synthesis outputs $\typec U$ in normal form and
checking must be invoked with $\typec T$ in normal form.
%
%
%

The operational semantics for expressions and processes is defined via a 
transition relation labelled by the below actions.
\begin{align*}
  \sigma &\grmeq \receivel xv
	   \grmor \sendl xv
       \grmor \receivel xC
	   \grmor \sendl xC
       \grmor \closel x
	   \grmor \openl x
       && \text{labels for session operations}
  \\
  \lambda &\grmeq \sigma
       \grmor \beta
       \grmor \forkl v
	   \grmor \newl xxT
       && \text{labels for expressions}
  \\
  \pi &\grmeq \sigma 
  \grmor \tau
  \grmor \scopel xxxx
  \grmor (\parl \pi \pi)
       && \text{labels for processes}
\end{align*}
The $\sigma$ labels capture the six operations on sessions: $\receivek$,
$\sendk$, $\matchk$, $\selectk$, $\waitk$ and $\terminatek$.
Labels $\lambda$ are for expressions: $\beta$ is for internal actions and the
remaining two for $\forkk$ and $\newk$.
Labels $\pi$ are for processes: $\tau$ is for silent actions (the counterpart of
$\beta$ for expressions), $\scopel abxa$ or $\scopel abxb$ is for opening the
scope of a $\termc{(\nu ab)}$ binding, and $\pi\PAR\pi$ for parallel
composition.

\begin{figure}[t!]
  \declrel{Labelled transition system for expressions}{$\trans e \lambda e$}
  \begin{mathpar}
    \inferrule[Act-App]{}{\trans{\appe{(\abse{x}{T}{e})}{v}}{\beta}{\termc{e}\esubs{v}{x}}}
    
	\inferrule[Act-TApp]{}{\trans{\tappe{(\tabse{\alpha}{\kind}{v})}{T}}{\beta}{\termc{v}\tsubs{T}{\alpha}}}

	\inferrule[Act-Let]{}{\trans{\lete{x}{y}{{\paire uv}}{e}}{\beta}{\termc{e}\esubs{u}{x}\esubs{v}{y}}}

	\inferrule[Act-Let*]{}{\trans{\letu{\unitk}{e}}{\beta}{\termc{e}}}

    \inferrule[Act-Rec]{}{\trans{\appe{({\rece xTv})}
        u}{\beta}{\appe{(\termc{v}\esubs{\rece xTv}{x})}u}}

    \inferrule[Act-Fork]{}{\trans{\appe{\forkk}{v}}{\forkl{v}} \unitk}
	
	\inferrule[Act-New]{}{\trans{\tappe \newk T}{\newl xyT}{\paire xy}}

	\inferrule[Act-Rcv]{}{\trans{\appe{\tappe{\tappe \receivek T}{U}}{x}}{\receivel xv}{\paire{v}{x}}}
	
	\inferrule[Act-Send]{}{\trans{\appe{\appe{\tappe{\tappe \sendk T}{U}}{v}}{x}}{\sendl xv}{x}}
	
	\inferrule[Act-Match]{}{\trans{\matchwithe xCyeiI}{\receivel
        x{C_k}}{e_k\esubs x{y_k}}}

	\inferrule[Act-Sel]{}{\trans{\appe{\tappe{\selecte{C}}{\overline T}}{x}}{\sendl x{C}}{x}}
	
	\inferrule[Act-Wait]{}{\trans{\waite{x}}{\closel x}\unitk}

	\inferrule[Act-Term]{}{\trans{\closee{x}}{\openl x}{\unitk}}
  \end{mathpar}
  \caption{Labelled transition system for expressions (selected rules). The full version, including
  call-by-value structural rules, is presented in \cref{sec:appendix-exps-procs}, \cref{fig:lts-exps}}
  \label{fig:lts-exps-abbrev}
\end{figure}


The axioms of the labelled transition system for
expressions is defined by the rules in \cref{fig:lts-exps-abbrev}
(\cref{sec:appendix-exps-procs}, \cref{fig:lts-exps}, contains the whole set).
We group under label $\beta$ the common reduction rules of the polymorphic
lambda calculus. Rules Act-Fork and Act-New record in the label the value forked
and the names and types of the channel ends created. These two transitions will
then be handled at the process level. Then we have the six transitions for the
common session operations.

\begin{figure}[t!]
    \declrel{Labelled transition system for processes}{$\trans p \pi p$}
\begin{mathpar}
	\inferrule[Act-Session]{
      \trans{\termc{e_1}}{\sigma}{\termc{e_2}}}
	{\trans{\expp{e_1}}{\sigma}{\expp{e_2}}}	

	\inferrule[Act-Beta]{
      \trans{\termc{e_1}}{\beta}{\termc{e_2}}}
	{\trans{\expp{e_1}}{\tau}{\expp{e_2}}}	

    \inferrule[Act-Fork]{\termc{e_1}\lts{\forkl{v}}{\termc{e_2}}}
	{\expp{e_1}\lts{\tau}\parp{\expp{e_2}}{\expp{\appe v\unitk}}}
    
    \inferrule[Act-New]{\termc{e_1}\lts{\newl xyT}{\termc{e_2}}}
	{\expp{e_1}\lts{\tau}{\newp{x}{y}{\expp{e_2}}}}
	
	\inferrule[Act-JoinL]{\trans{p_1}{\pi_1}{q_1} \and
      \trans{p_2}{\pi_2}{q_2}}
	{\trans{\parp{p_1}{p_2}}{\parl{\pi_1}{\pi_2}}{\parp{q_1}{q_2}}} 
	
	\inferrule[Act-JoinR]{\trans{p_1}{\pi_1}{q_1} \and
      \trans{p_2}{\pi_2}{q_2}}
	{\trans{\parp{p_1}{p_2}}{\parl{\pi_2}{\pi_1}}{\parp{q_1}{q_2}}} 
	
	\inferrule[Act-Msg]{\trans{p_1}{\parl{\receivel xv}{\sendl yv}}{p_2}}
	{\trans{\newp xy{p_1}}{\tau}{\newp xy{p_2}}}

	\inferrule[Act-Bra]{\trans{p_1}{\parl{\receivel xC}{\sendl yC}}{p_2}}
	{\trans{\newp xy{p_1}}{\tau}{\newp xy{p_2}}}

	\inferrule[Act-Wait]{\trans{p_1}{\parl{\closel x}{\openl y}}{p_2}}
	{\trans{\newp xy{p_1}}{\tau}{p_2}}

	\inferrule[Act-ParL]{\trans{p_1}{\pi}{p_2}}
	{\trans{\parp{p_1}{q_1}}{\pi}{\parp{p_2}{q_1}}}
	
	\inferrule[Act-ParR]{\trans{q_1}{\pi}{q_2}}
	{\trans{\parp{p_1}{q_1}}{\pi}{\parp{p_1}{q_2}}}
	
	\inferrule[Act-Res]
    {\trans{p_1}{\pi}{p_2} \and \termc x,\termc y\not\in\FVL\pi}
	{\trans{\newp{x}{y}{p_1}}{\pi}{\newp{x}{y}{p_2}}}

	\inferrule[Act-OpenL]{\trans{p_1}{\sendl xa}{p_2} \and \termc x\neq \termc
      a,\termc b}
	{\trans{\newp{a}{b}{p_1}}{\scopel abxa}{p_2}} 

	\inferrule[Act-OpenR]{\trans{p_1}{\sendl xb}{p_2} \and \termc x\neq \termc
      a,\termc b}
	{\trans{\newp{a}{b}{p_1}}{\scopel abxb}{p_2}}

	\inferrule[Act-CloseL]{\trans{p_1}{\parl{\receivel xa}{\scopel abya}}{p_2}}
	{\trans{\newp{x}{y}{p_1}}{\tau}{\newp{x}{y}{\newp{a}{b}{p_2}}}}

	\inferrule[Act-CloseR]{\trans{p_1}{\parl{\receivel xb}{\scopel abyb}}{p_2}}
	{\trans{\newp{x}{y}{p_1}}{\tau}{\newp{x}{y}{\newp{a}{b}{p_2}}}}
\end{mathpar}
\caption{Labelled transition system for processes}
\label{fig:lts-procs}
\end{figure}


The labelled transition system for processes is defined by the rules in
\cref{fig:lts-procs}. The rules are adapted from
\citet{DBLP:conf/concur/0001KDLM21,DBLP:journals/corr/abs-2105-08996} which in
turn are inspired by \citet{DBLP:journals/corr/abs-2106-11818}. The first four
rules convert expression transitions into process transitions: the session
transitions $\sigma$ are passed directly from expressions to processes (rule
Act-Session). The next three rules all yield silent ($\tau$) transitions:
Act-Fork launchs a new thread and Act-New creates a new channel. Rules Act-JoinL
and Act-JoinR account for the commutative nature of parallel composition. Rules
Act-ParL, Act-ParR and Act-Res allow transitions underneath parallel composition
and scope restriction. The rules for opening and closing a scope (left and right
versions) are adapted to the double binder scheme from the labelled transitions
systems for the $\pi$-calculus~\cite{DBLP:journals/iandc/MilnerPW92b}. These
rules are required for we work with free output, unlike the above cited
works~\cite{DBLP:conf/concur/0001KDLM21,DBLP:journals/corr/abs-2105-08996,DBLP:journals/corr/abs-2106-11818}
that work with bound output.

Reduction underneath a prefix (rule Act-Res) can only occur if the channels ends
$\termc x$ and $\termc y$ (two bound variables) do not capture the free
variables in the label. For this purpose, we define the free variables of a
process label $\pi$ as follows.
\begin{align*}
  \FVL {\receivel xv} = \FVL {\sendl xv} = \{x\} \cup \FVE v
  \qquad
  \FVL {\receivel xC} = \FVL {\sendl xC} = \FVL {\closel x} = \FVL {\openl x} = \{x\}
  \\
  \FVL \tau = \emptyset
  \qquad
  \FVL {\scopel wxyz} = \{y,z\} \setminus \{w,x\}
  \qquad
  \FVL {\parl \pi {\pi'}} = \FVL \pi \cup \FVL {\pi'}
\end{align*}




\begin{figure}[t!]
  \declrel{Typing rules for processes}{$\isProc p$}
  \begin{mathpar}
    \infrule{P-Exp}
    {\expAgainst[\Empty]{\Gamma} e {\TUnit} {\Empty}}
    {\isProc{\expp{e}}}

    \infrule{P-Par}
    {\isProc[\Gamma_1]{p_1} \and \isProc[\Gamma_2]{p_2} }
    {\isProc[\Gamma_1,\Gamma_2]{\parp{p_1}{p_2}} }

    \infrule{P-New}
    {\typeAgainst[\Empty]{T}{\kinds} \\ \isProc[\Gamma, \ebind x{\Normal[+]{T}} , \ebind y{\Normal[-]{T}}]{p}}
    {\isProc{\newp{x}{y}{p}} }
  \end{mathpar}
  \caption{Typing rules for processes}
  \label{fig:proc-formation}
\end{figure}


We complete section with the main results of our language.
Towards this end we need to talk about context and process typing.
Type formation (\cref{fig:type-formation}) is lifted pointwise to contexts in 
judgement $\isCtx{\Gamma}{\kind}$.
%
%
%
The rules for typing processes are in
\cref{fig:proc-formation}. The rule for threads requires the expression
$\termc e$ to consume all resources given in context $\Gamma$. Rule P-Par splits
the context in two parts and gives each to one of the parallel processes. The
rule is not algorithmic but we plan to type check expressions only; processes
are a runtime artefact. Rule P-New guesses a type $\typec T$ for channel end
$\termc x$, then channel end $\termc y$ gets the type $\TDual T$, both in normal
form. Again, this rule is not algorithmic and need not be.
The labelled transitions convey behaviour in their labels. Because we use an
algorithmic typing system with leftovers, we require a labelled transition
system for typing contexts to account for the impact of the behaviour of labels
in the contexts. The statement of preservation relies on the labelled transition
system for contexts, defined in \cref{sec:proofs} (\cref{fig:lts-ctx}).

\begin{theorem}[Preservation]\
  \label{thm:preservation}
  \begin{enumerate}
  \item\label{it:pres-synth} If $\trans{e_1}{\lambda}{e_2}$ and
    $\expSynth{\Gamma_1,\Gamma_2,\Gamma_3}{e_1}{T}{\Gamma_3}$ and
    $\transCtx{\Gamma_2}{\lambda}{\Gamma_2'}$ and 
    $\isCtx{\Gamma_1,\Gamma_2,\Gamma_3}{\kindt}$, then
    $\expSynth{\Gamma_1,\Gamma_2',\Gamma_3}{e_2}{T'}{\Gamma_3}$ and
    $\isEquiv{T'}{T}$.
  \item\label{it:pres-against} If $\trans{e_1}{\lambda}{e_2}$ and
    $\expAgainst{\Gamma_1,\Gamma_2,\Gamma_3}{e_1}{T}{\Gamma_3}$ and
    $\transCtx{\Gamma_2}{\lambda}{\Gamma_2'}$ and 
    $\isCtx{\Gamma_1,\Gamma_2,\Gamma_3}{\kindt}$, then
    $\expAgainst{\Gamma_1,\Gamma_2',\Gamma_3}{e_2}{T}{\Gamma_3}$.
  \item\label{it:pres-proc} If $\trans{p_1}{\pi}{p_2}$ and
    $\isProc[\Gamma_1,\Gamma_2]{p_1}$ and $\transCtx{\Gamma_2}{\pi}{\Gamma_2'}$
    and $\isCtx[\Empty]{\Gamma_1,\Gamma_2}{\kindt}$,
    then $\isProc[\Gamma_1,\Gamma'_2]{p_2}$.
  \end{enumerate}
\end{theorem}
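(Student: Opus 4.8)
We prove part~\ref{it:pres-synth} and part~\ref{it:pres-proc} by induction on the derivation of the transition, and derive part~\ref{it:pres-against} from part~\ref{it:pres-synth}: by \rulenameexpCheck the hypothesis of part~\ref{it:pres-against} gives a synthesis $\expSynth{\Gamma_1,\Gamma_2,\Gamma_3}{e_1}{U}{\Gamma_3}$ with $\typec T =_\alpha \typec U$; part~\ref{it:pres-synth} then yields $\expSynth{\Gamma_1,\Gamma_2',\Gamma_3}{e_2}{U'}{\Gamma_3}$ with $\isEquiv{U'}{U}$, and since synthesised types are already normalised this forces $\typec{U'} =_\alpha \typec U =_\alpha \typec T$, so \rulenameexpCheck applies again. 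Before the main induction we collect the usual auxiliary results: generation (inversion) lemmas for the algorithmic judgements; term- and type-substitution lemmas phrased for the leftover discipline, so that a value consuming a sub-context $\Gamma$ while synthesising $\typec V$ can replace a linear binding $\ebind xV$ with $\Gamma$ threaded into the conclusion; a lemma that normalisation commutes with substitution up to re-normalisation (needed because, as already observed for \rulenameexpTApp, normal forms are not stable under substitution); preservation of context well-formedness $\isCtx{\Gamma}{\kindt}$ along the context LTS (\cref{fig:lts-ctx}); and the fact that for the internal labels $\beta$ and $\tau$ the context LTS is the identity, so that $\Gamma_2 = \Gamma_2'$.

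For part~\ref{it:pres-synth}, the base cases are the axioms of \cref{fig:lts-exps-abbrev}. The $\beta$-axioms Act-App, Act-TApp, Act-Let, Act-Let*, and Act-Rec are discharged by inverting the typing and applying the matching substitution lemma; Act-TApp additionally invokes the normalisation/substitution commutation so that the substituted body agrees with the type $\Normal[+]{\typec U\tsubs{T}{\alpha}}$ produced by \rulenameexpTApp, and Act-Rec re-types the unfolded recursion through the substitution lemma for unrestricted variables, relying on the non-linear binding set up by \rulenameexpRec and \rulenameexpUVar. For the session axioms Act-Rcv, Act-Send, Act-Match, Act-Sel, Act-Wait, and Act-Term, inversion fixes the shape of the type of the channel involved (for instance $\TIn{(\TPApp \proto {\overline U})}{S}$ for Act-Match), and the corresponding rule of the context LTS rewrites $\Gamma_2$ into exactly the context under which $e_2$ type-checks; Act-Match is the most delicate, since it must reproduce the protocol-parameter substitution, the materialisation, and the normalisation performed by \rulenameexpMatch simultaneously. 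Act-Fork and Act-New are the cases where the label transports data: the context LTS for $\forkl v$ splits off the sub-context consumed by the forked value and leaves $\unitk : \TUnit$, while the one for $\newl xyT$ extends the context with $\ebind x{\Normal[+]T}$ and $\ebind y{\Normal[-]T}$, which is exactly the normal form of the type $\TPair\alpha{\TDual\alpha}$ of $\newk$ instantiated at $\typec T$, so that $\paire xy$ checks against it. The inductive cases are the call-by-value structural rules: generation exposes the typed sub-expression that reduces, the induction hypothesis applies to the relevant slice of $\Gamma_2$, and the conclusion is reassembled by threading the leftover contexts just as the typing rule does.

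For part~\ref{it:pres-proc} we induct on $\trans{p_1}{\pi}{p_2}$. The purely structural cases Act-ParL, Act-ParR, Act-Res, Act-JoinL, and Act-JoinR follow from the induction hypothesis together with P-Par, P-New, and the pointwise action of the context LTS on the split context. The cases lifting an expression transition, Act-Session and Act-Beta, use part~\ref{it:pres-against} under P-Exp, and the process-level Act-Fork and Act-New use the splitting and extension behaviour of the context LTS together with P-Par and P-New. The crux is the communication cases Act-Msg, Act-Bra, Act-Wait and the scope-extrusion cases Act-OpenL/R, Act-CloseL/R: under a binder $\newp xyp$ the two endpoints carry dual types $\Normal[+]T$ and $\Normal[-]T$, a matched label such as $\parl{\receivel xv}{\sendl yv}$ contributes on one side a ``value received'' effect and on the other a ``value sent'' effect, and one must check that these compose into a well-typed context transition whose net effect under the binder preserves typing; here duality (\rulenametconvDualInM, \rulenametconvDualOutM, \rulenametconvDualEndW, \rulenametconvDualEndT) is used to make the payload sent on $\termc y$ agree with the payload received on $\termc x$. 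The open and close cases additionally require pushing $\nu$-binders through scope-extrusion labels, which is precisely what the double-binder formulation of the process LTS is designed to accommodate.

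I expect the main obstacle to be keeping the three-way bookkeeping consistent --- the leftover split $\Gamma_1,\Gamma_2,\Gamma_3$, the context transition on $\Gamma_2$, and the typing derivation of $e_2$ --- exactly in the session and communication cases, compounded by the need to accompany every substitution step with a re-normalisation argument. Concretely, Act-Match against \rulenameexpMatch in part~\ref{it:pres-synth}, and the Act-CloseL/R scope-extrusion cases in part~\ref{it:pres-proc}, are the places that will demand the most care.
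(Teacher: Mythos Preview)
Your proposal is correct and follows essentially the same approach as the paper: mutual induction on the transition derivation, with part~\ref{it:pres-against} derived from part~\ref{it:pres-synth} via inversion of \rulenameexpCheck, and relying on the same auxiliary lemmas (substitution, weakening, strengthening, monotonicity, and the observation that synthesised types are normal so that $\isEquiv{U'}{U}$ collapses to $\alpha$-equality). Your identification of Act-Match and the scope-extrusion cases as the delicate points matches where the paper spends the most effort.
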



To the below set of labels we call the \emph{progress set} for $\termc x,\termc y$:
\begin{equation*}
\{
\tau,\,
(\parl{\receivel xv}{\sendl yv}),\,
(\parl{\receivel xC}{\sendl yC}),\,
(\parl{\closel x}{\openl y}),\,
(\parl{\receivel xa}{\scopel abya}),\,
(\parl{\receivel xb}{\scopel abyb})
\}  
\end{equation*}

\emph{Completed processes} are of the form
$\termc{\expp\unitk \PAR \dots \PAR \expp\unitk}$.
A process is a \emph{deadlock} if, for all its subterms of the form $\newp xyp$
and all transitions $\trans{p}{\pi}{p'}$, we have $\pi$ not in the progress set
for channel $\termc x,\termc y$.
To identify a deadlocked process we find all its $\nu$ subterms, and for each
subterm we find all its transitions. Then, if all transitions fall outside the
progress set, the process is deadlocked.
When the kind context $\Delta$ is apparent from the context, we write
$\Gamma^{\kinds}$ to denote a typing context composed solely of session types,
that is, such that $\ebind xT\in\Gamma$ implies $\typeAgainst T \kinds$.

\begin{theorem}[Progress possibly leading to deadlock]\
  \label{thm:progress}
  \begin{enumerate}
  \item\label{item:prog-exps} If
    $\expSynth{\Gamma_1^{\kinds}}{e_1}{T}{\Gamma_2}$, then either $\termc{e_1}$
    is a value or $\trans{e_1}{\lambda}{e_2}$.
  \item  If $\expAgainst{\Gamma_1^{\kinds}}{e_1}{T}{\Gamma_2}$, then either $\termc{e_1}$
    is a value or $\trans{e_1}{\lambda}{e_2}$.
  \item\label{item:prog-procs} If $\isProc[\Gamma^\kinds]{p_1}$, then
    $\termc{p_1}$ is completed, $\trans{p_1}{\pi}{p_2}$, or $\termc{p_1}$ is
    a deadlock.
  \end{enumerate}
\end{theorem}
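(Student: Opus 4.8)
\medskip\noindent\emph{Proof sketch.}
The plan is to prove statement~(1) by induction on the derivation of $\expSynth{\Gamma_1^{\kinds}}{e_1}{T}{\Gamma_2}$ (essentially a structural induction on $\termc{e_1}$), to get statement~(2) from it at once --- since \rulenameexpCheck derives a check-against judgement only by way of a synthesis judgement for the very same expression --- and to obtain statement~(3) by a further induction on the process-typing derivation that appeals to~(2). The engine for~(1) is a family of canonical-forms lemmas read off the value grammar and \cref{fig:type-constants}: a value that synthesizes a function type is a $\lambda$-abstraction, a \reck, or a session constant applied to exactly the type and value arguments that leave it awaiting a channel (namely $\forkk$, $\waitk$, $\terminatek$, $\tappe{\tappe{\receivek}{T}}{U}$, $\appe{\tappe{\tappe{\sendk}{T}}{U}}{v}$, or $\tappe{\selecte C}{\overline T}$); a value of universal type is a type abstraction or a session constant still short of its type arguments (including $\newk$); a value of product type is a pair of values; a value of type $\TUnit$ is $\unitk$; and, crucially, a value that synthesizes a \emph{session} type is necessarily a channel variable, since no other value form admits a session type. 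The hypothesis $\Gamma_1^{\kinds}$ enters exactly here: it forbids free variables of functional, product, universal, or unit type, so that in every elimination position canonical forms return a syntactic shape on which the operational semantics can act, rather than a stuck free variable; conversely a free variable of session type is harmless, because \rulename{Act}{Rcv}, \rulename{Act}{Send}, \rulename{Act}{Match}, \rulename{Act}{Sel}, \rulename{Act}{Wait}, and \rulename{Act}{Term} fire unconditionally, merely recording the channel in the emitted label. One also checks that context threading preserves ``session-onlyness'' of the \emph{incoming} context along every evaluation subposition (a leftover context is a sub-context of the incoming one, and all session types are linear); it does not survive binders, but that is irrelevant, because the progress induction never descends under a $\lambda$-abstraction, a \reck, a \letk-body, or a \matchk-branch.

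The interesting cases of the induction are the elimination forms. For $\appe{e_1}{e_2}$ the induction hypothesis makes $\termc{e_1}$ a value or lets it transition; in the latter case the call-by-value structural rule for application transitions the whole term, and in the former, canonical forms at the function type leave finitely many shapes, each of which either reduces on the spot (\rulename{Act}{Rec}, \rulename{Act}{Fork}) or reduces once $\termc{e_2}$ has been driven to a value by the induction hypothesis (\rulename{Act}{App}; and \rulename{Act}{Wait}, \rulename{Act}{Term}, \rulename{Act}{Rcv}, \rulename{Act}{Send}, \rulename{Act}{Sel}, the last five using that the argument is a channel variable). The cases $\tappe e T$, $\letu{e_1}{e_2}$, $\paire{e_1}{e_2}$, $\lete x y{e_1}{e_2}$, and $\matchwithe eCxeiI$ go the same way: transition the evaluation-position subterm by a structural rule if it is not a value, otherwise apply canonical forms at, respectively, a universal, unit, product, product, and session type, so that $\tappe{\newk}{T}$ is a redex, $\Lambda$-values fire \rulename{Act}{TApp}, $\unitk$ fires \rulename{Act}{Let*}, a pair fires \rulename{Act}{Let}, and a channel variable fires \rulename{Act}{Match}. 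All remaining forms --- variables, constants, $\lambda$- and type abstractions, \reck, pairs of values, and the partially applied session constants --- are values by the grammar; the partial-application cases are precisely where one must align with the grammar to see that there is no under- or over-applied constant that is neither a value nor an axiom-redex. Statement~(2) is then immediate via \rulenameexpCheck.

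For statement~(3) one inducts on the derivation of $\isProc[\Gamma^{\kinds}]{p_1}$. In the \rulename{P}{Exp} case, $\termc{p_1}=\expp e$ with $\expAgainst[\Empty]{\Gamma^{\kinds}}{e}{\TUnit}{\Empty}$; by~(2) either $\termc e$ transitions --- whence $\termc{p_1}$ transitions by \rulename{Act}{Session}, \rulename{Act}{Beta}, \rulename{Act}{Fork}, or \rulename{Act}{New} according to the label --- or $\termc e$ is a value of type $\TUnit$, hence $\termc e=\unitk$ by canonical forms (the context being session-only, $\termc e$ is not a variable), so $\termc{p_1}=\expp\unitk$ is completed. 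In the \rulename{P}{Par} case both sub-contexts remain session-only, the induction hypothesis applies to each component, a transition of either component lifts to the composition by \rulename{Act}{ParL}/\rulename{Act}{ParR}, two completed components compose to a completed process, and otherwise the composition is a deadlock, since its $\nu$-subterms are exactly those of the two parts and being a deadlock depends only on those subterms. In the \rulename{P}{New} case, $\termc{p_1}=\newp x y{p'}$ with $p'$ typed in $\Gamma,\ebind x{\Normal[+]T},\ebind y{\Normal[-]T}$, still session-only because $\typeAgainst[\Empty]T\kinds$, so the induction hypothesis applies; the conclusion then follows from the key structural fact about the process LTS: whenever a $\nu$-subterm $\newp z w q$ has a transition of its body $q$ whose label lies in the progress set for $\termc z,\termc w$, then $\newp z w q$ itself has a $\tau$-transition (via \rulename{Act}{Res} for the label $\tau$, or \rulename{Act}{Msg}, \rulename{Act}{Bra}, \rulename{Act}{Wait}, \rulename{Act}{CloseL}, \rulename{Act}{CloseR} for the paired labels), and $\tau$-transitions propagate freely through $\PAR$ and $\nu$ because $\FVL\tau=\emptyset$; hence $\termc{p_1}$ either transitions or, by contraposition, every one of its $\nu$-subterms (including $\newp x y{p'}$) has only non-progress body-transitions, i.e.\ $\termc{p_1}$ is a deadlock. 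The degenerate subcase in which $p'$ is itself completed merely yields a vacuous deadlock and in fact cannot occur, since a well-typed completed process forces the empty context.

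The main obstacle sits entirely inside statement~(1): pinning down the canonical-forms analysis of the curried session primitives against the value grammar, and in particular establishing that a value is typable at a session type only when it is a channel variable. Once that lemma is in hand, statement~(2) is trivial and the process-level reasoning for~(3) is routine bookkeeping about lifting transitions through $\PAR$ and $\nu$ --- the ``deadlock'' disjunct being, in fact, structurally automatic for any non-completed process that admits no transition --- so the real content of the theorem lies in the expression-level progress of~(1).
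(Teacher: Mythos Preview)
Your approach matches the paper's: a canonical-forms lemma (the paper's \cref{lem:cforms}), monotonicity (\cref{lem:monotonicity}) to thread the session-only hypothesis through evaluation positions, induction on the typing derivation for (1), and for (3) the observation that completed processes are closed (\cref{lem:completed-closed}) together with the table of $\nu$-closing rules. Two small imprecisions in your \rulenameexpApp case: your canonical-forms list for function types omits $\tappe{\tappe\sendk T}U$, so you miss the subcase where $\appe{(\tappe{\tappe\sendk T}U)}v$ is itself a value rather than a redex; and \textsc{Act-Rec} and \textsc{Act-Fork} do not fire ``on the spot'' but, like the other axioms, require $e_2$ to be driven to a value first.
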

%


\section{Implementation}
\label{sec:implementation}

The implementation of \algst consists of a type checker and an interpreter, both
written in Haskell. Aiming at supporting more realistic programs, the
implementation features pragmatic extensions, including extensions to the
formal system and a simple module system for name space management.

\paragraph{Type checking}
\label{sec:type-checking}

The syntax of the implementation closely matches
the examples in this paper. Some places need additional kind annotations
because 
the type system of the implementation is not restricted to linear types. Its kind structure distinguishes
between linear and unrestricted variants of the kinds $\kinds$ and $\kindt$. Subkinding
allows us to subsume unrestricted versions of types to linear ones, that is, $\kindtu < \kindtl$. This subkinding generates a 
simple subtyping relation that essentially allows us to provide an unrestricted value where a linear
one is expected. The protocol kind $\kindp$ does not split in two versions, but still any type can be lifted to a protocol
type. \citet{DBLP:journals/corr/abs-2106-06658} investigate the metatheory of a system with a
similar kind structure (without the kind $\kindp$).

The type checker implementation closely follows the bidirectional
rules given in the paper. It necessarily contains an implementation of the type
normalization algorithm shown in \cref{fig:normalisation}. This
algorithm is extended to cater for the type isomorphism $\forall
\alpha. (T \to U) \cong T \to \forall \alpha.U$ (if $\alpha$ not free in $T$).  In this way, type applications can
be placed more liberally as long as their sequence is preserved.

The normalization algorithm is invoked exactly as indicated by the
algorithmic typing rules in \cref{fig:alg-typing}. By design, the
rules keep the typing assumptions and the outcome of inference in
normal form to minimize the number of times normalization is invoked.

The type checker includes further checking rules to enable writing some lambda abstractions without
type annotations. These rules are adaptions of well-known rules
\cite{DBLP:journals/csur/DunfieldK21}. For example:
\begin{mathpar}
  \infrule{\rulenameexpAbsn}
  {\expAgainst{\Gamma_1, \ebind xT}{e}{U}{\Gamma_2} \\ \termc{x}\not\in\Gamma_2}
  {\expAgainst{\Gamma_1}{\absne xTe}{\TFun[]{T}{U}}{\Gamma_2}}

  \infrule{\rulenameexpAppn}
  { \expSynth{\Gamma_1}{e_2}{T}{\Gamma_2}\\
   \expAgainst{\Gamma_2}{e_1}{\TFun[]{T}{U}}{\Gamma_3}} 
  {\expAgainst{\Gamma_1}{\appe {e_1}{e_2}}{\typec U}{\Gamma_3}}
\end{mathpar}

The implementation fully supports the overloaded use of data
constructors as protocol operators. We can perform pattern matching against the constructors of
a \lstinline+List+ datatype and use the same constructors in selecting alternatives of the
\lstinline+!List+ protocol. Moreover, the \lstinline+case+ operator is also overloaded to serve as
the \lstinline+match+ operator.


\paragraph{Interpretation}
\label{sec:interpretation}

The interpreter employs
standard techniques for the functional part and for
representing values at run time. 
It uses one universal type 
with constructors for each type in the language.
Processes in \algst are mapped to
Haskell threads. A fork in the language is implemented
using \texttt{forkIO} in Haskell. The interpreter is invoked recursively
for the new thread.

Communication channels are implemented using shared-memory abstractions from
Haskell's \texttt{Control.Concurrent} library. Synchronous
communication (as in this paper) is implemented
by pairs of \texttt{MVar}s, a semaphore-like concurrent datastructure
\cite{DBLP:conf/popl/JonesGF96} that behaves like a buffer of size
one. The implementation has an option to switch to
asynchronous channels using bounded
queues (\texttt{TBQueue}) from the \texttt{stm} library.

As channels are implemented in shared memory, the 
\lstinline|send| primitive is naturally polymorphic. Internally, the interpreter mediates
values of the universal type mentioned above between \lstinline+send+ and
\lstinline+receive+ operations.

\label{sec:benchmarking}
\begin{figure}[tp]
\begin{lstlisting}[morekeywords={rec},moreemph={End}]
--- protocol and type in AlgST syntax ---
protocol Repeat x = More x (Repeat x) | Quit
?Repeat Int . !(Char, EndT) . EndT
--- corresponding type in FreeST syntax ---
(rec repeat0 : 1S . &{More : ?Int ; repeat0 ; Skip,
                      Quit : Skip}) ; (!(Char, End) ; End)
--- example of an equivalent AlgST type ---
Dual (!Repeat Int . ?(Char, EndT) . Dual EndT)
--- example of a non-equivalent AlgST type ---
?Repeat String . !(Char, EndT) . EndT
\end{lstlisting}
  \caption{An \algst type instance, its FreeST counterpart and examples of equivalent and non-equivalent \algst types, following the rules of our test suite generator.}
  \label{fig:generated-example}
\end{figure}

\paragraph{Benchmarking}

We substantiate our claim that replacing a worst-case superlinear algorithm for
type equivalence by a linear one yields actual run-time improvements with an experiment.
We compare \algst's type equivalence algorithm with
FreeST, a freely available~\cite{FreeST-Language} programming language
containing an implementation of type equivalence for polymorphic
context-free session types
\cite{DBLP:conf/tacas/AlmeidaMV20,DBLP:journals/corr/abs-2106-06658}.
As the type checker regularly requires type equivalence checks, the performance
of the equivalence algorithm should be understood as a lower
bound of the type checker performance.

To create a collection of equivalent and non-equivalent test cases, we 
implemented a generator of instances of \algst types. 
An instance comprises a set of mutually recursive algebraic protocols and a session type 
referring to them. We carefully restrict protocols and types
so that a translation from \algst instances to FreeST types is
possible: the generator avoids polymorphic and nested recursion and restricts the
occurrences of the negation operator to the top level of protocol
constructor arguments. 
For each instance $\typec{T}$, we randomly apply the properties of normalization 
to generate 
an equivalent \algst type $\typec{T'}$ (see~\cref{fig:normalisation}). The pairs 
$(\typec{T}, \typec{T'})$ constitute our test suite of equivalent types. 
Non-equivalent tests are obtained from each $\typec{T}$ 
by either introducing an additional quantifier, or changing a sub-part of the
type to something of the same kind. Possible
replacements are the quantified type variables and the types built into AlgST,
such as \lstinline+Int+ or \lstinline+String+. The \algst type is
translated to a session type in FreeST. Protocols are translated inline at
every point of use as recursive branch or choice types, depending on wether it
appears in a sending or receiving context.
For single constructor types, the translation omits the constructor
tag. The arguments of the constructors are translated into nested
sequences of single interactions. See \cref{fig:generated-example} for an example.

Benchmarks were run using the gauge package v0.2.5 \cite{gauge} with a
timeout of 2~minutes. AlgST was compiled using GHC 9.2.7, FreeST used GHC 8.10.7.
Each test suite comprises 324 tests. FreeST incurred 69 timeouts for the positive tests and 77 timeouts for the negative tests.
\Cref{fig:timings} compares the execution time of the \algst and FreeST type equivalence algorithms 
for the equivalent and non-equivalent test cases. 
The plots substantiate the differences in the execution time of FreeST and \algst, distinguishing the 
linear evolution of \algst and 
the exponential behavior of FreeST as a function of the number of nodes in the abstract
syntax tree of the AlgST type. 

\begin{figure}[tp]
  \begin{subfigure}{.5\textwidth}
    \centering
    \includegraphics[width=\linewidth]{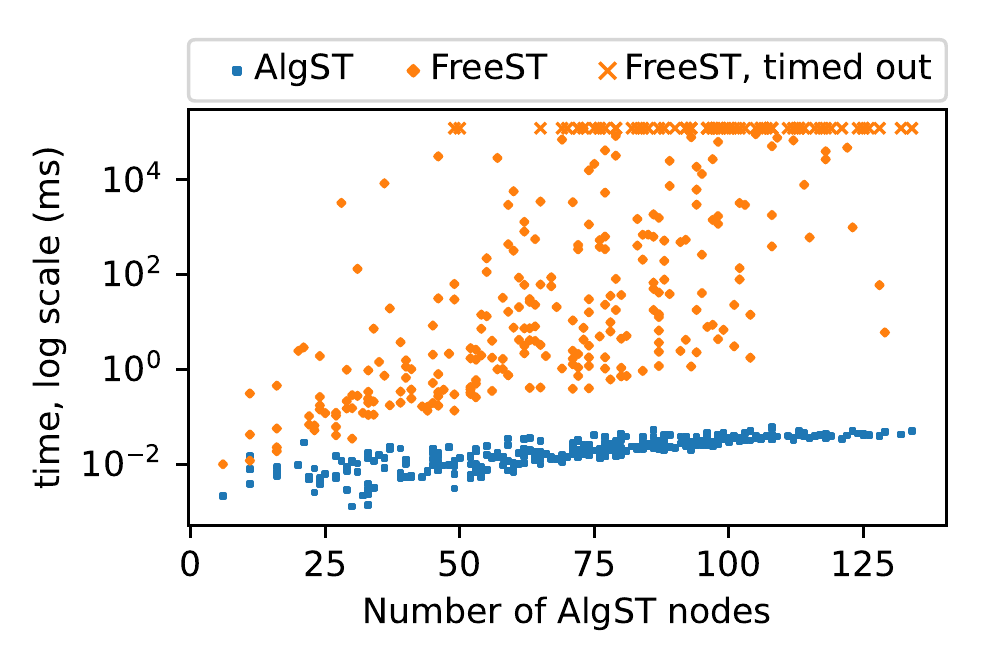}
    \caption{equivalent test cases}
    \label{fig:time-positiv}
  \end{subfigure}%
  \begin{subfigure}{.5\textwidth}
    \includegraphics[width=\linewidth]{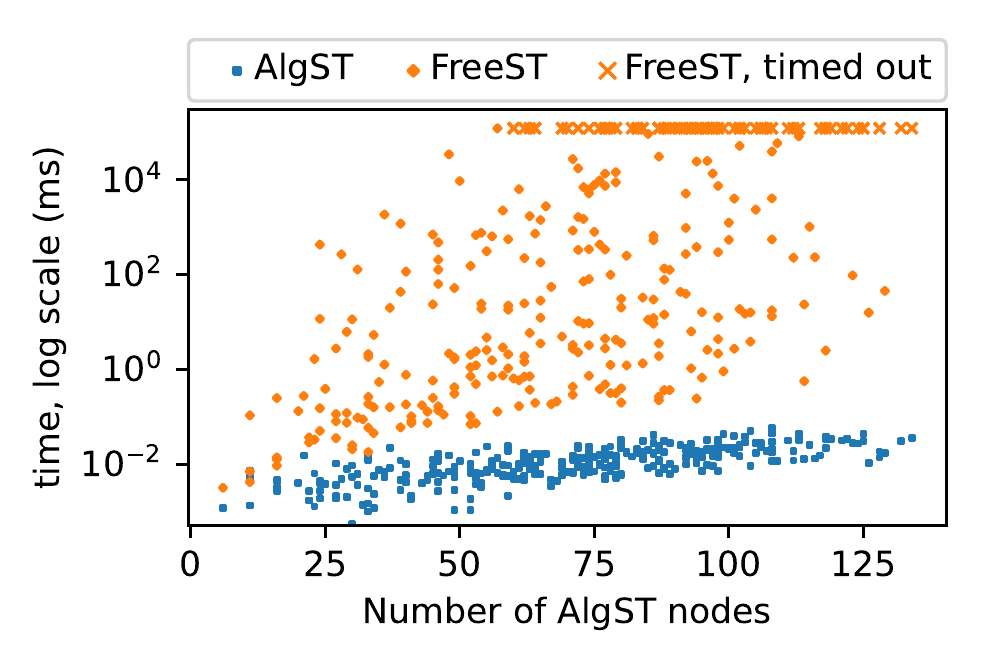}
    \caption{non-equivalent test cases}
    \label{fig:time-negative}
  \end{subfigure}
  \caption{Execution time of the \algst and FreeST type equivalence algorithms on the equivalent and non-equivalent test cases, represented in log scale as a function of the number of \algst nodes in the AST.}
  \label{fig:timings}
\end{figure}


\section{Related Work}
\label{sec:related}

This survey of related work concentrates on binary session types and their
treatment of recursion. Many works in the area concentrate on different aspects
and treat recursion informally as an add-on. In particular, the early work of
\citet{DBLP:conf/parle/TakeuchiHK94,DBLP:conf/concur/Honda93} did not consider
recursive types.
The school of session types that takes its foundations from the
sequent calculus for dual intuitionistic logic
\cite{DBLP:conf/concur/CairesP10,DBLP:journals/mscs/CairesPT16}
initially focuses on finite behaviors. While
recursion would break the logical consistency of their system, they
consider dependently typed variants that include replication
\cite{DBLP:conf/cpp/PfenningCT11}. 

\smallskip
\emph{Session type systems with recursion.}
Generally, recursive types come in two flavors, equirecursive and
isorecursive \cite{DBLP:conf/lics/AbadiF96}, and the impact of
these different approaches on type checking is well-studied
\cite{DBLP:books/daglib/0005958}. They have been found to be equally 
expressive \cite{DBLP:journals/pacmpl/PatrignaniMD21}. Most
session type systems in the literature support equirecursive session
types restricted to tail recursion and treat them informally.

Recursive types have first been introduced to session types by
\citet{DBLP:conf/esop/HondaVK98} in the context of $\pi$-calculus. Their
system relies on equirecursion and is syntactically restricted to
tail recursion.
\citet{DBLP:conf/esop/GayH99,DBLP:journals/acta/GayH05} study
subtyping in the same setting using a coinductive definition. 
%
\citet{DBLP:conf/ppdp/CastagnaDGP09} consider recursive session types
as infinite regular trees. Their types are tagless in the sense that
choices are made according to the set of accepted values and the viability
of the continuation.
The functional session type calculus of
\citet{DBLP:journals/jfp/GayV10} includes subtyping and equirecursive
types, but the latter are restricted to tail-recursive session types.
Fundamentals of session types \cite{DBLP:journals/iandc/Vasconcelos12}
proposes a range of $\pi$-calculus-based systems with increasing
complexity. One particular focus is the treatment of replication and unrestricted
channel ends, which requires recursive types. In this work, recursive
types are modeled by infinite regular trees, represented using the
familiar $\mu$ notation with contractive, tail recursive bodies.

\citet{DBLP:conf/esop/ToninhoCP13} extend the linear-logic based line
of work and focus on the combination of functions and processes
including recursive processes and consequently recursive
types for processes and functions. The examples indicate the equirecursive variant.
Apparently, there is no restriction in the use of recursion and
sequencing is implemented using the tensor $\otimes$.
%
\citet{DBLP:conf/tgc/ToninhoCP14} cover a different aspect of
equirecursion. They consider corecursive protocol definitions and
analyze their productivity. The main question they pursue is when a
corecursive protocol produces the next value after finitely many
internal steps.
SILL \cite{DBLP:conf/fossacs/PfenningG15} is a process-based language design that
integrates linear and affine types as well as synchronous and
asynchronous communication. It builds directly on the linear logic interpretation of session
types. Its examples make use of polymorphism and equirecursive types in the form of tail
recursive type equations, neither of which are reflected in the type
theoretical presentation.
%
\citet{DBLP:conf/concur/BartolettiSZ14} offer a semantic analysis of
session types on the basis of labeled transition systems. Their
system features equirecursion in tail position.


\citet{DBLP:conf/icfp/LindleyM16} define two
functional session type systems with recursion, $\mu$GV and $\mu$CP, and show that
they are related by typing- and semantics-preserving translations. So
we only compare to $\mu$GV in the following. $\mu$GV is based on
a simply-typed linear lambda calculus extended with notions of recursion and
corecursion. These are defined as least and greatest (respectively) fixed points of
strictly positive functors --- analogous to a logically
meaningful version of algebraic (co-) datatypes --- with generic fold and
unfold operators. Finally, they lift this isorecursive treatment of
recursive data to the level of session types by encoding branching and
recursion, mapping the corresponding data functors over session
types and wrapping them in promises.
We conclude that $\mu$GV features session types with isorecursion and
that it could be used to encode the simply-typed fragment of \algst,
but without parameterized data types and without the over
direction. Where $\mu$GV uses promises (i.e., a separate channel) to encode the argument to the
fold of the recursive type (i.e., the constructor arguments of an
algebraic datatype), \algst transmits the body of the fold on the same
channel, thus avoiding the overhead of channel creation.
Moreover, $\mu$GV is mainly a theoretical contribution
whereas a full implementation of \algst is available.

\smallskip
\emph{Context-free and nested session type systems.}
\citet{DBLP:conf/icfp/ThiemannV16} proposed
context-free session types to enable the type-safe encoding of
non-regular protocols like the serialization of tree structures or the
stack protocol from the introduction without encurring the overhead of
session initiation. They changed the asymmetric structure of the
traditional session type syntax $\TIn TS$ and $\TOut TS$, where an
item of type $\typec T$ is communicated and the session continues with
$\typec S$,
to a symmetric structure where session types can be built from the
atomic transmissions, $\TGIn T$ and $\TGOut T$, choice and branching,
$\TChoice{S_1}{S_2}$ and $\TBranch{S_1}{S_2}$,  and a monoidal sequential
composition operator $\TSeq{S_1}{S_2}$ with unit $\TSkip$. 
This design,
together with equirecursive sessions, leads to a number of technical
problems. First, types come with a non-trivial equational theory:
sequential composition forms a monoid with unit $\TSkip$ and it
distributes over choice and branching. Second, polymorphic recursion
is needed as a recursive call may have to be instantiated with
different continuation sessions. Third, the interaction between
equirecursion and the omission of the restriction to tail recursion
gives rise to a difficult, but decidable, type equivalence relation.
Subsequent work investigated tractable algorithms for type
equivalence
\cite{DBLP:journals/corr/abs-1904-01284,DBLP:conf/tacas/AlmeidaMV20}.
\citet{DBLP:journals/corr/abs-2106-06658} integrate context-free
session types with explicit polymorphism in the style of
System~F. Recent work extends this approach with higher-order types
\cite{DBLP:journals/corr/abs-2203-12877}.

\citet{DBLP:conf/esop/Padovani17,DBLP:journals/toplas/Padovani19}
proposes a different approach to context-free session types embodied
in a functional session calculus called \fuse. This calculus uses the
type language of context-free session types, but
sidesteps the problems with the sequencing operator and type
equivalence by introducing a resumption operator $\{e\}_c$ that
deconstructs a sequence of sessions on channel $c :
\TSeq{S_1}{S_2}$. It does so by retyping the channel to $c : \typec{S_1}$ for
evaluating $e$, which must return the same channel after finishing
$\typec{S_1}$.
More formally, the resumption obeys some kind of frame rule: if  $c :
\typec{S_1} \vdash e : \typec{T \times \TOne}$, then
$c : \TSeq{S_1}{S_2} \vdash \{e\}_c : \typec{T \times S_2}$. The
resumption temporarily removes the continuation
$\typec{S_2}$ from the session type and sticks it back on
afterwards. The type system has means to make sure that the channel returned by $e$ is
identical to $c$, which is required for soundness. The type $\TOne$
does not signify the end of the session, but rather that some known
part of the session has been processed and the session will be
continued in the context (of the resumption).
The management of the choice and branch operations is very close to
\algst. \fuse implements choice by transmitting the constructor tag in
a variant type that selects the different continuations. Its branch
operator matches on the received constructor and thus selects the
desired continuation. However, the handling of protocol constructors
in \algst includes the construction of the continuation type from the
types and directions given in the protocol declaration. \fuse has no
equivalent to a protocol declaration nor to the type operator
\lstinline+-+ to swap the direction of a transmission.
\citet{DBLP:journals/toplas/Padovani19} implements
\fuse in OCaml and demonstrates that context-free session types
are amenable to type inference, if the programmer helps with judicious
placement of resumptions. Support for equirecursion is already built into  OCaml's type inference
engine. It is also shown that subtyping is undecidable.

\citet{DBLP:conf/esop/DasDMP21} propose the metatheory of 
nested session types featuring parametrized definitions of session types
with nested polymorphic recursion. 
At first glance, the nested session type 
$\syntax{List}[\alpha] = \oplus\{\syntax{nil}\colon \syntax{end}, \syntax{cons}\colon \alpha \otimes \syntax{List}[\alpha]\}$
is very similar to the algebraic protocol
$\protocolk\ \TPApp{\syntax{List}}\alpha = \termc{\syntax{Nil}}\ |\
\termc{\syntax{Cons}}\ \alpha\ (\typec{\syntax{List}}\ \alpha)$, but
there are significant differences. On the
algebraic side, we can materialize the \typec{\syntax{List}} protocol in a type  
that sends or receives a list of integers, while the nested type $\syntax{List}$ is 
already committed to a concrete direction.
Although type nesting enables to capture the modularity characterizing 
algebraic sessions, there are several features that distinguish AlgST from 
nested session types:
Das et al. propose a structural and equirecursive, rather than a
nominal, view of type definitions, and interpret types coinductively, rather than inductively.
These features have significant impact on the complexity of the 
equivalence problem for nested session types and on its implementation:
Das et al. present a sound but incomplete algorithm for type equivalence.
By considering a nominal interpretation of algebraic session types, we reduced the 
complexity of the equivalence problem from 
doubly-exponential~\cite{DBLP:conf/esop/DasDMP21} to 
linear and designed a sound and complete algorithm to verify type
equivalence. The examples in section~2 of
\citet{DBLP:conf/esop/DasDMP21} translate into \algst and type check with our implementation.

\citet{DBLP:conf/fossacs/GayPV22} consider equirecursive protocols
defined by different kinds of systems of equations (with and without
parameters). Imposing different restrictions on the parameterization
yields different classes of protocols with decision problems for type
equivalence ranging from polytime to undecidable. Their largest
decidable class of pushdown session types is shown to be equivalent to
nested session types, which we can model with \algst.

Compared to all competiting systems \cite{DBLP:journals/pacmpl/ThiemannV20,DBLP:journals/corr/abs-2106-06658,DBLP:journals/toplas/Padovani19,DBLP:conf/esop/DasDMP21}, \algst's materialization and
directional operators are unique. These
operators enable the construction of protocols without committing to a
particular direction. 




\smallskip
\emph{Linear and unrestricted types.}
Our 
calculus is mostly linear, except that we
allow unrestricted bindings for recursive functions, which would be
rendered useless otherwise. 
Our implementation goes well beyond that in
allowing fully fledged unrestricted computations besides ensuring that
session typed channels are treated linearly.
%
There is some work on recursion and linearity in the context of lambda calculus
\cite{DBLP:journals/lisp/AlvesFFM10}
adding iterators on
linear natural numbers to the calculus, thereby avoiding the need for
a rule like {\rulenameexpUVar}.

System F\degree~\cite{DBLP:conf/tldi/MazurakZZ10}
extends System F with kinds to integrate polymorphism with linear and
unrestricted types. This work serves as a blueprint for much
subsequent work in this area including LFST and FreeST (see below), as well as our
implemented language.
They categorize all objects, including functions, into one-use or
many-use resources. This style can be traced back to 
\citet{walker:substructural-type-systems}. 
%
LFST \cite{lindley17:_light_funct_session_types} formalizes the
session type system of Links, which features a mix of linear and
unrestricted types, embedding a standard functional programming
language. The substructural behavior of types is specified by a
sophisticated kind structure.  The kinding used in our implementation
is inspired by LFST, though LFST has additional kinds to describe row
types that we do not support.
\citet{DBLP:journals/corr/abs-2106-06658}
consider an integration of context-free session types with
System~F. Their type system includes a kinding discipline that manages
linear and unrestricted versions of ordinary types and
message types (possible payloads for the send and receive
operations). Our implemented system extends the kind structure in a
similar way.

Linear Haskell (LH) \cite{DBLP:journals/pacmpl/BernardyBNJS18}
is a proposal to integrate linear types with stock functional
programming. While the work discussed so far characterizes resources,
LH supports the lollipop type $S \multimap T$ which classifies
functions that use their argument exactly once. Originally, LH forced
the programmer to allocate resources using continuations. This restriction
has been fixed in subsequent work \cite{DBLP:journals/pacmpl/SpiwackKBWE22}.
There is further work on integrating linear and dependent types (e.g.,
\cite{DBLP:conf/popl/KrishnaswamiPB15}) which relies on
bifurcating the language in two calculi connected by an adjunction.

\section{Conclusion}
\label{sec:conclusion}

Algebraic protocols combine naming and recursion with choice 
and sequence type constructors known from session types. Like algebraic datatypes,
they offer parameterization and mutual recursion. Unlike previous work on
session types, algebraic
protocols do not commit to a particular direction of communication,
even if direction reversal can be specified at the type level. Algebraic protocols
and session types subsume the expressive power of context-free and nested
session type systems while reducing the complexity of type checking
from doubly-exponential to linear time.


\section*{Acknowledgements}
We thank the reviewers for their suggestions that greatly
contributed to a more solid version of the paper. Support for this research was provided by the
Fundação para a Ciência e a Tecnologia through project SafeSessions, ref.\
PTDC/CCI-COM/6453/2020, and the LASIGE Research Unit, ref.\ UIDB/00408/2020 and
ref.\ UIDP/00408/2020, and by the COST Action EuroProofNet (CA20111).

\section*{Software Availability}
\label{sec:artifact}

The proposed system and implementation are publicly available as an artifact \cite{artifact}. The interested reader can follow the steps in the artifact documentation to set up a pre-built docker image or a self-built docker image. Besides the implementation of the type checker and the interpreter, we provide programs for all the examples in the paper and examples that establish a comparison with related work \cite{DBLP:conf/esop/DasDMP21,DBLP:conf/icfp/ThiemannV16}. In the artifact, the reader can also find instructions on how to reproduce our benchmarking results for type equivalence.


\bibliographystyle{plainnat}
\bibliography{biblio}

\newpage
\appendix
\section{Frequently Asked Questions}
\label{sec:freq-asked-quest}

\subsection{Session types vs.\ protocol types}
\label{sec:what-diff-betw-1}
 A session type can be assigned to a channel end to describe
the entire communication behavior of this channel up to its closing. Thus, a
session type classifies a run-time value, namely a channel end.
A protocol type does not correspond to any run-time value.  It
describes composable behavior for bidirectional communication. A protocol
type becomes part of a session type by binding it to a particular
direction with the $!$ and $?$ operators.

In a sense, a protocol type describes a composable part of a
communication protocol or the difference between two session
types. This view is emphasized by our pervasive use of channel passing
style. 

\subsection{Negated recursion}
\label{sec:what-about-protocol}
It is possible to define an algebraic protocol that negates the polarity of its recursive use.
Such a protocol is fine, though unusual as it changes
direction with every unfolding of the recursion. Here is an example:
\begin{lstlisting}
protocol Flipper = Flipper -Int -Flipper
\end{lstlisting}
A server for \lstinline+Flipper+ can be implemented with a pair of
mutually recursive functions or even with just one function:
\begin{lstlisting}
flipper : !Flipper.EndT -> Unit
flipper c = let c = select Flipper [EndT] c in
            let (x, c) = receive [Int, ?Flipper.EndT] c in
	    match c with {
	      Flipper c -> send [Int, !Flipper.EndT] x c |> flipper }
\end{lstlisting}

\subsection{Mutual recursion}
\label{sec:do-you-support}
Mutually recursive protocols pose no problem. In general, the server
would be implemented by mutually recursive functions.

\begin{lstlisting}
protocol Flip = Flip -Int Flop
protocol Flop = Flop Int Flip

flip : !Flip.EndT -> Unit
flip c = select Flip [EndT] c |> receive [Int, !Flop.EndT] |> flop

flop : (Int, !Flop.EndT) -> Unit
flop p = let (x, c) = p in
         select Flop [EndT] c |> send [Int, !Flip.EndT] x |> flip
\end{lstlisting}
\subsection{Duality vs.\ polarities}
\label{sec:what-diff-betw}
The \lstinline+Dual+ operator works on \emph{session types} and it inverts
the direction of communication from the outside. Intuitively, the
\lstinline+Dual+ operator traverses the spine of a session type and
flips all $!$ to $?$ and vice versa. It does \textbf{not} enter the
type of the message ``payload''. For example (with $\typec T$ the
payload type and $\typec S$ the continuation session):
\begin{align*}
  \TDual{(\TIn TS)} &\equiv \TOut T{\TDual S} &   \TDual{(\TOut TS)} &\equiv \TIn T{\TDual S}
\end{align*}
In contrast, the polarity operator \lstinline+-+ works on
\emph{protocol types} and inverts the direction of communication from
the inside. This inversion happens at the boundary where a protocol
type $\proto$ is turned into a payload type for a session:
\begin{align*}
  \TIn {(\TMinus P)}S &\equiv \TOut P S &\TOut {(\TMinus P)}S &\equiv \TIn P S
\end{align*}
The soundness of the type system guarantees that the direction of communication is
finally settled by the time we get to a primitive sending or receiving operation.

Like \lstinline+Dual+, the \lstinline+-+ operator is involutory:
\begin{align*}
  \TDual{(\TDual{S})} &\equiv \typec{S} &   \TMinus{(\TMinus T)} &\equiv \typec{T}
\end{align*}

\subsection{Recursion and duality}
\label{sec:what-about-inter}
\citet{BernardiH14,BernardiH16} discovered a problem in the interaction
between duality and recursive session types, which is exposed most
concisely by the recursive type $\typec{T =\mu X. !X.X}$.
The dual of this type is $\TDual{(\mu X. !X.X)} = \typec{\mu
  X.\TIn{(\mu X. !X.X)}X}$, which can be seen by dualizing the type's
expansion
\begin{equation*}
  \TDual{(\mu X. !X.X)} = \TDual{(!(\mu X.!X.X). (\mu X.!X.X))} \\
                        = \TIn{(\mu X.!X.X)}{\TDual{(\mu X.!X.X)}}
\end{equation*}
Using equations we could define the type $\typec T$ by $\typec{T = \TOut{T}T}$.
Translated to algebraic session types, the protocol underlying $\typec T$ is as follows:
\begin{lstlisting}
protocol X = Mu T X
type T = !X.EndT

selectMu : T -> !T.T
selectMu c = select Mu [EndT] c

dualT : Dual T -> ?X.EndT
dualT c = c

matchMu : Dual T -> ?T.(Dual T)
matchMu c' = match c' with { Mu c' -> c' }
\end{lstlisting}
This behavior matches exactly the outcome for $\typec T$ according to
\citet{BernardiH16}. The function \lstinline+selectMu+ performs the
unfolding of $\typec T$ whereas \lstinline+matchMu+ performs the
unfolding of $\TDual T$. The type of the identity function
\lstinline+dualT+ shows that \lstinline+Dual T = ?X.EndT+. 

\subsection{Efficiency of generic servers}
\label{sec:are-generic-servers}
In \cref{sec:toolb-gener-serv}, we explore an example with generic servers 
to showcase the expressivity of the proposed parametric protocols.
The example has a significant overhead as there are many tagging
operations compared to a manually crafted version.  

The present work is a first step and does not offer a solution
supported by formal theorems. But, let us call a protocol $\proto$
\emph{tagless} if it is non-recursive and declares a  single
constructor tag $\constr\ T_1\dots T_n$. In analogy to Haskell's
\texttt{newtype}, an implementation can reduce the
run-time cost of a tagless protocol to zero by omitting the select and
match operations. 

\subsection{Equirecursive vs. isorecursive session types}
\label{sec:this-session-types}
Most work on session types is using equirecursive types for two reasons. One is historic:
\citet{DBLP:conf/esop/GayH99,DBLP:journals/acta/GayH05} were the first
to study coinductively defined relations on (tail-) recursive session types. They defined the (by now)
standard notion of subtyping and showed that it is
decidable. Subsequent work has built on top of their results. The
other reason is that ``The equirecursive interpretation of a session type guarantees that no
message is required for the unfolding of a recursive type.'' (\citet{DBLP:journals/pacmpl/BalzerP17})

Algebraic protocols conflate the unfolding of the recursion with
the branching, just like algebraic datatypes do.
For protocols that have finite runs, algebraic protocol do not cause extra messages.
As such protocols use recursion guarded by branching, a
message to decide the branching is needed anyway.

The protocol \lstinline+Stream a+ in \cref{sec:param-prot} does cause
extra messages as it (roughly) corresponds to an isorecursive reading of the type $\mu X. !a.X$.

\subsection{Subtyping, type classes, type inference}
\label{sec:can-you-do}

Subtyping is undecidable for context-free session types
\cite{DBLP:journals/toplas/Padovani19}. However, as \algst's session
typing relies mostly on nominal types, there is scope for decidable
subtyping. We have a concrete proposal, but it is not implemented
and its metatheory has not been checked, yet.

The types \lstinline+Send a+ and \lstinline+Recv a+ for sending and
receiving data of type \lstinline+a+ in \cref{sec:transm-param-data}
look like types for a dictionary in a language with type classes or
traits
\cite{DBLP:journals/toplas/HallHJW96,steve18:_rust_progr_languag}. 
In future work we plan to explore the addition of implicit arguments as
in Scala \cite{DBLP:journals/pacmpl/OderskyBLBMS18}. 

Type inference is future work.

\subsection{Settling of direction}
\label{sec:settling-direction}
It might be confusing to see a type like $\TIn TS$ and $\TOut TS$, but
with the possibility that the direction changes later on. However,
this behavior is completely predictable. As long as $\typec T$ is a
protocol type variable  $\alpha : \kindp$, the direction can change
because $\alpha$ could be substituted by some protocol $\TMinus {T'}$. 
As soon as $\typec T$ is an ordinary type of kind $\kindt$ (even if
$\typec T$ is a variable), the direction is fixed. For example, in the
type of primitive send
\lstinline+forall(a:M). forall(s:S).  a -> !a.s -> s+, the payload type
\lstinline+a+ has kind $\kindm$, so the direction is fixed. \\


\section{A Toolbox for Generic Servers}
\label{sec:toolb-gener-serv-1}

Analogously to the \lstinline+Stream+ protocol, we can define a protocol
and a generic server for finite repetion of a subsidiary protocol:
\begin{lstlisting}
protocol Repeat x = More x (Repeat x) | Quit

repeat : forall(p:P). Service p -> Service (Repeat p)
repeat [p] serveP [s] c = match c of {
  Quit c -> c,
  More c -> serveP [?Repeat p.s] c |> repeat [p] serveP [s]
}
\end{lstlisting}
This protocol may then be used similarly to  \lstinline+Stream+:
\begin{lstlisting}
repeatArith = repeat [Arith] serveArith
\end{lstlisting}
We can define an active version that offers the subprotocol $n$ times
and use that with \lstinline+Arith+, too:
\begin{lstlisting}
repeatActive : Int -> forall(p:P). Service p -> Service -(Repeat -p)
repeatActiveArith : Int -> Service -(Repeat -Arith)
\end{lstlisting}
While the implementation is straightforward, the astute reader might
ask why we provide the negative parameter to \lstinline+Service+ in
\lstinline+Service -(Repeat -p)+. In particular, could we not use the
\lstinline+Dual+ operator to switch the direction of the service?

To see the problem with this proposal, let's expand the two candidates:
\begin{lstlisting}
Dual (Service (Repeat -p)) = Dual (forall(s:S). ?Repeat -p.s -> s)
\end{lstlisting}
This proposal is doomed because the use of \lstinline+Dual+ on the
right side is not well-kinded, as neither the
universal type nor the function types are session types in our
system.

\begin{lstlisting}
Service -(Repeat -p) = forall(s:S). ?- (Repeat -p). s -> s
                     = forall(s:S). !Repeat -p. s -> s
\end{lstlisting}
The negative parameter swaps the receiving operator in
\lstinline+Service+ to sending with surgical precision.


\section{Proofs for Types}

\begin{example}
  We generalize the example from the introduction to a
  parametric stack protocol.
\begin{lstlisting}
protocol Stack a = Pop -a | Push a (Stack a) (Stack a)
\end{lstlisting}
  We set  $ \Delta = \types{Neg} : \kindp \to \kindp, \types{a} :
  \kindp$ and obtain
  the straightforward derivation
  \begin{mathpar}
    \inferrule*{
      \inferrule*
      {\inferrule*
      {\typeSynth{ \types{a} }{\kindp}}
      {\typeAgainst{ \types{a} }{\kindp}}}
      {\typeAgainst{ \types{-a} }{\kindp}} \\
      \inferrule*
      {\typeSynth{ \types{a} }{\kindp}}
      {\typeAgainst{ \types{a} }{\kindp}} \\
      \inferrule*
      {\inferrule*
        {\types{Stack} : {\kindp} \to \kindp \in \Delta \\
          \inferrule*
          {\typeSynth{\types{a}}{\kindp}}
          {\typeAgainst{\types{a}}{\kindp}}}
        {\typeSynth{ \types{Stack~a}}{\kindp}}}
      {\typeAgainst{ \types{Stack~a}}{\kindp}}
    }
    {
      \typeSynth[]{ \types{Stack} }{\kindp \to \kindp}
    }
  \end{mathpar}
  The three subderivations correspond to the arguments of the protocol
  constructors. We conclude that the definition of the protocol 
  \lstinline+Stack a+ is well-formed.
\end{example}

\begin{lemma}[Properties of the materialization and directional operators]\
  \label{lem:ops-dual}
  \begin{enumerate}
    \item\label{it:dual-norm} $\isConv{\TDual{(\tosession TS)}}{\tosession{\polarizedparam{-}{T}}{\TDual{S}}}\kinds$
    \item\label{it:plus-minus} $\polarizedparam{-}{\polarizedparam{+}{T}} = \polarizedparam{-}{T}$
    \item\label{it:minus-minus} $\polarizedparam{-}{\polarizedparam{-}{T}} = \polarizedparam{+}{T}$
  \end{enumerate}
\end{lemma}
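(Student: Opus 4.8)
\emph{Proof plan.} The strategy is to reduce item~1 to a small lemma about materialization and to settle items~2 and~3 by a direct induction, which I would do first since they are self-contained. Items~2 and~3 are plain syntactic identities between metafunction results, proved together by structural induction on $\typec T$. If $\typec T$ is not of the form $\pneg{T'}$, then $\polarizedparam{+}{T}=\typec T$ makes item~2 immediate, and $\polarizedparam{-}{\polarizedparam{-}{T}}=\polarizedparam{-}{\pneg T}=\polarizedparam{+}{T}$ gives item~3, both by the defining equations of $\polarizedparam{\pm}{\cdot}$. If $\typec T=\pneg{T'}$, then item~2 reads $\polarizedparam{-}{\polarizedparam{+}{\pneg{T'}}}=\polarizedparam{-}{\polarizedparam{-}{T'}}$, which is equal to $\polarizedparam{+}{T'}=\polarizedparam{-}{\pneg{T'}}$ by the induction hypothesis for item~3 at $\typec{T'}$; dually, item~3 reads $\polarizedparam{-}{\polarizedparam{-}{\pneg{T'}}}=\polarizedparam{-}{\polarizedparam{+}{T'}}=\polarizedparam{-}{T'}=\polarizedparam{+}{\pneg{T'}}$ using the induction hypothesis for item~2 at $\typec{T'}$. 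The recursion terminates because $\typec{T'}$ is a strict subterm of $\pneg{T'}$.

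For item~1 I would first establish two auxiliary conversions by simultaneous structural induction on $\typec T$, under the implicit well-kindedness assumptions $\typeAgainst T\kindp$ and $\typeAgainst S\kinds$ (preserved for $\typec{T'}$ by inversion on \rulenametypeMsgNeg):
\begin{gather*}
  \text{(A)}\quad\isConv{\tosession{\polarizedparam{+}{T}}{S}}{\TOut TS}\kinds ,\\
  \text{(B)}\quad\isConv{\tosession{\polarizedparam{-}{T}}{S}}{\TIn TS}\kinds .
\end{gather*}
When $\typec T$ is not of the form $\pneg{T'}$, unfolding the second lines of the definitions of $\polarizedparam{\pm}{\cdot}$ and $\tosession{\cdot}{\cdot}$ makes each side of (A) literally $\TOut TS$ and each side of (B) literally $\TIn TS$, so \rulenametconvReflex suffices. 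When $\typec T=\pneg{T'}$, for (A) I use $\polarizedparam{+}{\pneg{T'}}=\polarizedparam{-}{T'}$, apply the induction hypothesis (B) at $\typec{T'}$ to reach $\tosession{\polarizedparam{-}{T'}}{S}\equiv\TIn{T'}{S}$, and finish with \rulenametconvNegOut read right-to-left (via \rulenametconvSymm), since $\TIn{T'}{S}\equiv\TOut{(\pneg{T'})}{S}=\TOut TS$; (B) is symmetric, using the induction hypothesis (A) at $\typec{T'}$ and \rulenametconvNegIn.

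With (B) in hand, instantiating its continuation with $\TDual S$ yields $\tosession{\polarizedparam{-}{T}}{\TDual S}\equiv\TIn T{(\TDual S)}$, so by \rulenametconvSymm and \rulenametconvTrans it remains to show $\isConv{\TDual{(\tosession TS)}}{\TIn T{(\TDual S)}}\kinds$. I case-split on $\typec T$. If $\typec T$ is not of the form $\pneg{T'}$, then $\tosession TS=\TOut TS$ and \rulenametconvDualOutM gives $\TDual{(\TOut TS)}\equiv\TIn T{(\TDual S)}$ directly. If $\typec T=\pneg{T'}$, then $\tosession TS=\TIn{T'}{S}$, \rulenametconvDualInM gives $\TDual{(\TIn{T'}{S})}\equiv\TOut{T'}{(\TDual S)}$, and \rulenametconvNegIn (again via \rulenametconvSymm) rewrites $\TOut{T'}{(\TDual S)}$ to $\TIn{(\pneg{T'})}{(\TDual S)}=\TIn T{(\TDual S)}$. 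Transitivity then closes item~1.

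There is no conceptual hurdle; the proof is bookkeeping. The two points that need care are (i)~arranging the two simultaneous inductions so that every recursive appeal is to the strict subterm $\typec{T'}$ of a negated type $\pneg{T'}$ — types being finite syntax trees, this makes the inductions visibly well founded — and (ii)~tracking the orientation of each conversion step, in particular the uses of \rulenametconvSymm needed to apply \rulenametconvNegIn and \rulenametconvNegOut backwards, while checking that the well-kindedness premises of the conversion rules are discharged by inversion on \rulenametypeMsgNeg.
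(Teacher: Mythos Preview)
Your proof is correct and, in fact, more careful than the paper's. The paper handles all three items by a single two-way case split on whether $\typec T$ has the form $\pneg U$, without introducing your auxiliary conversions (A) and (B) and without an explicit mutual induction for items~2 and~3. For item~1 the paper simply unfolds $\tosession{T}{S}$ in each case, applies \rulenametconvDualInM\ or \rulenametconvDualOutM, and then reaches the goal by a direct computation of $\polarizedparam{-}{T}$; for items~2 and~3 it writes out short equational chains. Your detour through (A) and (B) and your simultaneous induction achieve the same end but are robust against nested negations $\pneg{(\pneg V)}$, a case the paper's one-level analysis glosses over (for instance when it writes $\polarizedparam{+}{U}=\typec U$ inside the $\typec T=\pneg U$ branch, or $\polarizedparam{-}{U}=\pneg U$ in the proof of item~3). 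Both routes are valid; yours is the more rigorous, the paper's the more economical, relying on the reader to mentally iterate the case split.
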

\begin{proof}\
  \begin{enumerate}
  \item By case analysis on the definition of the
    $\tosessionop$ operator (\cref{fig:type-polarized-operator}).

    Case $\typec T = \pneg U $.
    Since $\tosession {\pneg U}S = \TIn{U}{S}$, rule \rulenametconvDualInM gives
    $\isConv{\TDual{(\tosession {\pneg U}S)}}{\TOut{U}{\TDual S}}\kinds$. Observing that
    $\polarizedparam{-}{-U}=\polarizedparam{+}{U}=\typec U$ we conclude that
    $\tosession{\polarizedparam{-}{\pneg U}}{\TDual{S}}=\TOut{U}{\TDual S}$.

    Case $\typec T \neq \pneg U $
    Since $\tosession {T}S = \TOut{T}{S}$, rule \rulenametconvDualOutM gives
    $\isConv{\TDual{(\tosession {T}S)}}{\TIn{T}{\TDual S}}\kinds$. Conclude observing
    that $\polarizedparam{-}{T}=\typec{-T}$.
    \item The proof follows by case analysis on the definition of the directional operators.

    Case $\typec T = \pneg U $. We have:
    $\polarizedparam{-}{\polarizedparam{+}{\pneg U}} = \polarizedparam{-}{\polarizedparam{-}{U}}
    = \polarizedparam{-}{\pneg{U}}$.

    Case $\typec T \neq \pneg U $. We have:
    $\polarizedparam{-}{\polarizedparam{+}{T}} = \polarizedparam{-}{T}$.

    \item The proof follows by case analysis on the definition of the directional operators.

    Case $\typec T = \pneg U $. We have:
    $\polarizedparam{-}{\polarizedparam{-}{\pneg U}} = \polarizedparam{-}{\polarizedparam{+}{U}}
    = \polarizedparam{-}{U} = \pneg U = \polarizedparam{+}{\pneg U}$.

    Case $\typec T \neq \pneg U $. We have:
    $\polarizedparam{-}{\polarizedparam{-}{T}} = \polarizedparam{-}{\pneg T}= \polarizedparam{+}{T}$.
  \end{enumerate}
\end{proof}

\begin{lemma}[Agreement for type conversion]
  \label{it:agreement-type-conv}
  If $\isConv T U \kind$, then $\typeAgainst T \kind$ and
  $\typeAgainst U \kind$.
\end{lemma}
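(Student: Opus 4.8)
The plan is to induct on the derivation of $\isConv T U \kind$, proving $\typeAgainst T \kind$ and $\typeAgainst U \kind$ simultaneously. The argument is driven by the observation that the side premises attached to the conversion axioms in \cref{fig:type-conversion} are exactly the checking premises that the matching formation rules of \cref{fig:type-formation} demand; consequently, for every axiom the well-kindedness of both sides can be rebuilt essentially by re-applying those formation rules, and the only place calling for an actual argument is the handling of the structural rules.

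For the structural rules: \rulenametconvReflex has premise $\typeAgainst T\kind$ and a conclusion whose two sides coincide, so there is nothing to do; \rulenametconvSymm and \rulenametconvTrans follow immediately from the induction hypotheses on the premises (in the transitive case from both premises, retaining well-kindedness of $\typec{T_1}$ from one and of $\typec{T_3}$ from the other). For \rulenametconvSub, the induction hypothesis gives $\typeAgainst{T_1}\kind$ and $\typeAgainst{T_2}\kind$; unfolding the checking judgment via rule \rulenametypeSub to a synthesis at a kind below $\kind$, composing with the side condition $\isSubKind\kind{\kind'}$ by transitivity of the subkinding order, and re-applying \rulenametypeSub yields $\typeAgainst{T_1}{\kind'}$ and $\typeAgainst{T_2}{\kind'}$. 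The congruence rules that \cref{fig:type-conversion} suppresses are uniform: the induction hypotheses deliver well-kindedness of the immediate subterms at the kinds the corresponding formation rule requires, and that rule reconstructs both sides.

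It remains to discharge the duality and reversal axioms. For \rulenametconvDualEndW and \rulenametconvDualEndT, rules \rulenametypeEndW and \rulenametypeEndT give $\typeSynth\TEndW\kinds$ and $\typeSynth\TEndT\kinds$, and one further application of \rulenametypeDual covers the dualized endpoints. For \rulenametconvDualInM (symmetrically \rulenametconvDualOutM), the premises $\typeAgainst T\kindp$ and $\typeAgainst S\kinds$ give $\typeSynth{\TIn TS}\kinds$ by \rulenametypeDotIn, hence $\typeSynth{\TDual{(\TIn TS)}}\kinds$ by \rulenametypeDual; and $\typeSynth{\TDual S}\kinds$ (again by \rulenametypeDual) together with $\typeAgainst T\kindp$ gives $\typeSynth{\TOut T{(\TDual S)}}\kinds$ by \rulenametypeDotOut. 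Rule \rulenametconvDoubleS needs a double application of \rulenametypeDual on the left against the premise $\typeAgainst S\kinds$ on the right. The reversal rules \rulenametconvNegIn and \rulenametconvNegOut are analogous, forming the reversed protocol $\pneg T$ from $\typeAgainst T\kindp$ via \rulenametypeMsgNeg before applying \rulenametypeDotIn or \rulenametypeDotOut; and \rulenametconvNegInv is a double application of \rulenametypeMsgNeg against the premise on the right.

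I do not expect a genuine obstacle here: this is a routine regularity argument whose premises were evidently tailored to make it go through with nothing more than a re-application of a formation rule per case. The one step needing a real (one-line) justification is \rulenametconvSub, which rests on transitivity of the subkinding order; and the only point of hygiene is to check, in the congruence cases for $\TDual{\cdot}$, for $\pneg{\cdot}$, and for protocol application $\TPApp\proto{\overline T}$, that the induction hypothesis is used at the kind the relevant formation rule expects of the subterms.
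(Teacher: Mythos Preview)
Your proposal is correct and follows exactly the approach the paper takes: the paper's entire proof is the single sentence ``By rule induction on the hypothesis.'' You have simply spelled out the cases that the paper leaves implicit, and each of them goes through as you describe.
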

\begin{proof}
  By rule induction on the hypothesis.
\end{proof}

We start by specifying the syntax of types in normal form. Normal forms for well-formed types
are $\typec{Q}$ as defined by the following grammar:
\begin{align*}
  \typec{Q} &\grmeq \typec{R} \grmor \pneg{R}
  \\
  \typec{R} &\grmeq
  \TUnit \grmor \TFun[]{R}{R} \grmor \TPair{R}{R} \grmor
  \TForall \alpha \kind {R} \grmor \typec\alpha
  \grmor \TIn{R}{R} \grmor \TOut{R}{R}
  \grmor \TEndW \grmor \TEndT  \grmor
  \TDual{\alpha}
  \grmor \TPApp \proto {\overline{Q}}
\end{align*}

\begin{proposition}[Kind preservation]
  ~\label{lemma:kind-preservation}
  \begin{enumerate}
  \item For all $\typeSynth T\kind$,  $\isConv{\Normal[+] T}T\kind$.
  \item For all $\typeSynth T\kinds$,  $\isConv{\Normal[-] T}{\TDual T}\kinds$.
  \end{enumerate}
\end{proposition}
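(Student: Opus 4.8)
The plan is to prove items (1) and (2) simultaneously by mutual induction on the structure of $\typec T$, equivalently on the derivation of $\typeSynth T\kind$; item (2) only needs to be examined when $\typec T$ is one of $\TIn TS$, $\TOut TS$, $\TEndW$, $\TEndT$, $\TDual S$ or a variable of kind $\kinds$, which is exactly where $\Normal[-]{\cdot}$ is defined (and exactly where $\typeSynth T\kinds$ is derivable). For the functional fragment ($\TUnit$, $\TFun[] TU$, $\TPair TU$, $\TForall\alpha\kind T$) and for type variables, $\Normal[+]{\cdot}$ just rebuilds the same constructor over recursively normalized subterms, so the goal follows from the induction hypotheses by the elided congruence rules, closing the leaves with \rulenametconvReflex; the well-kindedness side conditions of \rulenametconvReflex and of the congruence rules come either directly from $\typeSynth T\kind$ (via \rulenametypeSub) or from \cref{it:agreement-type-conv} applied to the relevant induction hypothesis. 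For item (2), the leaf cases $\Normal[-]\alpha=\TDual\alpha$, $\Normal[-]\TEndW=\TEndT$ and $\Normal[-]\TEndT=\TEndW$ are discharged respectively by \rulenametconvReflex, \rulenametconvDualEndW and \rulenametconvDualEndT.

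The two $\TDual$ cases close the mutual recursion between $\Normal[+]{\cdot}$ and $\Normal[-]{\cdot}$. For item (1) with $\typec T=\TDual S$ we have $\Normal[+]{\TDual S}=\Normal[-]{S}$, and since the minimal kind of $\TDual S$ is $\kinds$, inverting \rulenametypeDualS gives $\typeSynth S\kinds$; the goal $\isConv{\Normal[-]S}{\TDual S}\kinds$ is then precisely the item-(2) hypothesis for $\typec S$. For item (2) with $\typec T=\TDual S$ we have $\Normal[-]{\TDual S}=\Normal[+]{S}$ and must show $\isConv{\Normal[+]S}{\TDual{(\TDual S)}}\kinds$; since \rulenametconvDoubleS gives $\isConv{\TDual{(\TDual S)}}{S}\kinds$, it suffices by \rulenametconvSymm and \rulenametconvTrans to invoke the item-(1) hypothesis on $\typec S$ at kind $\kinds$.

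The real content lies in the message-type cases $\TIn TS$ and $\TOut TS$, and the protocol cases $\pneg T$ and $\TPApp\proto{\overline T}$ of item (1), because there normalization routes through the metafunctions $\tosessionop$, $\polarizedparam{-}{\cdot}$ and $\polarizedparam{+}{\cdot}$ of \cref{fig:normalisation}, which inspect the syntactic head of their protocol argument---behaviour that $\equiv$ does not respect. I would therefore first establish, by a single mutual structural induction on a well-kinded $\typec U:\kindp$ (using \rulenametconvNegIn, \rulenametconvNegOut and \rulenametconvNegInv, with the syntactic identities of \cref{lem:ops-dual} taking care of iterated negations), the four auxiliary conversions $\isConv{\tosession{\polarizedparam{-}{U}}{S}}{\TIn US}\kinds$, $\isConv{\tosession{\polarizedparam{+}{U}}{S}}{\TOut US}\kinds$, $\isConv{\polarizedparam{-}{U}}{\pneg U}\kindp$ and $\isConv{\polarizedparam{+}{U}}{U}\kindp$. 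Given these, $\Normal[+]{\TIn TS}=\tosession{\polarizedparam{-}{\Normal[+]T}}{\Normal[+]S}$ converts to $\TIn{\Normal[+]T}{\Normal[+]S}$ by the first auxiliary conversion and thence to $\TIn TS$ by congruence and the hypotheses on $\typec T$ (raised to $\kindp$ with \rulenametconvSub) and $\typec S$; the $\TOut TS$ case is symmetric; $\Normal[+]{\pneg T}=\polarizedparam{-}{\Normal[+]T}$ converts to $\pneg{\Normal[+]T}$ and then to $\pneg T$; and $\TPApp\proto{\overline T}$ is immediate from congruence and the hypotheses on the parameters. For item (2), $\Normal[-]{\TIn TS}=\tosession{\polarizedparam{+}{\Normal[+]T}}{\Normal[-]S}$ converts successively to $\TOut{\Normal[+]T}{\Normal[-]S}$, to $\TOut T{\TDual S}$, and to $\TDual{(\TIn TS)}$, the last step being \rulenametconvDualInM read via \rulenametconvSymm (and \rulenametconvDualOutM for the $\TOut TS$ case); alternatively, \cref{lem:ops-dual} packages exactly the interaction between materialization and duality used here. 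The main obstacle is precisely this mismatch between the syntax-directed metafunctions and the non-syntactic conversion relation: once the four auxiliary conversions are in place, the principal induction is routine.
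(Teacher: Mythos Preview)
Your argument is correct. The paper does not give a paper proof at all: it defers entirely to the Agda formalization (\texttt{nf-sound+} and \texttt{nf-sound-}), so there is nothing to compare against at the level of proof text. Your decomposition---simultaneous structural induction on the kinding derivation for items~(1) and~(2), with the $\TDual{}$ cases mediating between them, plus a separate batch of auxiliary conversions to reconcile the syntax-directed metafunctions $\tosessionop$ and $\polarizedparam{\pm}{\cdot}$ with the syntax-blind relation~$\equiv$---is exactly the shape such a proof must take, and your identification of that last point as ``the real content'' is accurate. One small refinement: the four auxiliary conversions actually split into two independent mutually-recursive pairs (the $\tosessionop$ pair and the $\polarizedparam{\pm}{\cdot}$ pair never call across), but folding them into one mutual induction is harmless.
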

\begin{proof}
  See Agda development \texttt{nf-sound+} and \texttt{nf-sound-}.
\end{proof}

\begin{proof}[Proof of \cref{thm:soundness-type-equiv} (Soundness)]
  \Cref{it:soundness-type-equiv-t}. Suppose that $\Normal[+] T =_\alpha \Normal[+] U$. From
  \cref{lemma:kind-preservation}, we have  $\isConv{\Normal[+]
    T}T\kind$ and  $\isConv{\Normal[+] U}U\kind$. Conclude by symmetry
  and transitivity of conversion as type equality is in $\equiv$.
  The proof for \cref{it:soundness-type-equiv-s} is analogous.
\end{proof}

\begin{proof}[Proof of \cref{thm:completeness-type-equiv} (Completeness)]
  See Agda development \texttt{nf-complete} and \texttt{nf-complete-}.
\end{proof}

To obtain a complexity bound for the computation of a normal form,
we exploit its syntactic form.
\begin{lemma}
  \label{lemma:syntax-of-type-normal-forms}
  If $\Normal[\pm]T = \typec{T'}$, then $\typec{T'}$ is described by $\typec{Q}$.
\end{lemma}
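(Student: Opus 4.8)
The plan is to prove a kind-refined strengthening of the statement, from which \cref{lemma:syntax-of-type-normal-forms} is immediate. By inspection of the clauses in \cref{fig:normalisation}, the function $\Normal[-]{\cdot}$ is only ever called on (subterms that are) session types, so it suffices to prove, by induction on $\typec T$ and using the type-formation rules of \cref{fig:type-formation} to constrain the kinds of subterms:
\begin{enumerate}
\item if $\typeSynth T\kindp$ then $\Normal[+]{T}$ is described by $\typec Q$;
\item if $\typeSynth T\kindt$ then $\Normal[+]{T}$ is described by $\typec R$;
\item if $\typeSynth T\kinds$ then both $\Normal[+]{T}$ and $\Normal[-]{T}$ are described by $\typec R$.
\end{enumerate}
Since every well-formed type has a unique minimal kind, which is one of $\kinds$, $\kindt$, $\kindp$, and since every $\typec R$ is in particular a $\typec Q$, parts (1)--(3) give that $\Normal[+]{T}$ is always a $\typec Q$; and part (3) covers the only situation in which $\Normal[-]{T}$ is formed, yielding a $\typec R$ and hence a $\typec Q$. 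That is exactly the lemma.

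For the inductive step, type variables (\rulenametypeVar) are trivial: $\Normal[+]{\alpha} = \typec\alpha$ and $\Normal[-]{\alpha} = \TDual\alpha$ are themselves productions of $\typec R$. The session constants $\TEndW$, $\TEndT$ are equally immediate, and the case $\TDual S$ follows from the induction hypothesis, since $\Normal[+]{\TDual S} = \Normal[-]{S}$ and $\Normal[-]{\TDual S} = \Normal[+]{S}$ merely pass control between the two measures, both applied to the session type $\typec S$. For the non-session functional constructors (\rulenametypeUnit, \rulenametypeArrow, \rulenametypePair, \rulenametypePoly), $\Normal[+]{\cdot}$ rebuilds the same constructor around the recursively normalized immediate subcomponents; the formation premises force those subcomponents to check against $\kindt$, hence to have minimal kind in $\{\kinds,\kindt\}$, so by induction hypotheses (2)/(3) each normalizes to some $\typec R$, and the reconstructed type is again an $\typec R$. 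This step is exactly where the kind refinement is essential: without it a subcomponent could a priori normalize to a $\pneg R$, which is not an $\typec R$. For a protocol name, $\Normal[+]{\TPApp\proto{\overline T}} = \TPApp\proto{\overline{\Normal[+]T}}$, and as each argument checks against $\kindp$, by (1)--(3) it normalizes to a $\typec Q$, so the result is the $\typec R$-production $\TPApp\proto{\overline Q}$. For the reverse operator, $\Normal[+]{\pneg T} = \polarizedparam{-}{\Normal[+]T}$; here $\typec T$ checks against $\kindp$, so by (1)--(3) $\Normal[+]{T}$ is a $\typec Q$, and a case split on whether it has the form $\pneg R$, using the defining equations of $\polarizedparam{-}{\cdot}$ together with \cref{it:plus-minus} and \cref{it:minus-minus} of \cref{lem:ops-dual}, shows the result is again a $\typec Q$ with at most one leading $-$.

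The delicate cases are the message types, where normalization routes through the materialization operator $\tosessionop$: $\Normal[+]{\TIn TS} = \tosession{\polarizedparam{-}{\Normal[+]T}}{\Normal[+]S}$ and $\Normal[-]{\TIn TS} = \tosession{\polarizedparam{+}{\Normal[+]T}}{\Normal[-]S}$, and symmetrically for $\TOut TS$. The payload checks against $\kindp$, so by (1)--(3) $\Normal[+]{T}$ is a $\typec Q$, and the continuation is a session type, so by (2)/(3) both $\Normal[+]{S}$ and $\Normal[-]{S}$ are $\typec R$'s. The main (and only real) obstacle is then the bookkeeping over the two equations for $\polarizedparam{\pm}{\cdot}$ and the two for $\tosessionop$ in \cref{fig:type-polarized-operator}: one checks that in every combination the materialized result is $\TIn{R_1}{R_2}$ or $\TOut{R_1}{R_2}$, where $\typec{R_1}$ is either $\Normal[+]{T}$ itself (when it does not begin with $-$) or its immediate subterm (when it does), in both cases an $\typec R$, and $\typec{R_2}$ is the already-$\typec R$ normalized continuation. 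Hence the result is an $\typec R$, as required. Every remaining case is a direct appeal to the induction hypothesis.
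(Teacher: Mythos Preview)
Your argument is correct: the kind-stratified strengthening (claims (1)--(3)) is exactly what is needed to push the induction through, and your case analysis of the directional operators and materialization is accurate. The only implicit assumption worth making explicit is that the lemma is meant for well-kinded $\typec T$; as you observe, without that assumption the continuation of a message type could normalize to a $\pneg R$ and break the $\typec R$-grammar, so your appeal to the formation rules is essential and not merely convenient.

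As for comparison with the paper: the paper gives no pen-and-paper proof at all, deferring entirely to an Agda mechanization (\texttt{nf-normal}). In such a setting types are almost certainly represented intrinsically by kind, which makes your kind-refined decomposition the canonical structure of the mechanized proof as well. So your approach is not a different route so much as a faithful reconstruction of what the mechanization must be doing; the paper simply does not spell it out.
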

\begin{proof}
  See Agda development \texttt{nf-normal}.
\end{proof}
\begin{proposition}
  For each $\typeSynth T\kind$,
  the worst case running time of $\Normal[\pm]T$ is in $O(|\typec{T}|)$.
\end{proposition}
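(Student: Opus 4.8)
The plan is to prove the linear-time bound by structural induction on the type $\typec T$, establishing simultaneously a bound for both $\Normal[+]{\_}$ and $\Normal[-]{\_}$, together with the auxiliary operators $\tosession{\_}{\_}$, $\polarizedparam{-}{\_}$, and $\polarizedparam{+}{\_}$. The crucial invariant — which I would prove first as a sublemma — is that each of the auxiliary metafunctions runs in constant time: inspecting whether $\typec T = \pneg U$ is an $O(1)$ pattern match, and in each branch of Figure~\ref{fig:type-polarized-operator} the result is built from the argument (or its immediate subterm) with a bounded amount of work. In particular $\tosession TU$, $\polarizedparam{-}{T}$, and $\polarizedparam{+}{T}$ each do only an $O(1)$ test and an $O(1)$ reconstruction at the root, so they contribute only an additive constant per invocation.

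\textbf{First}, I would set up the induction hypothesis: for every subterm $\typec{T'}$ of $\typec T$, computing $\Normal[+]{T'}$ takes time bounded by $c\cdot|\typec{T'}|$, and likewise for $\Normal[-]{T'}$ (using Proposition~\ref{lemma:kind-preservation} and the fact established earlier that $\Normal[-]{\_}$ is only ever invoked on session types, so the relevant cases are exactly those in the right column of Figure~\ref{fig:normalisation}). \textbf{Then} I would walk through the defining equations. For the non-session constructs ($\TUnit$, $\TFun[]{T}{U}$, $\TPair{T}{U}$, $\TForall\alpha\kind T$, $\typec\alpha$, $\TPApp\proto{\overline T}$) the function recurses exactly once into each immediate subterm and reassembles with $O(1)$ work at the root, so the total is the sum of the children's bounds plus a constant — exactly the shape that yields $O(|\typec T|)$. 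For $\TIn TS$ and $\TOut TS$, positive normalization calls $\Normal[+]{\_}$ on the payload $\typec T$ and on the continuation $\typec S$, then applies the constant-time materialization $\tosession{\_}{\_}$; negative normalization does the analogous thing with $\Normal[+]{\_}$ on the payload and $\Normal[-]{\_}$ on the spine. The only subtlety is the $\TDual{\_}$ and $\pneg{\_}$ cases: $\Normal[+]{\TDual T} = \Normal[-]{T}$ and $\Normal[-]{\TDual T} = \Normal[+]{T}$, so a $\dualk$ node is consumed by switching polarity with no recursion duplicated — each original node is visited at most once regardless of how many $\dualk$'s are stacked — and $\Normal[+]{\pneg T}$ recurses once into $\typec T$ and then applies the constant-time $\polarizedparam{-}{\_}$.

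\textbf{The main obstacle}, and the point that needs the most care, is ruling out any hidden super-linear blow-up from the interaction of $\dualk$, $\pneg{\_}$, and the auxiliary operators — specifically, making sure that $\polarizedparam{-}{\Normal[+]{T}}$ in the $\pneg{\_}$ case and $\tosession{\_}{\_}$ in the message cases are genuinely $O(1)$ \emph{on the already-computed normal form}, not $O(|\Normal[+]{T}|)$. This follows because, by Lemma~\ref{lemma:syntax-of-type-normal-forms}, $\Normal[+]{T}$ is described by the grammar $\typec Q$, so its root is one of the finitely many normal-form shapes, and both $\polarizedparam{-}{\_}$ and $\tosession{\_}{\_}$ branch only on whether that root is $\pneg{\_}$; in the $\pneg{\_}$ branch they strip the single leading $\pneg{}$ (normal forms have at most one), and otherwise they prepend a single $\pneg{}$ or $\TGOut{}/\TGIn{}$ — constant work either way. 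Once this is established, the recurrence $t(\typec T) \le \sum_{i} t(\typec{T_i}) + O(1)$ over the immediate subterms $\typec{T_i}$ solves to $O(|\typec T|)$ by a standard induction, and since the length of the output $|\Normal[\pm]{T}|$ is also bounded by $O(|\typec T|)$ (each equation produces output no larger than a constant times its input, with $\dualk$-nodes only shrinking it), there is no accumulation through the auxiliary calls. I would close by noting that, combined with Theorem~\ref{thm:completeness-type-equiv} and the fact that $\alpha$-equivalence of the two normal forms is checkable by a single linear traversal, this immediately gives Theorem~\ref{thm:complexity}.
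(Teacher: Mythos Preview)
Your proposal is correct and follows essentially the same approach as the paper: both arguments observe that normalization visits each node of $\typec T$ once, and both invoke Lemma~\ref{lemma:syntax-of-type-normal-forms} to argue that the auxiliary operators $\tosession{\_}{\_}$ and $\polarizedparam{\pm}{\_}$ run in $O(1)$ on normal forms because a normal form carries at most one leading negation. Your write-up is considerably more detailed than the paper's three-sentence proof, but the underlying structure and the key insight are the same.
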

\begin{proof}
  A run of $\Normal T$ visits each of the $|\typec{T}|$ nodes in the input type
  once. At each node, it either reconstructs the type from the normal forms of
  the sub-terms in $O(1)$ time, or (in the transmission rules) it applies
  polarity correction to normal forms. By
  Lemma~\ref{lemma:syntax-of-type-normal-forms}, a normal form starts with at
  most one negative polarity, so correction takes at most two steps, at cost
  $O(1)$.
\end{proof}

\begin{proof}[Proof of \cref{thm:complexity} (Time complexity of type equivalence)]
  A corollary of \cref{lemma:syntax-of-type-normal-forms}.
\end{proof}


\section{Proofs for expressions and processes}
\label{sec:appendix-exps-procs}
\label{sec:proofs}


\begin{figure}[t!]
  \declrel{Labelled transition system for expressions}{$\trans e \lambda e$}
  \begin{mathpar}
    \inferrule[Act-App]{}{\trans{\appe{(\abse{x}{T}{e})}{v}}{\beta}{\termc{e}\esubs{v}{x}}}
    
	\inferrule[Act-TApp]{}{\trans{\tappe{(\tabse{\alpha}{\kind}{v})}{T}}{\beta}{\termc{v}\tsubs{T}{\alpha}}}

	\inferrule[Act-Let]{}{\trans{\lete{x}{y}{{\paire uv}}{e}}{\beta}{\termc{e}\esubs{u}{x}\esubs{v}{y}}}

	\inferrule[Act-Let*]{}{\trans{\letu{\unitk}{e}}{\beta}{\termc{e}}}

    \inferrule[Act-Rec]{}{\trans{\appe{({\rece xTv})}
        u}{\beta}{\appe{(\termc{v}\esubs{\rece xTv}{x})}u}}

    \inferrule[Act-Fork]{}{\trans{\appe{\forkk}{v}}{\forkl{v}} \unitk}
	
	\inferrule[Act-New]{}{\trans{\tappe \newk T}{\newl xyT}{\paire xy}}

	\inferrule[Act-Rcv]{}{\trans{\appe{\tappe{\tappe \receivek T}{U}}{x}}{\receivel xv}{\paire{v}{x}}}
	
	\inferrule[Act-Send]{}{\trans{\appe{\appe{\tappe{\tappe \sendk T}{U}}{v}}{x}}{\sendl xv}{x}}
	
	\inferrule[Act-Match]{}{\trans{\matchwithe xCyeiI}{\receivel
        x{C_k}}{e_k\esubs x{y_k}}}

	\inferrule[Act-Sel]{}{\trans{\appe{\tappe{\selecte{C}}{\overline T}}{x}}{\sendl x{C}}{x}}
	
	\inferrule[Act-Wait]{}{\trans{\waite{x}}{\closel x}\unitk}

	\inferrule[Act-Term]{}{\trans{\closee{x}}{\openl x}{\unitk}}

	\inferrule[Act-AppL]{\trans{e_1}{\lambda}{e_2}}
	{\trans{\appe{e_1}{e_3}}{\lambda}{\appe{e_2}{e_3}}}
	
	\inferrule[Act-AppR]{\trans{e_1}{\lambda}{e_2}}
	{\trans{\appe{v}{e_1}}{\lambda}{\appe{v}{e_2}}}
	
	\inferrule[Act-TAppE]{\trans{e_1}{\lambda}{e_2}}
	{\trans{\tappe{e_1}T}{\lambda}{\tappe{e_2}T}}
	
	%

	\inferrule[Act-PairL]{\trans{e_1}{\lambda}{e_2}}
	{\trans{\paire{e_1}{e_3}}{\lambda}{\paire{e_2}{e_3}}}	
	
	\inferrule[Act-PairR]{\trans{e_1}{\lambda}{e_2}}
	{\trans{\paire{v}{e_1}}{\lambda}{\paire{v}{e_2}}}

	\inferrule[Act-MatchE]{\trans{e_1}{\lambda}{e_2}}
	{\trans{\matchwithe{e_1}CxeiI}{\lambda}{\matchwithe{e_2}CxeiI}}

	\inferrule[Act-LetE]{\trans{e_1}{\lambda}{e_2}}{\trans{\lete xy{e_1}{e_3}}{\lambda}{\lete xy{e_2}{e_3}}}

	\inferrule[Act-LetE*]{\trans{e_1}{\lambda}{e_2}}{\trans{\letu {e_1}{e_3}}{\lambda}{\letu{e_2}{e_3}}}
  \end{mathpar}
  \caption{Labelled transition system for expressions}
  \label{fig:lts-exps}
\end{figure}


\begin{lemma}\
  \label{lem:synth-against-nf}
  \begin{enumerate}
       \item\label{it:synth-nf} If $\expSynth{\Gamma_1} eT{\Gamma_2}$ and $\isCtx[\Delta]{\Gamma_1}{\kindt}$, 
       then $\isCtx[\Delta]{\Gamma_2}{\kindt}$ and $\typec T$ is in normal form.
       \item\label{it:against-nf}  If $\expAgainst{\Gamma_1} eT{\Gamma_2}$ and $\isCtx[\Delta]{\Gamma_1}{\kindt}$
       and $\typec T$ is in normal form, then $\isCtx[\Delta]{\Gamma_2}{\kindt}$.
  \end{enumerate}
  \begin{proof}\
  \begin{enumerate}
       \item By rule induction on the first hypothesis.
       \item Using \cref{it:synth-nf} and observing that a type $\typec U$ such that 
       $\typec U =_\alpha \typec T$ is also in normal form.
  \end{enumerate}         
  \end{proof}
\end{lemma}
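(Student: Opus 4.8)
The plan is to prove both items by a single mutual rule induction on the typing derivations of \cref{fig:alg-typing}. Item~(2) then falls out at once: the only check-against rule is \rulenameexpCheck, whose premise is $\expSynth{\Gamma_1}eU{\Gamma_2}$ with $\typec U =_\alpha \typec T$, so item~(1) applied to that premise already delivers $\isCtx[\Delta]{\Gamma_2}{\kindt}$ (and confirms that the hypothesis ``$\typec T$ in normal form'' is consistent, since normal-formhood is stable under $\alpha$-renaming). The work is therefore concentrated in item~(1), where for each synthesis rule I would discharge two essentially independent obligations: (a)~$\isCtx[\Delta]{\Gamma_2}{\kindt}$, and (b)~the synthesized type $\typec T$ is in normal form.

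For obligation~(a) I would split the synthesis rules into three groups according to how they touch the context. Rules \rulenameexpConst, \rulenameexpVar, \rulenameexpUVar, and \rulenameexpRec leave the context unchanged, so there is nothing to prove. Rules \rulenameexpApp, \rulenameexpPair, \rulenameexpLetUnit, \rulenameexpTApp, and \rulenameexpMatch thread the context through sub-derivations; here iterated use of the induction hypothesis (on synthesis premises) and of item~(2) (on the check-against premise of \rulenameexpApp) carries well-kindedness from $\Gamma_1$ to the output context, relying only on the elementary structural fact that leftover typing deletes entries from a context but never adds or rewrites them. Rules \rulenameexpAbs, \rulenameexpLet, and \rulenameexpMatch additionally insert a term-variable binding before a sub-derivation and require it absent from that sub-derivation's output: the inserted type is, after the normalization that the rule prescribes, a normalized type ($\Normal[+]T$ in \rulenameexpAbs, $\Normal[+]T$ and $\Normal[+]U$ in \rulenameexpLet, and the normalized materialization $\matchin$ in \rulenameexpMatch), which is well-kinded by \cref{lemma:kind-preservation} together with \cref{it:agreement-type-conv}, plus standard kind-preservation lemmas for type substitution and for materialization; hence the extended context is well-kinded, the induction hypothesis applies, and deleting the binding preserves well-kindedness. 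Finally \rulenameexpTAbs runs its sub-derivation under $\Delta,\tbind\alpha\kind$: weakening of type formation gives $\isCtx[\Delta,\tbind\alpha\kind]{\Gamma_1}{\kindt}$, the induction hypothesis gives the same for $\Gamma_2$, and since $\Gamma_2$ is a sub-context of the $\alpha$-free $\Gamma_1$ it mentions no $\alpha$, so strengthening yields $\isCtx[\Delta]{\Gamma_2}{\kindt}$.

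For obligation~(b) I would inspect the synthesized type in each rule. \rulenameexpConst returns $\typeof c$, which is in normal form by definition of $\typeofop$ (\cref{fig:type-constants}); \rulenameexpVar and \rulenameexpUVar return a type read off $\Gamma_1$, which is normal by the standing invariant that term-variable contexts contain only normalized types; \rulenameexpTApp returns $\Normal[+]{\typec U\tsubs T\alpha}$, a normal form by \cref{lemma:syntax-of-type-normal-forms}; \rulenameexpAbs and \rulenameexpRec return a function type whose components are explicitly normalized. The remaining rules build their output from types supplied by sub-derivations, which are normal by the induction hypothesis, so it suffices to observe that the normal-form grammar $\typec Q$/$\typec R$ of \cref{lemma:syntax-of-type-normal-forms} is closed under the relevant operations: the pair constructor for \rulenameexpPair; the function constructor for \rulenameexpAbs and, via its codomain as an immediate sub-term, for \rulenameexpApp; copying for \rulenameexpLetUnit; and $\typec V =_\alpha \typec V_i$ with normal $\typec V_i$ for \rulenameexpMatch.

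The mathematics is shallow, so the main obstacle is purely one of bookkeeping and hygiene. In particular, $\isCtx[\Delta]{\Gamma}{\kindt}$ on its own does not force the entries of $\Gamma$ to be normalized, yet obligation~(b) for the variable rules depends precisely on that; I would therefore carry ``every entry of $\Gamma$ is in normal form'' as part of both the hypothesis and the conclusion (it is maintained by the three context-writing rules, so this costs nothing in the other cases), or equivalently adopt the convention that all contexts appearing in the metatheory are normalized. The only other facts needed are the two small structural observations invoked above --- that leftover typing only removes bindings, and that a sub-context of an $\alpha$-free context is $\alpha$-free --- each of which is an immediate induction.
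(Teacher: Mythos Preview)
Your proposal is correct and follows the same approach as the paper: rule induction on the typing derivation for item~(1), with item~(2) reduced to item~(1) via the single rule \rulenameexpCheck and the observation that normal forms are stable under $\alpha$-equivalence. Your version is simply a detailed unfolding of the paper's one-line proof sketch (and your explicit mention of the context-normalization invariant is exactly the invariant the paper states informally in the text preceding the rules); one tiny slip is that \rulenameexpVar does not leave the context unchanged but removes an entry, though this is harmless for obligation~(a).
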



\begin{figure}[t!]
    \declrel{Labelled transition system for contexts}{$\Gamma\lts{\lambda} \Gamma$\quad $\Gamma\lts{\pi} \Gamma$}
\begin{mathpar}
    \inferrule[L-Rcv]
    {\expAgainst[\Empty]{\Gamma}{v}{T}{\Empty}}
    {\transCtx{\ebind{x}{\TIn{T}{S}}}{\receivel{x}{v}}{\Gamma,\ebind{x}{S}}}
    
    \inferrule[L-Send]
    {\expAgainst[\Empty] \Gamma v T \Empty}
    {\transCtx{\Gamma, \ebind{x}{\TOut{T}{S}}}{\sendl{x}{v}}{\ebind{x}{S}}}
    
    \inferrule[L-Match]
    {\protocolsfull}
    {\transCtx{\ebind{x}{\TIn{(\TPApp \proto{\overline{U}})}{S}}}{\receivel{x}{C_k}}{\ebind{x}{\matchin[k]}}}
    
    \inferrule[L-Sel]
    {\protocolsfull}
    {\transCtx{\ebind{x}{\TOut{(\TPApp \proto{\overline{U}})}{S'}}}{\sendl{x}{C_k}}{\ebind{x}{\matchout[k]}}}
    
    \inferrule[L-Wait]{}
    {\transCtx{\ebind{x}{\TEndW}}{\closel{x}}{\Empty}}
    
    \inferrule[L-Term]{}
    {\transCtx{\ebind{x}{\TEndT}}{\openl{x}}{\Empty}}
    
    \inferrule[L-Beta]{}
    {\transCtx{\Empty}{\beta}{\Empty}}
    
    \inferrule[L-Fork]
    {\expAgainst[\Empty]{\Gamma}{v}{\TFun[]\TUnit\TUnit}{\Empty}}
    {\transCtx{\Gamma}{\forkl{v}}{\Empty}}
    
    \inferrule[L-New]{}
    {\transCtx{\Empty}{\newl xyT}{\ebind{x}{\Normal[+] T}, \ebind{y}{\Normal[-]{T}}}}
    
    \inferrule[L-Tau]{}
    {\transCtx{\Empty}{\tau}{\Empty}}
    
    \inferrule[L-OpenL]{}
    {\transCtx{\ebind{x}{\TOut{T}{S}}}{\scopel abxa}{\ebind{b}{\Normal[-]{T}},\ebind{x}{S}}}
\quad
    \inferrule[L-OpenR]{}
    {\transCtx{\ebind{x}{\TOut{T}{S}}}{\scopel abxb}{\ebind{a}{\Normal[-]{T}},\ebind{x}{S}}}
\quad
    \inferrule[L-Par]
    {\transCtx{\Gamma_1}{\pi_1}{\Gamma_3}
    \and \transCtx{\Gamma_2}{\pi_2}{\Gamma_4}}
    {\transCtx{\Gamma_1,\Gamma_2}{\parl{\pi_1}{\pi_2}}{\Gamma_3,\Gamma_4}}
\end{mathpar}
\caption{Labelled transition system for typing contexts}
\label{fig:lts-ctx}
\end{figure}


\begin{lemma}[Properties of normalization]\
  \label{lem:norm-st}
  \begin{enumerate}
  \item\label{it:anti-plus} If $\typeSynth{\Normal[+] T} \kind$, then
    $\typeAgainst T \kind$.
  \item\label{it:anti-minus} If $\typeSynth{\Normal[-] T} \kind$, then
    $\typeAgainst T \kinds$.
  \item\label{it:anti-tosession} If $\typeSynth{\tosession TS} \kind$, then
    $\typeAgainst T \kindp$ and $\typeAgainst S \kinds$.
  \item\label{it:anti-pm} If $\typeSynth{\polarizedparam \pm T} \kind$,
    then $\typeAgainst T \kind$.
  \end{enumerate}
\end{lemma}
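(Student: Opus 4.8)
The four statements are entangled through the defining equations of \cref{fig:normalisation}: $\Normal[+]{\TIn TS}$ unfolds to $\tosession{\polarizedparam{-}{\Normal[+]T}}{\Normal[+]S}$ (symmetrically for the sending constructor), $\Normal[+]{\TDual T}$ to $\Normal[-]T$, $\Normal[+]{\pneg T}$ to $\polarizedparam{-}{\Normal[+]T}$, and $\Normal[-]{\TIn TS}$ to $\tosession{\polarizedparam{+}{\Normal[+]T}}{\Normal[-]S}$. Note, however, that $\tosessionop$ and $\polarizedparam{\pm}{\cdot}$ never recur through the normalization functions. The plan is therefore to prove \cref{it:anti-tosession} and \cref{it:anti-pm} first, in isolation, and then to prove \cref{it:anti-plus} and \cref{it:anti-minus} together by mutual induction on the structure of $\typec T$, appealing freely to the first two.

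For \cref{it:anti-tosession} I would use the case split built into $\tosessionop$. If $\typec T = \pneg U$ then $\tosession TS = \TIn US$; inverting rule \rulenametypeDotIn on the hypothesis gives $\kind = \kinds$, $\typeAgainst U\kindp$, and $\typeAgainst S\kinds$, and rule \rulenametypeMsgNeg lifts $\typeAgainst U\kindp$ to $\typeSynth{\pneg U}\kindp$, hence $\typeAgainst{\pneg U}\kindp$. Otherwise $\tosession TS = \TOut TS$ and inverting rule \rulenametypeDotOut yields everything directly. For \cref{it:anti-pm} I would induct on the number of leading reversal operators of $\typec T$: the base cases $\polarizedparam{-}{T} = \pneg T$ and $\polarizedparam{+}{T} = \typec T$ (with $\typec T \neq \pneg U$) are immediate --- the former by inverting rule \rulenametypeMsgNeg, the latter trivially --- and the recursive cases $\polarizedparam{-}{\pneg T} = \polarizedparam{+}{T}$ and $\polarizedparam{+}{\pneg T} = \polarizedparam{-}{T}$ fold back to the induction hypothesis on $\typec T$. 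Two subkinding facts absorb the kind arithmetic: $\kinds$ is least, so $\typeAgainst T\kinds$ forces $\typeSynth T\kinds$, and $\kindp$ is greatest, so $\typeAgainst T\kindp$ holds for any well-formed $\typec T$.

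With those in hand, \cref{it:anti-plus} and \cref{it:anti-minus} go by mutual structural induction on $\typec T$. The cases where positive normalization acts as a homomorphism --- the functional constructors, type variables, and protocol applications --- are routine: invert the matching rule of \cref{fig:type-formation}, apply the induction hypothesis (\cref{it:anti-plus}) to each normalized subterm, and reassemble. For $\Normal[+]{\TIn TS} = \tosession{\polarizedparam{-}{\Normal[+]T}}{\Normal[+]S}$ I would feed the hypothesis to \cref{it:anti-tosession} to obtain $\kind = \kinds$, $\typeAgainst{\polarizedparam{-}{\Normal[+]T}}\kindp$, and $\typeAgainst{\Normal[+]S}\kinds$; strip the reversal operator with \cref{it:anti-pm} to get $\typeAgainst{\Normal[+]T}\kindp$; invoke the induction hypothesis on $\typec T$ and $\typec S$ for $\typeAgainst T\kindp$ and $\typeAgainst S\kinds$; and rebuild with rule \rulenametypeDotIn. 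The sending case is symmetric; $\Normal[+]{\TDual T} = \Normal[-]T$ hands off to \cref{it:anti-minus} on $\typec T$; and $\Normal[+]{\pneg T} = \polarizedparam{-}{\Normal[+]T}$ peels the operator with \cref{it:anti-pm} and then recurs. \cref{it:anti-minus} is the mirror image: $\Normal[-]\alpha = \TDual\alpha$ is closed by inverting rule \rulenametypeDualS, $\Normal[-]\TEndW$ and $\Normal[-]\TEndT$ are handled by the end-type rules, the transmission cases split off via \cref{it:anti-tosession}, \cref{it:anti-pm}, and both induction hypotheses, and $\Normal[-]{\TDual T} = \Normal[+]T$ hands off to \cref{it:anti-plus}.

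I expect the real obstacle to be the kind bookkeeping around $\dualk$ and the reversal operator, since the directional operators can shrink the minimal kind of their argument --- for example $\polarizedparam{-}{\polarizedparam{-}{T}} = \polarizedparam{+}{T}$ discards a double reversal. The induction thus has to be organized so that every argument of $\polarizedparam{\pm}{\cdot}$ and of negative normalization that actually occurs is a genuinely protocol-kinded, respectively session-kinded, normal form rather than merely a well-formed type; this is the regime in which the lemma gets used downstream, and it can be kept in force using kind preservation (\cref{lemma:kind-preservation}) together with the reversal-involution rule \rulenametconvNegInv. Once that is threaded through, what remains is the routine inversion-and-reassembly sketched above.
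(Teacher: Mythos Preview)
Your plan mirrors the paper's proof closely. The paper runs one mutual induction over all four items and spells out a single representative case (the $\Normal[-]{\TIn TS}$ case of \cref{it:anti-minus}); you instead factor \cref{it:anti-tosession} and \cref{it:anti-pm} out as standalone facts and then run the mutual induction only on \cref{it:anti-plus} and \cref{it:anti-minus}. That reorganization is sound in principle, and your case analyses for \cref{it:anti-plus,it:anti-minus,it:anti-tosession} line up with the paper's detailed case almost step for step.

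The one place where your standalone treatment does not close is \cref{it:anti-pm}. In the recursive clause $\polarizedparam{-}{\pneg U} = \polarizedparam{+}{U}$ your induction hypothesis gives $\typeAgainst U\kind$, but reassembling $\typeAgainst{\pneg U}\kind$ via \rulenametypeMsgNeg needs $\kind = \kindp$; take $U = \TUnit$ and the hypothesis holds with $\kind = \kindt$ while the conclusion fails, so \cref{it:anti-pm} is not provable at the stated generality. Your final paragraph correctly senses this, but the fix you sketch (restricting to protocol-kinded normal forms) amounts to proving \cref{it:anti-pm} only at $\kind = \kindp$. Fortunately that is exactly what \cref{it:anti-plus} and \cref{it:anti-minus} require, since \cref{it:anti-tosession} always hands you $\kindp$ before \cref{it:anti-pm} is invoked and the goal afterwards is again $\typeAgainst T\kindp$. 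The paper's detailed case is equally loose on this point---it writes $\kindp$ throughout where strictly only some $\kind' \le \kindp$ is available after inverting \rulenametypeSub---so you are not missing anything the paper actually supplies; just be aware that \cref{it:anti-pm} should be read as the $\kindp$-instance (equivalently, as preservation of well-formedness).
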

\begin{proof}
  The proof is by mutual induction on the various hypotheses. Most cases follow
  by a straightforward induction. We detail the case for
  $\Normal[-]{\typec{\TIn TS}} = \TFOut{\Normal[+]T}{\Normal[-]S}$.
  \Cref{it:anti-tosession} gives
  $\typeAgainst {\polarizedparam +{\Normal[+]T}} \kindp$ and
  $\typeAgainst {\Normal[-]S} \kinds$. Rule \rulenametypeSub gives
  $\typeSynth {\polarizedparam +{\Normal[+]T}} \kindp$ and
  $\typeSynth {\Normal[-]S} \kinds$. \Cref{it:anti-pm} gives
  $\typeSynth {\Normal[+]T} \kindp$. Again rule \rulenametypeSub gives
  $\typeAgainst {\Normal[+]T} \kindp$. \Cref{it:anti-plus} gives
  $\typeSynth T \kindp$. Once again, rule \rulenametypeSub gives
  $\typeAgainst T \kindp$. From $\typeAgainst {\Normal[-]S} \kinds$, induction
  gives $\typeSynth S \kinds$ and \rulenametypeSub gives
  $\typeAgainst S \kinds$. Finally, rule \rulenametypeDotIn gives
  $\typeSynth{\TIn TS}\kinds$ and rule \rulenametypeSub yields
  $\typeAgainst{\TIn TS}\kinds$.
\end{proof}

\begin{lemma}[Substitution]
  \label{lem:substitution}\
    \begin{enumerate}
    \item\label{it:substitution-tt} If $\typeSynth[\Delta,\tbind\alpha\kind]{T}{\kind'}$ and $\typeAgainst{U}{\kind}$, then
      $\typeSynth{T\tsubs U\alpha}{\kind'}$.
    \item\label{it:substitution-et} If $\expSynth[\Delta,\tbind\alpha\kind]{\Gamma_1}{e}{T}{\Gamma_2}$ and $\typeAgainst{U}{\kind}$,
      then $\expSynth{\Gamma_1}{e\tsubs{U}{\alpha}}{\Normal[+]{T\tsubs{U}{\alpha}}}{\Gamma_2}$.
    \item\label{it:substitution-ee} If $\expSynth{\Gamma_1,\ebind xU}{e}{T}{\Gamma_2}$ and
      $\expAgainst{\Gamma_3}{v}{U}{\Empty}$, then
      $\expSynth{\Gamma_1,\Gamma_3}{e\esubs vx}{T}{\Gamma_2}$.
    \item\label{it:substitution-ee2} If $\expSynth{\Gamma_1,\eubind x{U}}{e}{T}{\Gamma_2,\eubind x{U}}$ and
      $\expAgainst{\Gamma_3}{v}{U}{\Empty}$, then
      $\expSynth{\Gamma_1,\Gamma_3}{e\esubs vx}{T}{\Gamma_2}$.
    \end{enumerate}
  \end{lemma}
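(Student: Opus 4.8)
The plan is to prove the four parts by induction, in the stated order, since part~\ref{it:substitution-et} invokes part~\ref{it:substitution-tt}, and parts~\ref{it:substitution-ee}, \ref{it:substitution-ee2} are largely independent. For part~\ref{it:substitution-tt} I would do rule induction on the derivation of $\typeSynth[\Delta,\tbind\alpha\kind]{T}{\kind'}$ using the rules of \cref{fig:type-formation}. Every structural rule (\rulenametypeArrow, \rulenametypePair, \rulenametypePoly, \rulenametypeDotIn, \rulenametypeDotOut, \rulenametypeDualS, \rulenametypeProtocol, \rulenametypeMsgNeg) goes through by applying the induction hypothesis to the premises and reassembling, since substitution is homomorphic in the syntactic shape (renaming $\alpha$-bound variables to avoid capture). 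The only interesting leaf is \rulenametypeVar with $T=\alpha$: then $\kind'=\kind$ and $T\tsubs U\alpha = U$, and we conclude from the hypothesis $\typeAgainst U\kind$, tracking the usual subkinding bookkeeping (the synthesized minimal kind of $U$ is $\le\kind$); the case $T=\beta\neq\alpha$ is immediate.

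For part~\ref{it:substitution-et} I would do rule induction on $\expSynth[\Delta,\tbind\alpha\kind]{\Gamma_1}{e}{T}{\Gamma_2}$ over the rules of \cref{fig:alg-typing}. Wherever a subderivation uses a type, it is discharged by part~\ref{it:substitution-tt} together with \cref{lem:norm-st}. The delicate rules are precisely the ones that normalize, namely \rulenameexpTApp, \rulenameexpAbs, \rulenameexpRec, \rulenameexpLet, and \rulenameexpMatch. For these I would first establish an auxiliary fact: type substitution commutes with normalization up to $=_\alpha$ and re-normalization, e.g. $\Normal[+]{T\tsubs U\alpha} =_\alpha \Normal[+]{(\Normal[+]T)\tsubs U\alpha}$, and $\Normal[+]{(U'\tsubs U\alpha)\tsubs{T'\tsubs U\alpha}{\alpha'}} =_\alpha \Normal[+]{(\Normal[+]{U'\tsubs{T'}{\alpha'}})\tsubs U\alpha}$. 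This follows from \cref{thm:soundness-type-equiv} and \cref{thm:completeness-type-equiv} plus closure of $\equiv$ under type substitution: from $T\equiv\Normal[+]T$ one gets $T\tsubs U\alpha\equiv(\Normal[+]T)\tsubs U\alpha$, hence equal normal forms by completeness; the nested identity additionally uses the standard substitution-commutation lemma (valid since $\alpha'\neq\alpha$ and $\alpha'\notin\FVT U$ by renaming). With this in hand, for \rulenameexpTApp one rewrites the output $\Normal[+]{U'\tsubs{T'}{\alpha'}}$, for \rulenameexpAbs/\rulenameexpRec/\rulenameexpLet one checks that the normalized context entry $\Normal[+]{V}\tsubs U\alpha$ re-normalizes to the required entry, and for \rulenameexpMatch one additionally pushes the substitution through the materialization and directional operators of \cref{fig:type-polarized-operator}, which commutes because $\alpha$ has kind $\kindp$ and substitution can neither create nor cancel a leading ``$-$'' once we re-normalize.

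Parts~\ref{it:substitution-ee} and~\ref{it:substitution-ee2} are the standard substructural substitution arguments, by rule induction on the typing of $e$, with the context-with-leftovers discipline doing the bookkeeping. For part~\ref{it:substitution-ee}, a linear binding $\ebind xU$ is threaded along exactly one branch of the context flow and consumed at exactly one leaf (an instance of \rulenameexpVar); I would splice the derivation of $\expAgainst{\Gamma_3}{v}{U}{\Empty}$ in at that leaf and re-thread $\Gamma_3$ as though it had been part of $\Gamma_1$ throughout, with the substitution vacuous everywhere else. For part~\ref{it:substitution-ee2} the bound variable is unrestricted (introduced by \rulenameexpRec, read by \rulenameexpUVar, which copies it to the output), so $x$ may occur zero or more times; each occurrence is replaced by a fresh copy of the derivation of $v$, which is sound because an unrestricted hypothesis carries no linear resources and the recursive value $v$ itself consumes none ($\Gamma_3$ being linear-resource-free). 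In both cases the normal-form invariant on contexts is preserved since $v$ already has its type $U$ in normal form.

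I expect the main obstacle to be the commutation lemma underlying part~\ref{it:substitution-et}: normal forms are not stable under substitution — instantiating a protocol variable with $\TMinus{T'}$, or a session-type variable, can denormalize — so the proof cannot rely on a syntactic commutation and must instead route everything through re-normalization and the soundness/completeness of normalization. The \rulenameexpTApp and \rulenameexpMatch cases, where the substitution interacts simultaneously with normalization, materialization, and the $\dualk$/``$-$'' operators, are where this difficulty concentrates.
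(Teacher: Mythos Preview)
Your proposal is correct and follows the same high-level approach as the paper: (mutual) rule induction on the first hypothesis of each item. The paper's own proof is literally one sentence (``The proof follows by mutual induction on the first hypothesis''), so your case analysis---and in particular the normalization/substitution commutation lemma you isolate for part~\ref{it:substitution-et}, routed through \cref{thm:soundness-type-equiv,thm:completeness-type-equiv}---goes well beyond what the paper supplies, but is exactly the work that needs to be done to make the induction go through at \rulenameexpTApp, \rulenameexpAbs, \rulenameexpLet, and \rulenameexpMatch.
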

  \begin{proof}
    The proof follows by mutual induction on the first hypothesis.
  \end{proof}

\begin{lemma}[Agreement]
  \label{lem:agreement}
  Let $\isCtx{\Gamma_1}{\kindt}$.
  \begin{enumerate}
  \item\label{it:agreement-expL} If $\expSynth{\Gamma_1,\ebind xT}{e}{U}{\Gamma_2}$, then $\typeAgainst{T}{\kind}$.
  \item \label{it:agreement-expR} If $\expSynth{\Gamma_1}{e}{T}{\Gamma_2}$, then $\typeAgainst{T}{\kind}$.
  \item\label{it:agreement-proc} If $\isProc[\Gamma_1,\ebind xT] p$, then $\typeAgainst[\Empty]{T}{\kind}$.
\end{enumerate}
\end{lemma}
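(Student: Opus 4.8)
The plan is to prove the three statements by a single simultaneous rule induction: parts~\ref{it:agreement-expL} and~\ref{it:agreement-expR} by induction on the derivation of the synthesis judgment, and part~\ref{it:agreement-proc} by an outer induction on the process typing derivation that ultimately appeals to part~\ref{it:agreement-expR}. The two expression statements have to be handled together because the typing rules simultaneously consume and introduce context entries: in the inductive step for a rule we will need, on a strict sub-derivation, both that a synthesized type is well-formed (part~\ref{it:agreement-expR}) and that a type recorded for a variable in the context is well-formed (part~\ref{it:agreement-expL}); since the induction is on derivation structure, both are available there. Part~\ref{it:agreement-expL} is the bookkeeping half: the entry $\ebind xT$ in question is either consumed outright by \rulenameexpVar --- where $\typeAgainst T\kindt$ is a premise --- or passed into a sub-derivation, where the induction hypothesis applies.

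The core mechanism for part~\ref{it:agreement-expR} is inversion on type formation followed by reconstruction: for each expression rule I would invert the type-formation rule on the (by induction hypothesis well-formed) type synthesized by each premise, extract well-formedness of the relevant component, and rebuild the conclusion's type with the matching rule of \cref{fig:type-formation} --- e.g.\ for \rulenameexpApp, a well-formed $\TFun[]TU$ gives $\typeAgainst U\kindt$ by inversion on \rulenametypeArrow; for \rulenameexpLet, a well-formed $\TPair TU$ gives $\typeAgainst T\kindt$ and $\typeAgainst U\kindt$; and so on. Two auxiliary facts are used throughout. First, \cref{lem:synth-against-nf} ensures synthesis preserves well-formedness of contexts, so the leftover contexts threaded through the premises stay well-formed and the induction hypothesis keeps applying. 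Second, normalization preserves kinds: combining \cref{lemma:kind-preservation} with the agreement lemma for type conversion (\cref{it:agreement-type-conv}) yields $\typeAgainst T\kind\Rightarrow\typeAgainst{\Normal[+]T}\kind$ and the analogous statement for $\Normal[-]{\cdot}$ on session types, which is exactly what the rules that store normalized types into the context or emit a normalized output type need (\rulenameexpAbs, \rulenameexpLet, \rulenameexpTApp, \rulenameexpRec); for \rulenameexpRec, well-formedness of the ascribed arrow type is recovered by applying the induction hypothesis to the function body, whose incoming context is known well-formed.

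The step I expect to be the main obstacle is \rulenameexpMatch (and, symmetrically, checking well-formedness of the type of $\selectk$ in \cref{fig:type-constants}), because the type $\matchin$ assigned to each pattern variable is obtained by materializing a substituted sequence of protocol arguments. Here I would chain: (i) the induction hypothesis on the scrutinee gives $\TIn{(\TPApp \proto {\overline U})}{S}$ well-formed; (ii) inversion via \rulenametypeDotIn and then \rulenametypeProtocol yields $\typeAgainst{\overline U}{\overline\kindp}$ and $\typeAgainst S\kinds$; (iii) feeding $\overline U$ into the well-formedness premises of the declaration of $\proto$, which instantiate its parameters $\overline\alpha$, via an iterated use of the substitution lemma \cref{lem:substitution} gives $\typeAgainst{\overline{T_i}\tsubs{\overline U}{\overline\alpha}}{\kindp}$; (iv) the directional operators and the materialization metafunction preserve well-formedness --- the forward direction is immediate from their defining equations in \cref{fig:normalisation} together with \rulenametypeMsgNeg, \rulenametypeDotIn and \rulenametypeDotOut, while \cref{lem:norm-st} supplies the converse --- so $\typeAgainst{\matchin}{\kinds}$; (v) since $\kinds\leq\kindt$ and the incoming context is well-formed (by \cref{lem:synth-against-nf}), the context extended with $\ebind{x_i}{\matchin}$ is well-formed, so the induction hypothesis applies to each branch body and all the $\typec{V_i}$ --- hence $\typec V$, which is $\alpha$-equal to each $\typec{V_i}$ --- are well-formed.

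Finally, part~\ref{it:agreement-proc} is handled by induction on the process typing derivation: \textsc{P-Exp} inverts \rulenameexpCheck to a synthesis judgment and invokes part~\ref{it:agreement-expR}; \textsc{P-Par} splits the context and recurses into whichever component carries the entry $\ebind xT$; and \textsc{P-New} either hits one of the freshly added entries $\ebind x{\Normal[+]{T}}$ or $\ebind y{\Normal[-]{T}}$ --- both well-formed because $\typeAgainst[\Empty]T\kinds$ is a premise and normalization preserves kinds in both polarities (via \cref{lemma:kind-preservation}) --- or hits an entry already present in the surrounding context, in which case the induction hypothesis for part~\ref{it:agreement-proc} applies to the inner process.
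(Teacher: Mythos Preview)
Your approach matches the paper's: rule induction on the synthesis derivation for parts~\ref{it:agreement-expL} and~\ref{it:agreement-expR} (the paper cites \cref{lem:substitution} and \cref{it:agreement-type-conv} for part~\ref{it:agreement-expR}, both of which appear in your argument), and induction on the process typing derivation for part~\ref{it:agreement-proc}. Your write-up supplies considerably more detail than the paper's one-line sketches, in particular the chain of inversions and the use of \cref{lem:substitution} for the \rulenameexpMatch case.

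One slip to fix: in the \textsc{P-Exp} case of part~\ref{it:agreement-proc} you say you invoke part~\ref{it:agreement-expR}, but that part concerns the \emph{synthesized output type}, whereas here you need well-formedness of an entry in the \emph{incoming} context. After inverting \rulenameexpCheck on $\expAgainst[\Empty]{\Gamma_1,\ebind xT} e {\TUnit} {\Empty}$ you have a synthesis judgment with input context $\Gamma_1,\ebind xT$, and the goal is $\typeAgainst[\Empty] T\kind$ for the context entry $\typec T$; that is exactly part~\ref{it:agreement-expL}, which is also what the paper invokes at this point.
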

\begin{proof}\
  \begin{enumerate}
    \item By rule induction on the first hypothesis.
    \item By rule induction on the first hypothesis, using~\cref{lem:substitution} and \cref{it:agreement-type-conv}.
    \item By rule induction on the first hypothesis, using \rulenameexpCheck and~\cref{it:agreement-expL}.
  \end{enumerate}
\end{proof}

Monotonicity tells us that resources are not created during the process of type
checking.

\begin{lemma}[Monotonicity]
  \label{lem:monotonicity}
  If $\expSynth{\Gamma_1} vT {\Gamma_2}$, then $\Gamma_2\subseteq\Gamma_1$.
\end{lemma}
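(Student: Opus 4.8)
The plan is to prove a slightly more general statement by simultaneous rule induction on the two algorithmic typing judgments: whenever $\expSynth{\Gamma_1}{e}{T}{\Gamma_2}$ or $\expAgainst{\Gamma_1}{e}{T}{\Gamma_2}$ holds for an \emph{arbitrary} expression $\termc e$ (not just a value $\termc v$), we have $\Gamma_2\subseteq\Gamma_1$, reading $\subseteq$ as inclusion of the underlying sets of bindings. The stated lemma is the instance $\termc e=\termc v$.

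First I would dispatch the leaves and the ``pass-through'' rules. Rules \rulenameexpConst and \rulenameexpUVar copy the input context to the output, so $\Gamma_2=\Gamma_1$; rule \rulenameexpVar drops exactly the linear binding $\ebind xT$, so $\Gamma_2\subseteq\Gamma_1$. Rule \rulenameexpRec has equal incoming and outgoing contexts, so it is immediate. Rules \rulenameexpTAbs and \rulenameexpTApp leave the term context untouched, and rule \rulenameexpCheck inherits the outgoing context of its synthesis premise, so in all three the claim follows directly from the induction hypothesis.

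The remaining cases are the context-threading rules \rulenameexpApp, \rulenameexpPair, \rulenameexpLetUnit, \rulenameexpLet, \rulenameexpAbs and \rulenameexpMatch; throughout, the fact used repeatedly is transitivity of $\subseteq$. For \rulenameexpApp, \rulenameexpPair and \rulenameexpLetUnit one simply chains the induction hypotheses on the premises. For \rulenameexpAbs, \rulenameexpLet and \rulenameexpMatch the body (or each branch) is typed in the input context extended by the freshly bound variables; the induction hypothesis bounds the outgoing context by that extension, and the side conditions $\termc x\not\in\Gamma_2$ for \rulenameexpAbs, $\termc x,\termc y\not\in\Gamma_3$ for \rulenameexpLet, and $\termc{x_i}\not\in\Gamma_i$ for \rulenameexpMatch let us delete those bindings, after which transitivity with the earlier premise closes the case; for \rulenameexpMatch one additionally uses that all branches agree (up to $\alpha$-equivalence) on their outgoing contexts.

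The only genuine subtlety --- and the reason the induction must be phrased on arbitrary expressions rather than on the structure of $\termc v$ --- is the abstraction case: $\abse xTe$ is a value, but its body $\termc e$ is an arbitrary expression, so the induction hypothesis has to be available for non-values. Once the statement is set up as an induction on the typing derivation this is automatic, and everything else is routine bookkeeping with finite sets of bindings; I do not expect any real obstacle.
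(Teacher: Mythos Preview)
Your proof is correct and takes the same approach as the paper: rule induction on the typing derivation. The paper's sketch only lists the rules that arise when typing values (\rulenameexpConst, \rulenameexpVar, \rulenameexpAbs, \rulenameexpTAbs, \rulenameexpPair, \rulenameexpTApp) and glosses over exactly the \rulenameexpAbs subtlety you identify, so your explicit generalization to arbitrary expressions is a worthwhile tightening of the argument.
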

\begin{proof}
  A straightforward rule induction on the hypothesis. The base cases are E-Const
  and E-Var. The cases for E-Abs, E-TAbs, E-Pair and E-TApp follow by induction
  and transitivity of the subset relation.
\end{proof}

\begin{lemma}[Strengthening]\
  \label{lem:strengthening}
  \begin{enumerate}
  \item If $\typeSynth[\Delta, \tbind{\alpha}{\kind}]{T}{\kind'}$ and $\typec \alpha\not\in \FVT{\typec T}$,
    then $\typeSynth[\Delta]{T}{\kind'}$.
  \item
    If $\expSynth{\Gamma_1,\ebind xU}{e}{T}{\Gamma_2,\ebind xU}$, then
    $\expSynth{\Gamma_1}{e}{T}{\Gamma_2}$.
  \end{enumerate}
\end{lemma}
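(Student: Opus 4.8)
The plan is to prove both items by rule induction on the given derivation, after generalizing each inductive hypothesis so that the discarded binding need not be the rightmost context entry --- this is harmless since neither the type-variable context nor the term-variable context has a binding that depends on a later one, so exchange is admissible and the induction can peel off an arbitrary middle binding.

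For item~1 I would prove the slightly stronger claim covering both type-formation judgements simultaneously: if $\typeSynth[\Delta,\tbind\alpha\kind]{T}{\kind'}$ and $\typec\alpha\notin\FVT{\typec T}$ then $\typeSynth[\Delta]{T}{\kind'}$, and likewise for $\typeAgainst[\Delta,\tbind\alpha\kind]{T}{\kind'}$. Because every check-against derivation ends in \rulenametypeSub over a synthesis premise, the against-case follows at once from the synthesis-case, so only the synthesis rules really need attention. The only rule that consults a type-variable binding is \rulenametypeVar; there $\typec T=\typec\beta$ with $\beta\ne\alpha$ (since $\alpha\notin\FVT{\typec\beta}$), so $\tbind\beta{\kind'}\in\Delta$ already. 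Rule \rulenametypeProtocol consults a protocol-name binding, a different sort of entry than $\tbind\alpha\kind$, so it is still present in $\Delta$, and the freeness hypothesis carries over to each parameter. The binder rule \rulenametypePoly extends the context by a fresh type variable, which (after $\alpha$-renaming) differs from $\alpha$ and does not capture it, so the generalized IH applies under the extended context and we reapply the rule. The remaining rules (\rulenametypeArrow and the other constructors) are congruences: the freeness hypothesis splits over the immediate subterms and the IH applies componentwise.

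For item~2 I would do rule induction on $\expSynth{\Gamma_1,\ebind xU}{e}{T}{\Gamma_2,\ebind xU}$, again reducing the check-against auxiliary judgement to synthesis via \rulenameexpCheck. Two structural facts drive every case. First, linear entries are only threaded or consumed by the algorithmic rules and never created, so a linear entry present in both the overall input and the overall output context of $e$ is present in \emph{every} context threaded through the derivation of $e$ --- the expression-level analogue of Monotonicity (\cref{lem:monotonicity}), provable by an easy simultaneous induction. Second, every linear variable introduced by a binder rule ($y$ in \rulenameexpAbs, \rulenameexpLet, \rulenameexpMatch) is required to be absent from that rule's output context, and the variable eliminated by \rulenameexpVar is consumed; since $\ebind xU$ survives to the output, it is distinct from all of these. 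Given these, each inductive step is mechanical: rewrite every context in the premises as $\Gamma'_j,\ebind xU$, invoke the IH on each premise to delete $\ebind xU$, observe that the type-formation and normalization side-conditions mention only $\Delta$ and are untouched, and reapply the rule. The axioms \rulenameexpConst, \rulenameexpVar, \rulenameexpUVar are immediate once $x$ is seen to differ from the variable at play (for \rulenameexpUVar and \rulenameexpRec, because $x$ is linear while their variable is unrestricted).

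The main obstacle is the first structural fact in item~2: establishing that the strengthened linear binding genuinely persists through every intermediate threaded context --- e.g.\ the $\Gamma_2$ linking $e_1$ to $e_2$ in \rulenameexpApp, or the $\Gamma_2$ feeding all branches in \rulenameexpMatch. This needs a small monotonicity-style invariant for the algorithmic typing relation, generalizing \cref{lem:monotonicity} from values to arbitrary expressions; with it in hand the rest of both items is routine context bookkeeping, the only other cosmetic point being the permutation of context entries into a canonical position before each appeal to the IH, which the exchange remark handles uniformly.
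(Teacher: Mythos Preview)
Your proposal is correct and follows the same approach as the paper, which simply says ``by induction on the first hypothesis'' for both items (mentioning \rulenametypeSub for item~1). Your write-up is far more detailed than the paper's one-line sketch; in particular, your observation that item~2 needs a monotonicity-style fact for arbitrary expressions (so that the surviving binding $\ebind xU$ can be located in every intermediate threaded context) is a genuine detail that the paper's proof elides, but it is exactly the right thing to check and is routine to establish.
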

\begin{proof}\
\begin{enumerate}
  \item The proof is by induction hypothesis on the first hypothesis, using \rulenametypeSub.
  \item The proof is by induction hypothesis on the first hypothesis.
\end{enumerate}  
\end{proof}

\begin{lemma}[Weakening]\
  \label{lem:weakening}
    If $\expSynth{\Gamma_1}{e}{T}{\Gamma_2}$, then
    $\expSynth{\Gamma_1,\ebind xU}{e}{T}{\Gamma_2,\ebind xU}$.
\end{lemma}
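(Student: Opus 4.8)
The plan is to prove \cref{lem:weakening} by mutual rule induction on the typing derivations, strengthening the statement so that it simultaneously covers the checking judgement: if $\expAgainst{\Gamma_1}{e}{T}{\Gamma_2}$ then $\expAgainst{\Gamma_1,\ebind xU}{e}{T}{\Gamma_2,\ebind xU}$. This mutual form is needed because several synthesis rules (\rulenameexpLetUnit, \rulenameexpApp, \rulenameexpRec) have checking premises. The checking case is discharged immediately: a derivation of $\expAgainst{\Gamma_1}{e}{T}{\Gamma_2}$ ends in \rulenameexpCheck with premise $\expSynth{\Gamma_1}{e}{T'}{\Gamma_2}$ and $\typec T =_\alpha \typec{T'}$; applying the induction hypothesis to that premise and re-applying \rulenameexpCheck (the $\alpha$-equivalence side condition is untouched by the extra entry) gives the result.

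For the synthesis judgement I would proceed by cases on the last rule. The rules that merely thread contexts left to right --- \rulenameexpConst, \rulenameexpVar, \rulenameexpUVar, \rulenameexpApp, \rulenameexpPair, \rulenameexpLetUnit, \rulenameexpTApp, \rulenameexpTAbs --- are routine: the new entry $\ebind xU$ is copied to the output context (\rulenameexpConst, \rulenameexpUVar), survives in the residual context (\rulenameexpVar), or is carried along from the input of one premise, through the induction hypothesis, to the input of the next, and finally to the conclusion's output context. Here I rely on the standard convention that typing contexts are identified up to reordering, so that the weakening entry can be kept at the right end throughout. Rule \rulenameexpRec is the one case where the incoming and outgoing contexts coincide; adding $\ebind xU$ to both preserves this, and its (checking) premise is handled by the induction hypothesis after $\alpha$-renaming the recursive binder away from $x$.

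The only genuine care is required for the rules that extend the context with fresh binders and impose a side condition that those binders not occur in the outgoing context --- \rulenameexpAbs, \rulenameexpLet, and \rulenameexpMatch. By $\alpha$-conversion I may assume the variables bound there are distinct from $x$; then the induction hypothesis yields derivations whose outgoing contexts have the shape $\Gamma_i,\ebind xU$, and since $x$ differs from the fresh binders the conditions $\termc{x_i}\notin\Gamma_i$ (and, in \rulenameexpMatch, the requirements $\typec{V} =_\alpha \typec{V_i}$ and $\Gamma_3 =_\alpha \Gamma_i$, which are unaffected by appending a common entry) continue to hold verbatim. I expect this freshness bookkeeping, together with the mild context-as-multiset convention just mentioned, to be the main obstacle; it is entirely routine and no new metatheoretic ingredient is needed.
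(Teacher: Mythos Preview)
Your proposal is correct and follows essentially the same approach as the paper, which dispatches the lemma in a single line: ``The proof follows by induction on the hypothesis.'' Your version is simply a more explicit elaboration of that rule induction, correctly noting the need to treat the checking judgement simultaneously (because of the checking premises in \rulenameexpApp, \rulenameexpLetUnit, and \rulenameexpRec) and correctly handling the freshness side conditions in \rulenameexpAbs, \rulenameexpLet, and \rulenameexpMatch via $\alpha$-renaming.
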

\begin{proof}
  The proof follows by induction on the hypothesis.
\end{proof}


\paragraph{Preservation and progress} We now prove the main results.

\begin{proof}[Proof of \cref{thm:preservation} (Preservation)]\

  The proof follows by mutual rule induction on the first hypothesis. 
  \begin{enumerate}
    \item 
    Rule Act-App. In this case, $\trans{\appe{(\abse{x}{U}{e})}{v}}{\beta}{\termc{e}\esubs{v}{x}}$
    and $\Gamma_2 = \Gamma_2' = \Empty$. Inversion of \rulenameexpApp on 
    $\expSynth{\Gamma_1,\Gamma_3}{\appe{(\abse{x}{U}{e})}{v}}{\typec T}{\Gamma_3}$
    yields (a) $\expSynth{\Gamma_1,\Gamma_3}{\abse{x}{U}{e}}{\TFun[]{V}{T}}{\Gamma'}$
    and
    (b) $\expAgainst{\Gamma'}vV{\Gamma_3}$. By monotonicity (\cref*{lem:monotonicity}), we 
    know that $\Gamma_3\subseteq \Gamma'$; say that $\Gamma' = \Gamma_1',\Gamma_3$.
    Applying inversion of \rulenameexpAbs on (a) and
    strengthening (\cref{lem:strengthening}) on (b), we get 
    $\expSynth{\Gamma_1,\Gamma_3,\ebind x{\Normal[+]U}}{e}{T}{\Gamma_1',\Gamma_3}$ and
    $\expAgainst{\Gamma_1'}{v}{V}{\Empty}$ (and $\typec V = \Normal[+] U$). By \rulenametconvReflex and
    agreement (\cref{lem:agreement}) on the second hypothesis, we know that $\isEquiv TT$.
    Conclude applying the substituion lemma 
    (\cref{lem:substitution},~\cref{it:substitution-ee}) and weakening (\cref{lem:weakening}).
    
    Rule Act-TApp. Again, $\Gamma_2 = \Gamma_2' = \Empty$. Apply inversion  of \rulenameexpTApp on the 
    second hypothesis, followed by inversion of \rulenameexpTAbs and conclude 
    with~\cref{lem:substitution},~\cref{it:substitution-et}, using $\typec {T'} = \typec T$.
    
    Rule Act-Let. Apply inversion of \rulenameexpLet to the second hypothesis, followed by inversion of \rulenameexpPair,
    then apply substitution (\cref{lem:substitution},~\cref{it:substitution-ee}) and conclude using $\typec {T'} = \typec T$.
    
    Rule Act-Let*. In this case, $\trans{\letu{\unitk}{e}}{\beta}{\termc{e}}$ and $\Gamma_2 = \Gamma_2' = \Empty$.
    Using inversion of \rulenameexpConst and \rulenameexpCheck for $\unitk$, we know that premises 
    to \rulenameexpLetUnit include 
    $\expSynth{\Gamma_1,\Gamma_3}{e}{\typec T}{\Gamma_3}$, which is what we want to prove, since $\isEquiv TT$.
    
    Rule Act-Rec. Again, $\Gamma_2 = \Gamma_2' = \Empty$. Premises to rule \rulenameexpApp 
    are $\expSynth{\Gamma_1,\Gamma_3}{{\rece xVv}}{\TFun[]{U}{T}}{\Gamma_1,\Gamma_3}$
    and $\expAgainst{\Gamma_1,\Gamma_3}uU{\Gamma_3}$. Applying inversion of \rulenameexpRec on the former, conclude 
    that $\Normal[+]V= \TFun[]{U}{T}$ and then apply inversion of \rulenameexpCheck and the substitution lemma (\cref{lem:substitution}) to 
    get $\expSynth{\Gamma_1,\Gamma_3}{v\esubs{\rece xVv}x}{\TFun[]{U}{T}}{\Gamma_1,\Gamma_3}$. Conclude
    applying \rulenameexpApp and using the fact that $\isEquiv TT$.

    Rule Act-Fork. 
    Inversion of \rulenameexpApp
    yields $\expSynth{\Gamma_1,\Gamma_2,\Gamma_3}{\forkk}{\TFun[]UT}{\Gamma'}$ and
    $\expAgainst{\Gamma'}vU{\Gamma_3}$. Applying \rulenameexpConst on the former we know that
    $\Gamma' = \Gamma_1,\Gamma_2,\Gamma_3$, $\typec U= \TFun[]{\TUnit}{\TUnit}$ and $\typec T=\TUnit$.
    By L-Fork we know that $\expAgainst{\Gamma_2}v{\TFun[]{\TUnit}{\TUnit}}{\Empty}$
    and $\Gamma_1 = \Empty$ in this case. By applying \rulenameexpConst for $\unitk$
    we conclude that $\expSynth{\Gamma_3}\unitk\TUnit{\Gamma_3}$, as we wanted.

    Rule Act-New. In this case, $\trans{\tappe \newk U}{\newl xyU}{\paire xy}$.
    By L-New, we know that $\Gamma_2 = \Empty$ and $\Gamma_2' = \ebind{x}{\Normal[+]{U}}, \ebind{y}{\Normal[-]{U}}$.
    Inversion of \rulenameexpTApp and \rulenameexpConst on the second hypothesis yields
    $\typec T=\TPair{\Normal[+]U}{\Normal[-]{U}}$ and $\Gamma_1 = \Empty$. Using 
    \rulenameexpVar and \rulenameexpPair we conclude that 
    $\expSynth{\Gamma_3,\ebind{x}{\Normal[+]{U}},\ebind{y}{\Normal[-]{U}}}{\paire{x}{y}}{\TPair{\Normal[+]{U}}{\Normal[-]{U}}}{\Gamma_3}$.
    
    Rule Act-Rcv. In this case, $\trans{\appe{\tappe{\tappe \receivek U}{V}}{x}}{\receivel xv}{\paire{v}{x}}$.
    By L-Rcv, $\Gamma_2 = \ebind{x}{\TIn{U}{S}}$ and $\Gamma_2' = \ebind{x}{S}, \Gamma$, 
    where $\Gamma$ is such that $\expAgainst[\Empty]{\Gamma}{v}{U}{\Empty}$. 
    Inversion of \rulenameexpApp on the second hypothesis yields 
    $\expSynth{\Gamma_1,\ebind{x}{\TIn{U}{S}}, \Gamma_3}{\tappe{\tappe \receivek U}{V}}{\TFun[]{W}{T}}{\Gamma_1'}$ and 
    $\expAgainst{\Gamma_1'}{x}{W}{\Gamma_3}$. Inversion of \rulenameexpVar on the latter
    yields $\Gamma_1' = \ebind{x}{\TIn{U}{S}}, \Gamma_3$ and $\typec{W}=\TIn{U}{S}$.
    Inversion of \rulenameexpTApp applied twice, followed by \rulenameexpConst enables us to 
    conclude that $\Gamma_1 = \Empty$ and $\typec{T}=\TPair{U}{S}$. Using L-Rcv, \rulenameexpCheck and
    weakening we know that $\expSynth{\ebind{x}{S},\Gamma,\Gamma_3}vU{\ebind{x}{S},\Gamma_3}$, using \rulenameexpVar
    and weakening we get $\expSynth{\ebind{x}{S},\Gamma_3}xS{\Gamma_3}$. Conclude applying \rulenameexpPair and
    recalling that $\isEquiv{\TPair{U}{S}\,\blk{=}\,T}{T}$.

    Rule Act-Send. In this case, $\trans{\appe{\appe{\tappe{\tappe \sendk U}{S}}{v}}{x}}{\sendl xv}{x}$.
    By L-Send, $\Gamma_2 = \Gamma, \ebind{x}{\TOut{U}{S}}$ where $\Gamma$ is 
    such that $\expAgainst[\Empty] \Gamma v U \Empty$. Proceed similarly to the previous case, 
    applying inversion of \rulenameexpApp twice, followed by \rulenameexpTApp and then 
    inversion of \rulenameexpConst over $\sendk$ to conclude that $\typec S=\typec T$. On the other hand, invert 
    \rulenameexpVar and L-Send with weakening applied to the other premises to conclude that 
    $\Gamma_1 = \Empty$. 
    The result follows applying \rulenameexpVar and weakening with $\Gamma_2' = \ebind{x}{T}$,
    and knowing that $\isEquiv TT$.

    Rule Act-Match. In this case, $\trans{\matchwithe xCyeiI}{\receivel
    x{C_k}}{e_k\esubs x{y_k}}$. By L-Match, we have $\Gamma_2 = \ebind{x}{\TIn{(\TPApp P{\overline{U}})}{S'}}$
    for $\protocolsfull$. Premises to \rulenameexpMatch include
    (a) $\expSynth {\Gamma_1,\ebind{x}{\TIn{(\TPApp P{\overline{U}})}{S'}}, \Gamma_3} x {\TIn{(\TPApp P{\overline{U}})}{S'}} {\Gamma_1,\Gamma_3}$,
    (b) $\expSynth {\Gamma_1,\Gamma_3,\ebind{y_i}{\matchin}} {e_i} {V_i} {\Gamma_3}$ and
    (c) ${T}=_\alpha{V_i}$.
    By L-Match we know that $\Gamma_2' = \ebind{x}{\matchin[k]}$.
    Applying \rulenameexpVar and \rulenameexpCheck we get (d) $\expAgainst{\ebind{x}{\matchin[k]}}x{\matchin[k]}{\Empty}$.
    Conclude applying the substitution lemma (\cref*{lem:substitution}) to (b) and (d). 

    Rule Act-Sel. We have $\trans{\appe{\tappe{\selecte{C}}{\overline U}}{x}}{\sendl x{C}}{x}$.
    Using L-Sel, we know that
    $\Gamma_2 = \ebind{x}{\TOut{(\TPApp P{\overline{U}})}{S'}}$ for $\protocolsfull$.
    Use inversion of \rulenameexpApp and \rulenameexpTApp on the second hypothesis to conclude that 
    $\typec T = \TFOut{\overline{T_k}\tsubs{\overline U}{\overline\alpha}}{S'}$. Using inversion of 
    \rulenameexpVar conclude that $\Gamma_1 = \Empty$. Finally, use \rulenameexpVar
    and weakening (\cref{lem:weakening}) to get 
    $\expSynth{\ebind{x}{\matchout[k]},\Gamma_3}x{\TFOut{\overline{T_k}\tsubs{\overline U}{\overline\alpha}}{S'}}{\Gamma_3}$.
    Conclude observing that 
    $\isEquiv{\TFOut{\overline{T_k}\tsubs{\overline U}{\overline\alpha}}{S'} \,\blk{=}\, \typec T}{T}$.

    Rule Act-Wait. In this case, $\Gamma_2 = \ebind{x}{\TEndW}$ and $\Gamma_2' = \Empty$.
    Inversion of \rulenameexpApp on the second hypothesis yields 
    $\expSynth{\Gamma_1,\ebind{x}{\TEndW},\Gamma_3}{\waitk}{\TFun[] UT}{\Gamma'}$
    and $\expAgainst{\Gamma'}xU{\Gamma_3}$. Apply \rulenameexpConst to the former 
    and \rulenameexpVar to the latter to conclude that $\Gamma_1=\Empty$, $\typec{U} = \TEndW$
    and $\typec{T}=\TUnit$. The result follows using \rulenameexpConst on $\unitk$ and weakening (\cref*{lem:weakening}),
    using $\typec{T'} = \typec   T$.

    Rule Act-Term is similar.

    Rule Act-AppR. In this case, $\trans{\appe{v}{e_1}}{\lambda}{\appe{v}{e_2}}$.
    Using the second hypothesis
    and rule \rulenameexpApp
    we know that 
    (a) $\expSynth{\Gamma_1,\Gamma_2,\Gamma_3}{v}{\TFun[]{U}{T}}{\Gamma_1',\Gamma_2,\Gamma_3}$
    and (b) $\expAgainst{\Gamma_1',\Gamma_2,\Gamma_3}{e_1}{U}{\Gamma_3}$.
    By induction hypothesis on~\cref*{it:pres-against}, using (b), we get 
    $\expAgainst{\Gamma_1',\Gamma_2',\Gamma_3}{e_2}{U}{\Gamma_3}$. Applying weakening and strengthening 
    to (a) we get $\expSynth{\Gamma_1,\Gamma_2',\Gamma_3}{v}{\TFun[]{U}{T}}{\Gamma_1',\Gamma_2',\Gamma_3}$
    Apply E-App to conclude that $\expSynth{\Gamma_1,\Gamma_2',\Gamma_3}{\appe{v}{e_2}}{T}{\Gamma_3}$.

    Rules Act-AppL, Act-TAppE, Act-LetE, Act-LetE*, Act-PairL, Act-PairR, Act-MatchE are similar.
    \item  
    Inversion of \rulenameexpCheck on the second hypothesis yields 
    $\expSynth{\Gamma_1,\Gamma_2,\Gamma_3}{e_1}{U}{\Gamma_3}$ and ${U}=_\alpha{T}$.
    By induction hypothesis on~\cref*{it:pres-synth} we know that 
    $\expSynth{\Gamma_1,\Gamma_2',\Gamma_3}{e_2}{U'}{\Gamma_3'}$ with 
    $\isEquiv{U'}{U}$. By \rulenametconvSymm
    and \rulenametconvTrans we know that $\isEquiv{T}{U'}$. Using \rulenameexpCheck
    we conclude that $\expAgainst{\Gamma_1,\Gamma_2',\Gamma_3}{e_2}{T}{\Gamma_3'}$.
    \item  Rule Act-Session. Inversion of Act-Session yields $\trans{\termc{e_1}}{\sigma}{\termc{e_2}}$
    and inversion of P-Exp on the second hypothesis implies $\expAgainst[\Empty]{\Gamma_1,\Gamma_2} {\termc{e_1}} {\TUnit} {\Empty}$.
    Applying induction hypothesis on~\cref{it:pres-against} we get 
    $\expAgainst[\Empty]{\Gamma_1,\Gamma_2'} {\termc{e_2}} {\TUnit} {\Empty}$. Conclude applying P-Exp.
    
    Rule Act-Beta is similar, considering $\Gamma_2 = \Gamma_2' = \Empty$, as prescribed by L-Tau and L-Beta.

    Rule Act-Fork. Inversion of Act-Fork on the first hypothesis and
    of P-Exp on the second hypothesis 
    yields (a) $\termc{e_1}\lts{\forkl{v}}{\termc{e_2}}$ and
    (b) $\expAgainst[\Empty]{\Gamma_1} {\termc{e_1}} {\TUnit} {\Empty}$.
    By L-Fork we know that (c) $\transCtx{\Gamma_1}{\forkl{v}}{\Empty}$
    and (d) $\expAgainst[\Empty]{\Gamma_1}{v}{\TFun[]\TUnit\TUnit}{\Empty}$.
    Applying induction hypothesis on~\cref{it:pres-against} to (a), (b) and (c) we get 
    $\expAgainst[\Empty]{\Empty} {\termc{e_2}} {\TUnit} {\Empty}$.
    Inversion of \rulenameexpCheck on (d) and \rulenameexpConst for $\unitk$
    enables the application of \rulenameexpApp to deduce
    $\expSynth[\Empty]{\Gamma_1} {\appe{v}{\unitk}} {\TUnit} {\Empty}$.
    Conclude using \rulenameexpCheck, P-Exp and P-Par.

    Rule Act-New. By L-Tau, $\Gamma_2 = \Empty$. Inversion of P-Exp on the second hypothesis 
    yields $\expAgainst[\Empty]{\Gamma_1} {\termc{e_1}} {\TUnit} {\Empty}$. Applying inversion of 
    Act-New and induction hypothesis on~\cref{it:pres-against} we get 
    $\expAgainst[\Empty]{\Gamma_1, \ebind{x}{\Normal[+]T}, \ebind{y}{\Normal[-]T}} {\termc{e_2}} {\TUnit} {\Empty}$.
    Using P-Exp we know that \linebreak $\isProc[\Gamma_1, \ebind{x}{\Normal[+]T}, \ebind{y}{\Normal[-]T}]{\expp{e_2}}.$
    Conclude using P-New.

    Rule Act-JoinL. Inversion of P-Par dictates a context split: 
    $\isProc[\Gamma_1, \Gamma_2^1]{p_1}$ and $\isProc[\Gamma_1, \Gamma_2^2]{p_2}$.
    Using inversion of L-Par and applying induction hypothesis on \cref{it:pres-proc} 
    we get $\isProc[\Gamma_1, \Gamma_3]{q_1}$ and $\isProc[\Gamma_1, \Gamma_4]{q_2}$,
    where $\Gamma_2' = \Gamma_3,\Gamma_4$.
    Conclude with P-Par and L-Par.

    Rule Act-JoinR is similar.

    Rule Act-Msg. In this case, $\trans{\newp xy{p_1}}{\tau}{\newp xy{p_2}}$.
    By inversion of Act-Msg, we know that $\trans{p_1}{\parl{\receivel xv}{\sendl yv}}{p_2}$. 
    For a $\tau$-transition, the second hypothesis reads as $\isProc[\Gamma_1]{\newp xy{p_1}}$
    (recall rule L-Tau). By P-New, we have 
    $\isProc[\Gamma, \ebind x{\Normal[+]{T}} , \ebind y{\Normal[-]{T}}]{p_1}$.
    By L-Par, L-Rcv and L-Send, we know that 
    $\Normal[+]{T}= \TIn{U}{S}$
    and 
    $\transCtx{\Gamma_1, \ebind{x}{\TIn{U}{S}}, \ebind{y}{\TOut{U}{\Normal[-]{S}}}}{\parl{\receivel{x}{v}}{\sendl{y}{v}}}{\Gamma_1, \ebind{x}{S},\ebind{y}{\Normal[-]{S}}}$
    and $\expAgainst[\Empty]{\Gamma_1}{v}{U}{\Empty}$.
    Applying induction hypothesis on~\cref*{it:pres-proc} we get 
    $\isProc[\Gamma_1, \ebind xS , \ebind y{\Normal[-]{S}}]{p_2}$.
    Conclude applying P-New.

    Rule Act-Bra. Premises to P-New include 
    (a) $\isProc[\Gamma_1, \ebind x{\Normal[+]{T}} , \ebind y{\Normal[-]{T}}]{p_1}$.
    By L-Match, L-Sel and L-Par we know that $\Normal[+]{T} = \TIn{(\TPApp{P}{\overline U})}{S'}$
    where $\protocolsfull$ and $\termc{C} = \termc{C_k}$ for some $k\in I$. Thus, we can rewrite (a)
    as  $\isProc[\Gamma_1, \ebind x{\TIn{(\TPApp{P}{\overline U})}{S'}} , \ebind y{\TOut{(\TPApp{P}{\overline U})}{\Normal[-]{S'}}}]{p_1}$.
    Inversion of Act-Bra yields $\trans{p_1}{\parl{\receivel xC}{\sendl yC}}{p_2}$.
    Applying induction hypothesis on~\cref*{it:pres-proc} we get
    $\isProc[\Gamma_1, \ebind x{\matchin[k]} , \ebind y{\TFOut {\overline{T_{k}}\tsubs{\overline U}{\overline\alpha}}{\Normal[-]{S'}}}]{p_2}$. 
    Apply \cref{lem:ops-dual}, \cref{it:dual-norm} and P-New to conclude.

    Rule Act-Wait. Premises to P-New include 
    $\isProc[\Gamma_1, \ebind x{\Normal[+]{T}} , \ebind y{\Normal[-]{T}}]{p_1}$.
    By L-Par, L-Wait and L-Term, we know that $\Normal[+]{T} = \TEndW$.
    Applying induction hypothesis on~\cref*{it:pres-proc} we get  
    $\isProc[\Gamma_1]{p_2}$.

    Rule Act-ParL. Inversion of Act-ParL yields $\trans{p_1}{\pi}{p_2}$. Premises to P-Par
    are (a) $\isProc[\Gamma_1, \Gamma_2^1]{p_1}$ and (b) $\isProc[\Gamma_1, \Gamma_2^2]{q_1}$, 
    where $\Gamma_2= \Gamma_2^1, \Gamma_2^2$.
    Applying induction hypothesis on~\cref*{it:pres-proc} to (a) we get $\isProc[\Gamma_1,\Gamma_3]{p_2}$,
    where $\transCtx{\Gamma_2^1}{\pi}{\Gamma_3}$. Conclude using (b) and applying P-Par.

    Rule Act-ParR is similar.

    Rule Act-Res. Premises to Act-Res are $\trans{p_1}{\pi}{p_2}$ and $\termc x,\termc y\not\in\FVL\pi$.
    Since $\termc x,\termc y\not\in\FVL\pi$, we know that $\transCtx{\Empty}{\pi}{\Empty}$.
    Premises to P-New include 
    $\isProc[\Gamma_1, \ebind x{\Normal[+]{T}} , \ebind y{\Normal[-]{T}}]{p_1}$.
    Induction hypothesis on~\cref*{it:pres-proc} yields   
    $\isProc[\Gamma_1, \ebind x{\Normal[+]{T}} , \ebind y{\Normal[-]{T}}]{p_2}$.
    Conclude using P-New.
    
    Rule Act-OpenL. By L-OpenL we know that $\Gamma_2 = \ebind{x}{\TOut{T}{S}}$.
    Inversion of P-New yields \linebreak
    $\isProc[\Gamma_1, \ebind x{\TOut{T}{S}} , \ebind{a}{\Normal[+]{U}}, \ebind b{\Normal[-]{U}}]{p_1}$.
    Inversion of L-Send, \rulenameexpCheck and \rulenameexpVar yields ${\Normal[+]{U}} = \typec T$. By induction hypothesis
    on~\cref*{it:pres-proc}, conclude that 
    $\isProc[\Gamma_1, \ebind x{S} , \ebind b{\Normal[-]{U}}]{p_2}$.

    Rule Act-OpenR is similar.

    Rule Act-CloseL. By L-Tau, $\transCtx{\Empty}{\tau}{\Empty}$.
    Inversion of P-New yields 
    $\isProc[\Gamma_1, \ebind{x}{\Normal[+]{T}}, \ebind y{\Normal[-]{T}}]{p_1}$.
    By L-Rcv, we know that $\Normal[+]{T} = \TIn{U}{S}$. By L-Rcv, L-OpenL, L-Par
    and induction hypothesis on~\cref*{it:pres-proc} we get 
    $\isProc[\Gamma_1, \ebind x{S} , \ebind{y}{\Normal[-]{S}},\ebind b{\Normal[-]{U}},\Gamma]{p_2}$
    where $\expAgainst[\Empty]{\Gamma}{a}{U}{\Empty}$. Applying \rulenameexpCheck and 
    \rulenameexpVar we know that $\Gamma = \ebind{a}{U}$.
    Conclude applying P-New twice.

    Rule Act-CloseR is similar.
  \end{enumerate}
\end{proof}

Canonical forms tell us the form of values from the form of types.

\begin{lemma}[Canonical forms]
  \label{lem:cforms}
  Let $\expSynth{\Gamma_1} vT {\Gamma_2}$.
  \begin{enumerate}
  \item If $\typec T = \TUnit$, then $\termc v = \unitk$.
  \item If $\typec T = \TFun[] UV$, then
    $\termc v = \abse xUe$ or
    $\termc v = \rece xTe$ or
    $\termc v = \forkk$ or
    $\termc v = \waitk$ or 
    $\termc v = \terminatek$ or
    $\termc v = \tappe{\tappe{\receivek}{W}}{X}$ or
    $\termc v = \tappe{\tappe{\sendk}{W}}{X}$ or
    $\termc v = \appe{\tappe{\tappe{\sendk}{W}}{X}}{u}$ or
    $\termc v = \tappe{\selecte C}{\overline W}$.
  \item If $\typec T = \TPair UV$, then $\termc v = \paire uw$.
  \item If $\typec T = \TForall \alpha \kind U$, then
    $\termc v = \tabse \alpha \kind u$ or
    $\termc v = \receivek$ or
    $\termc v = \tappe \receivek V$ or
    $\termc v = \sendk$ or
    $\termc v = \tappe \sendk V$ or
    $\termc v = \tappe{\selecte C}{\overline V}$ or
    $\termc v = \selecte C$ or 
    $\termc v = \newk$.
  \item If $\typeAgainst T \kinds$, then $\termc v = \termc x$.
  \end{enumerate}
\end{lemma}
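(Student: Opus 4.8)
The plan is to argue by a case analysis on the shape of the value $\termc v$, following the grammar of values, combined with inversion of the (unique) applicable synthesis rule. The key observation is that type synthesis for expressions is syntax-directed: for each value form exactly one rule of \cref{fig:alg-typing} can conclude $\expSynth{\Gamma_1}vT{\Gamma_2}$, and for the constants the synthesized type is fixed by $\typeofop$ in \cref{fig:type-constants}. Hence, given the value form of $\termc v$, the head constructor of $\typec T$ is determined, and each of clauses 1--4 amounts to collecting exactly those value forms whose synthesized type carries the head constructor named in that clause. Since there is no subsumption or conversion step in synthesis, there is no ambiguity to resolve.

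Concretely, I would first tabulate the head of the synthesized type for every value form. The bare constants receive their types from \cref{fig:type-constants}: $\unitk$ gets $\TUnit$; $\forkk$, $\waitk$, $\terminatek$ get function types; $\newk$, $\receivek$, $\sendk$, $\selecte C$ get $\forall$-types. An abstraction $\abse xUe$ and a recursive function $\rece xTe$ get function types (\rulenameexpAbs, \rulenameexpRec, the latter because its synthesized type is the normal form of the given function-type annotation); a type abstraction $\tabse\alpha\kind v$ gets a $\forall$-type (\rulenameexpTAbs); a pair $\paire uw$ gets a product type (\rulenameexpPair). For the partially-applied session constants admitted by the value grammar I would trace one or two uses of \rulenameexpTApp or \rulenameexpApp on the fixed constant type: $\tappe\receivek W$ still has a $\forall$-type while $\tappe{\tappe\receivek W}X$ has a function type; $\tappe\sendk W$ has a $\forall$-type while $\tappe{\tappe\sendk W}X$ and $\appe{\tappe{\tappe\sendk W}X}u$ have function types; and $\selecte C$ keeps a $\forall$-type as long as the quantifier over the continuation $\typec\beta\colon\kinds$ is not instantiated (covering $\selecte C$ and $\tappe{\selecte C}{\overline V}$ of clause~4) and becomes a function type once it is (covering $\tappe{\selecte C}{\overline W}$ of clause~2). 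Here one checks that the explicit normalization performed by \rulenameexpTApp does not alter the head: $\Normal[+]\_$ commutes with $\TUnit$, with function, with product and with universal types by the clauses of \cref{fig:normalisation}. A variable $\termc x$ receives whatever type the context assigns (\rulenameexpVar, \rulenameexpUVar), of any kind. Reading this table backwards yields clauses 1--4.

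For clause 5, I would observe that every non-variable value synthesizes a \emph{functional} type, i.e. one of minimal kind $\kindt$: all of $\TUnit$, function, product and universal types receive kind $\kindt$ (rules \rulenametypeUnit, \rulenametypeArrow, \rulenametypePair, \rulenametypePoly), and the table above shows that no non-variable value can synthesize a session-kinded type. Since $\kindt\not\le\kinds$, such a type cannot satisfy $\typeAgainst T\kinds$; hence if $\typeAgainst T\kinds$ holds then $\termc v$ must be the variable case $\termc v = \termc x$.

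The only mildly delicate point is being exhaustive about the partially-applied forms of $\receivek$, $\sendk$ and $\selecte C$ and tracking each through the correct number of \rulenameexpTApp/\rulenameexpApp steps; beyond that bookkeeping there is no genuine obstacle, precisely because synthesis is deterministic and contains no step that could change the head constructor of the synthesized type.
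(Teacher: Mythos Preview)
Your proposal is correct and takes essentially the same approach as the paper: a case analysis on the value forms together with inversion of the unique applicable synthesis rule, using $\typeofop$ for the constants and observing that only variables can carry a session-kinded type. The paper's own proof is a two-sentence sketch of exactly this analysis; your tabulation of the head constructor for each partially-applied constant and your explicit check that normalization in \rulenameexpTApp preserves the head are useful elaborations but do not change the underlying argument.
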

\begin{proof}
  A straightforward analysis of the rules that apply to judgement
  $\expSynth{\Gamma_1} vT {\Gamma_2}$. The cases for $\typec T = \TFun[] UV$ and
  $\typec T = \TForall \alpha \kind U$ come from rules \rulenameexpAbs,
  \rulenameexpTAbs as well as from the constants that bear these two types. The
  only values for session types (types of kind $\kinds$) are variables
  (representing channel ends).
\end{proof}

\begin{lemma}[Completed processes are closed]
  \label{lem:completed-closed}
  If $\termc{p}$ is completed, then $\isProc[\Empty]{p}$.
\end{lemma}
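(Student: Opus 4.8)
The plan is to prove the lemma by a straightforward induction on the structure of the completed process $\termc p$. By definition a completed process has the form $\termc{\expp\unitk \PAR \dots \PAR \expp\unitk}$, so it is either a single thread $\expp\unitk$ or a parallel composition $\parp{p_1}{p_2}$ in which both $\termc{p_1}$ and $\termc{p_2}$ are again completed processes. In each case the goal is to derive $\isProc[\Empty]{p}$, i.e.\ typeability under the empty term context.

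For the base case $\termc p = \expp\unitk$, I would first note that $\typeof{\unitk} = \TUnit$ (see \cref{fig:type-constants}), so rule \rulenameexpConst gives $\expSynth[\Empty]{\Empty}{\unitk}{\TUnit}{\Empty}$, and since $\TUnit =_\alpha \TUnit$ rule \rulenameexpCheck yields $\expAgainst[\Empty]{\Empty}{\unitk}{\TUnit}{\Empty}$. Rule P-Exp then applies with $\Gamma = \Empty$, producing $\isProc[\Empty]{\expp\unitk}$ as required.

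For the inductive step $\termc p = \parp{p_1}{p_2}$, with $\termc{p_1}$ and $\termc{p_2}$ completed, the induction hypothesis gives $\isProc[\Empty]{p_1}$ and $\isProc[\Empty]{p_2}$. Rule P-Par splits the incoming context; instantiating it with the degenerate split $\Empty = \Empty, \Empty$ yields $\isProc[\Empty,\Empty]{\parp{p_1}{p_2}} = \isProc[\Empty]{\parp{p_1}{p_2}}$, which is the claim.

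I do not anticipate any real obstacle: the argument is essentially bookkeeping. The only two facts that need checking are that $\typeofop$ assigns type $\TUnit$ to the constant $\unitk$, and that the context-splitting premise of P-Par admits the split into two empty contexts; both are immediate. If one wants to avoid committing to an associativity convention for $\PAR$ in the displayed shape of completed processes, the same argument can instead be run as an induction on the number $n \ge 1$ of threads, peeling off one $\expp\unitk$ at a time via P-Par together with the base case above.
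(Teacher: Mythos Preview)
Your proof is correct and is exactly the natural argument; the paper itself states the lemma without proof, so your structural induction on the shape of completed processes (base case via \rulenameexpConst, \rulenameexpCheck, and P-Exp; inductive step via P-Par with the trivial split $\Empty = \Empty,\Empty$) fills in precisely what the authors left implicit.
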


\begin{proof}[Proof of \cref{thm:progress} (Progress)]
  By mutual rule induction in the first hypothesis, in each case.

  Rules \rulenameexpConst, \rulenameexpVar, \rulenameexpAbs and
  \rulenameexpTAbs. Constants, variables, lambda abstractions and type
  abstractions are values.

  Rule \rulenameexpLetUnit. Since $\termc{e} = \letu {e_1}{e_2}$ is not a value
  it must reduce. Distinguish two cases. When $\termc{e_1}$ is a value,
  canonical forms (\cref{lem:cforms}) give $\termc{e_1} = \unitk$ and
  $\termc{e}$ has a transition by rule Act-Let*. Otherwise, the premises to rule
  \rulenameexpLetUnit include $\expAgainst{\Gamma_1^{\kinds}}{e_1}{\TUnit}{\Gamma_2}$ and
  induction gives $\trans {e_1} \lambda {e'_1}$ (because $\termc{e_1}$ is not a
  value). Hence $\termc{e}$ reduces by rule Act-LetE*.

  Rule \rulenameexpRec. Expression $\rece xTv$ is a value

  Rule \rulenameexpApp. Since $\termc{e} = \appe {e_1}{e_2}$ is not a value
  it must reduce. Distinguish two cases.
  When $\termc{e_1}$ is not a value, by induction $\termc{e_1}$ reduces and so
  does $\termc{e}$ (by rule ActAppL).
  When $\termc{e_1}$ is a value, canonical forms (\cref{lem:cforms}) give a
  series of cases that we analyse in turn. When $\termc{e_1}$ is an abstraction
  we further distinguish two cases. If $\termc{e_2}$ is a value, then
  $\termc{e}$ reduces by rule Act-App. Otherwise monotonicity
  (\cref{lem:monotonicity}) gives $\Gamma_2 \subseteq \Gamma_1^\kinds$ hence
  $\Gamma_2^\kinds$. The induction hypothesis allow concluding that $\termc{e}$
  reduces by rule Act-AppR.
  The case for $\reck$ is similar to that of $\lambda$.
  When $\termc{e_1} = \forkk$, we further distinguish two cases. When
  $\termc{e_2}$ is a value, $\termc{e}$ reduces by rule Act-Fork. Otherwise,
  using induction expression $\termc{e}$ reduces by rule Act-AppR.
  When $\termc{e_1} = \waitk$, we further distinguish two cases. When
  $\termc{e_2}$ is a value, canonical forms give $\termc{e_2} = \termc{x}$,
  hence $\termc{e}$ reduces by rule Act-Wait. Otherwise,  monotonicity
  (\cref{lem:monotonicity}) places us in the induction hypothesis, hence
  $\termc{e_2}$ reduces (because it is not a value), and $\termc{e}$ reduces by
  rule Act-AppR.
  The subcase for $\terminatek$ is similar.
  When $\termc{e_1} = \tappe{\tappe{\receivek}{W}}{X}$, we further distinguish
  two cases. When $\termc{e_2}$ is a value, canonical forms give
  $\termc{e_2} = \termc{x}$, hence $\termc{e}$ reduces by rule Act-Rcv.
  Otherwise, monotonicity (\cref{lem:monotonicity}) places us in the induction
  hypothesis, hence $\termc{e_2}$ reduces (because it is not a value), and
  $\termc{e}$ reduces by rule Act-AppR.
  The subcases for $\tappe{\tappe{\sendk}{W}}{X}$ and
  $\appe{\tappe{\tappe{\sendk}{W}}{X}}{u}$ and $\tappe{\selecte C}{\overline W}$
  are similar.

  Rule \rulenameexpTApp. These cases are analogous to those of rule
  \rulenameexpApp.

  Rule \rulenameexpPair. Distinguish three cases for
  $\termc{e} = \paire{e_1}{e_2}$. When both $\termc{e_1}$ and $\termc{e_2}$ are
  values, $\termc{e}$ is a value. When $\termc{e_1}$ is not a value, induction
  ensures that $\termc{e_1}$ reduces and thus $\termc{e}$ has a transition by
  rule Act-PairL. Finally, when $\termc{e_1}$ alone is a value, the premises to
  rule \rulenameexpPair include
  $\expSynth{\Gamma_1^{\kinds}}{e_1}{T}{\Gamma_2}$.
  Monotonicity (\cref{lem:monotonicity}) gives
  $\Gamma_2 \subseteq \Gamma_1^{\kinds}$, hence $\Gamma_2^{\kinds}$ and we are
  in the induction hypothesis. Conclude with rule Act-PairR.

  Rule \rulenameexpLet. Similar to \rulenameexpLetUnit.

  Rule \rulenameexpMatch. In this case $\termc{e} = \matchwithe{e'}CxeiI$. The
  premises to the rule include $\expSynth{\Gamma_1^{\kinds}}{e'}{S}{\Gamma_2}$ and
  $\Normal[+]{S}={\TIn{(\TPApp P {\overline U})}{S'}}$. Agreement
  (\cref{lem:agreement}) gives $\typeAgainst{S}{\kind}$. Distinguish two cases.
  When $\termc{e'}$ is value, normalization of session types
  (\cref{lem:norm-st}) and canonical forms (\cref{lem:cforms}) give
  $\termc{e'} = \termc{x}$ and $\termc{e}$ reduces by rule Act-Match. Otherwise,
  the result follows by induction followed by rule Act-MatchE.

  Rule P-Exp. In this case $\termc{p} = \expp e$. Distinguish two cases. When
  $\termc{e}$ is a value, canonical forms give $\termc{e} = \unitk$ and
  $\termc{p}$ is completed. Otherwise, by induction $\trans e \lambda {e'}$ and
  we further distinguish three cases according to label $\lambda$:
  
  \begin{center}
  \begin{tabular}{cc}
    $\lambda$ & LTS rule\\\hline
    $\forkl v$ & Act-Fork\\
    $\newl xyT$ & Act-New\\
    $\sigma$ & Act-Session\\
    $\beta$ & Act-Beta\\\hline
  \end{tabular}
\end{center}

  Rule P-Par. In this case $\termc{p} = \parp{p_1}{p_2}$. By induction each
  $\termc{p_i}$ is either completed, reduces or is a deadlock. We must analyse
  eight cases. When both $\termc{p_i}$ are completed, $\termc{p}$ is completed.
  When one of the $\termc{p_i}$ is a deadlock, then so is $\termc{p}$. In case
  both $\termc{p_i}$ reduce, then $\termc{p}$ reduces by ACT-JoinR (or
  Act-JoinL). Otherwise, when $\termc{p_1}$ reduces (and $\termc{p_2}$ is
  completed), $\termc{p}$ reduces with rule Act-ParL. Finally, hen $\termc{p_2}$
  reduces (and $\termc{p_1}$ is completed), $\termc{p}$ reduces with rule
  Act-ParR.

  Rule P-New. In this case $\termc{p} = \newp xy{p_1}$. The premise to the rule
is $\isProc[\Gamma, \ebind x{\Normal[+]T} , \ebind y{\Normal[-]T}]{p_1}$. Agreement
(\cref{lem:agreement}) and the fact that duality is defined on session
types alone 
establish that
$(\Gamma,\ebind x{\Normal[+]T}, \ebind y{\Normal[-]T})^\kinds$. By induction $\termc{p_1}$ is
completed, reduces or is a deadlock. Because completed processes are closed
(\cref{lem:completed-closed}), $\termc{p_1}$ cannot be completed. If
$\termc{p_1}$ is deadlocked, then so is $\termc{p}$, by definition. It remains
to check the case when $\termc{p}$ is not a deadlock and reduces. Suppose that
$\trans{p_1}{\pi}{p_2}$, with $\pi$ in the progress set for
$\termc{x},\termc{y}$. Then $\termc{p}$ reduces according to the table below.
\begin{center}
  \begin{tabular}{cc}
    $\pi$ & LTS rule\\\hline
    $\sigma$ & Act-Session\\
    $\forkl v$ & Act-Fork\\
    $\newl abT$ & Act-New\\
    $\parl{\receivel xv}{\sendl yv}$ & Act-Msg\\
    $\parl{\receivel xC}{\sendl yC}$ & Act-Bra\\
    $\parl{\closel x}{\openl y}$ & Act-Wait\\
    $\parl{\receivel xa}{\scopel abya}$ & Act-CloseL\\
    $\parl{\receivel xb}{\scopel abyb}$ & Act-CloseR\\\hline
  \end{tabular}
\end{center}
\end{proof}


\section{Embedding context-free session types}
\label{sec:appendix-cfst-algst}

This section presents an informal investigation of two questions that relate \algst and
context-free session types (CFST
\cite{DBLP:journals/corr/abs-2106-06658}): can we map any \algst
program into an equivalent CFST program? And: can we map any CFST program to an
equivalent \algst program?

This investigation is necessarily informal because the features of
\algst and CFST do not line up. \algst is a minimalist mostly linear
version of System~F with products, sessions, and algebraic
protocols. CFST is also based on System~F, but has a richer kind
structure that enables unrestricted as well as linear features,
variant and record types, and the full slate of context-free session
primitives.

To enable a meaningful comparison,  we focus on the linear fragment of
CFST~\cite{DBLP:journals/corr/abs-2106-06658}
using the notation of \citet{DBLP:journals/corr/abs-2203-12877},
restricted to pairs and ignoring De Bruijn indices.
In \algst, we only consider types that are not polymorphic in
protocols because CFST has no equivalent. We add linear algebraic
datatypes to cope with variant types in CFST.




\paragraph{Non-parameterized algebraic sessions into (linear) context-free sessions} 
The translation from 
algebraic session types that do not parameterize over protocol types
(AlgST${}_0$) to context-free session types 
(CFST)  is  given by $\algsttocfst{\typec T}$, which is defined on
types in normal form, only:
\begin{mathpar}
    \algsttocfst{\TUnit} = \syntax{unit} \quad
    \algsttocfst{\TFun[]TU} = \algsttocfst {\typec{T}} \rightarrow \algsttocfst {\typec{U}} \quad
    \algsttocfst{\TPair TU} = \algsttocfst {\typec{T}} \times \algsttocfst {\typec{U}}\quad
    \algsttocfst{\TForall{\alpha}{\kindt} T} = \forall \alpha:\kindtl. \algsttocfst {\typec{T}}\\
	\algsttocfst{\typec{\alpha}} = \alpha\qquad
	\algsttocfst{\TIn{T}{U}}= ?\algsttocfst{\typec{T}}; \algsttocfst{\typec U} \qquad
	\algsttocfst{\TOut{T}{U}}= !\algsttocfst{\typec T}; \algsttocfst{\typec U} \qquad 
	\algsttocfst{\TEndT} = \syntax{skip} \qquad 
	\algsttocfst{\TEndW} = \syntax{skip}\\
	\algsttocfst{\TIn{\TPApp{P}{\overline U}}{S}}= P_U; \algsttocfst{\typec S} 
	\text{ where } 
	P_U \doteq \&\{ C_i\colon
	\algsttocfst{\TFIn {\overline{T_i[\overline U/\overline\alpha]}}{\TEndT}} \}_{i\in I}
	\text{ for }
\protocolk\ \typec{\proto\ \overline{\alpha}}~\blk{=}~\typec{\{{\termc{\constr_i}\ \overline{T_i}}\}_{i\in I} } \\
	\algsttocfst{\TOut{\TPApp{P}{\overline U}}{S}}= P_U; \algsttocfst{\typec S} 
	\text{ where } 
	P_U \doteq \oplus\{ C_i\colon
        \algsttocfst{\TFOut {\overline{T_i[\overline U/\overline\alpha]}}{\TEndT}} \}_{i\in I}
        \text{ for }
	\protocolk\ \typec{\proto\ \overline{\alpha}}~\blk{=}~\typec{\{{\termc{\constr_i}\ \overline{T_i}}\}_{i\in I} }
    \end{mathpar}

Functional types and 
message exchanges are naturally mapped onto their counterparts. The end of a session is mapped 
to $\syntax{skip}$, whereas algebraic protocols are translated to
type names defined as internal choices or branches, depending on whether they appear 
in sending or receiving positions. The choice labels coincide with the protocol
constructors. 

We claim that if $\isEquiv{T}{U}$, then $\algsttocfst{\typec T}\simeq\algsttocfst{\typec U}$.

\begin{figure}[t!]
  \declrel{CFST to AlgST: Translation of the functional fragment}{$\cfsttoalgst{T\colon \mathsf{T^{lin}}}=\typec T \colon \kindt$}
 
    \begin{mathpar}
    \cfsttoalgst{\syntax{unit}} = \TUnit \and
    \cfsttoalgst{T \rightarrow U} = \TFun[]{\cfsttoalgst{T}}{\cfsttoalgst{U}} \and
    \cfsttoalgst{T \times U} = \TPair{\cfsttoalgst{T}}{\cfsttoalgst{U}}\\
	\cfsttoalgst{\langle \ell \colon T_\ell\rangle_{\ell\in L}} = \typec{X_L }\text{ where }\datak\ \typec{X_L}~\blk{=}
	~\typec{\{{\termc{\ell}\ \cfsttoalgst{T_\ell}}\}_{\ell\in L} }
\\
	\cfsttoalgst{X\colon \kappa} = \typec{X}
	\text{ for } X \doteq T \text{ and } \kappa \neq \mathsf{S^{lin}}
	\text{ where }
	\datak\ \typec{X}~\blk{=}~\typec{\termc{\mathsf{Unfold}_X}\ \cfsttoalgst{T}}\\
	\cfsttoalgst{T\colon \mathsf{S^{lin}}} = \TOut{X_T}{\TEndT} \text{ where }\protocolk\ \typec{X_T}~\blk{=}~\typec{\termc{X_T}\ \cfsttoprot T}\text{ and $X_T$ is a fresh name.}
    \end{mathpar}
    
    \declrel{CFST to AlgST:Translation of the session typed fragment}{$\cfsttoprot{T\colon \mathsf{S^{lin}}}=\typec{\overline {T\colon \kindp}}$}
    
    \begin{mathpar}
    \cfsttoprot{\syntax{skip}} = \emptyseq \and \cfsttoprot{! T} = \cfsttoalgst{T} \and
    \cfsttoprot{? T} = \TMinus {\cfsttoalgst{T}}\and
    \cfsttoprot{T; U} = \cfsttoprot{T} \cfsttoprot{U}
    	\\
	\cfsttoprot{\oplus\{\ell\colon T_\ell\}_{\ell\in L}} = \typec{X_L}
        \text{ where } \protocolk\ \typec{X_L}~\blk{=} ~\typec{\{{\termc{\ell}\ \cfsttoprot{T_\ell}}\}_{\ell\in L}}
	\\
	\cfsttoprot{\&\{\ell\colon T_\ell\}_{\ell\in L}} = \TMinus
        {\typec{X_L}}
        \text{ where } \protocolk\ \typec{X_L}~\blk{=}
        ~\typec{\{{\termc{\ell}\ \cfsttoprot{\syntax{dual}\,T_\ell}}\}_{\ell\in L}}
    \\
	\cfsttoprot{X\colon \mathsf{S^{lin}}} = \typec{X}
	\text{ for } X \doteq T
	\text{ where }
	\protocolk\ \typec{X}~\blk{=}~\typec{\termc{\mathsf{Unfold}_X}\ \cfsttoprot{T}} 
	%
    \end{mathpar}
\caption{Translation of linear context-free session types to algebraic session types}
\label{fig:cfst-vs-algst-f}
\end{figure}


\paragraph{Linear context-free sessions embedded into algebraic sessions} 
In this direction, we omit polymorphism to obtain a lighter, more
perspicuous notation. Our comparative approach scales to the
polymorphic setting, but at the price of high notational overhead. 

The translation of the \emph{linear} and \emph{monomorphic} fragment of CFST 
to AlgST is split between the functional and the session typed fragments,
which are mutually defined and presented in~\cref{fig:cfst-vs-algst-f}. The translation of a
context-free session type $T$ to the algebraic setting is given by $\cfsttoalgst T$. 
The functional operators have natural counterparts in the algebraic side.
When translated to the isorecursive setting, functional recursive types are represented 
by a datatype enclosed with an explicit unfold. Variant types are mapped to datatypes whose 
constructors are the respective labels. Session types are materialized by protocol definitions:
each session type originates a protocol emerging from the conversion of a sequential composition of 
types into a sequence of (algebraic) types with kind $\kindp$. 
This conversion is governed by function $\cfsttoprot{\cdot}$ that maps $\syntax{skip}$
to the empty sequence and uses polarities to enforce the direction of communication.
In the session typed fragment, recursive types and choices are represented by protocol definitions.
Type names are considered generative and we relax \algst to admit 
overloaded constructor names that can be reused across protocol definitions.

It remains to define suitable adapters in the places where CFST relies
on type equivalence of equirecursive types. To this end we generate
isomorphisms as witnesses of equivalence proofs using the rule system
proposed by \citet{DBLP:journals/corr/abs-2203-12877}. 
The tricky part of constructing such an isomorphism is its extension
into session types. 
For simplicity, we consider just one side of the isomorphism
$\mathsf{iso} : \cfsttoalgst{T} \rightarrow \cfsttoalgst{U}$ 
where $T \simeq U$ are session types:
\begin{lstlisting}
iso t = let (u, u') = new [|U|] in fork $ delta t u' |> u
\end{lstlisting} 
This definition relies on a process $\delta : \cfsttoalgst{T} \to
\TDual{\cfsttoalgst{U}} \to \TUnit$ that consumes two (dual) channel
ends and implements a forwarder that compensates for the differences
between $T$ and $U$. The witness processes $\delta$ are 
defined for each pair of equivalent session types, according 
to the rules. We sketch some witness 
processes in \cref{tab:iso-witness-sessions} and provide a more detailed analysis of the isomorphism.

For the sake of simplicity, in all cases in \cref{tab:iso-witness-sessions}, we elide an initial administrative \lstinline|match| 
and \lstinline|select| step resulting from the definition of $\cfsttoalgst{T\colon \mathsf{S^{lin}}}$. 
The case for $T\simeq U$ where $T = \syntax{Skip}$ and $U = \syntax{Skip}$ would be 
formally defined by:
\begin{lstlisting}
delta t u = match u with {
  XU u -> let t = select XT t in 
          wait u 
          |> terminate t }
\end{lstlisting}
Process $\delta$ \lstinline|match|es with $X_U$ on channel $u$ and \lstinline|select|s $X_T$ on channel $t$, as prescribed by the
protocol definitions resulting from $\cfsttoalgst{\syntax{Skip}}$ (see~\cref{fig:cfst-vs-algst-f}). According
to the definition of $\cfsttoalgst{\syntax{Skip}}$ and \rulenameexpMatch, we are left with type $\TEndW$ 
on $u$ and $\TEndT$ on $t$, so we \lstinline|wait| on $u$ and \lstinline|terminate| on $t$. 

In general, $\delta$ inserts the explicit unfolds imposed by the translation to the isorecursive setting.
We showcase our approach for the equivalence $Y \simeq U$ where type $Y$ is defined by $T$, $Y \doteq T$.
The definition of $\delta$ follows the equivalence rule of \citet{DBLP:journals/corr/abs-2203-12877}
as shown below.
A witness process $\delta$ for the embedding $\mathsf{iso} : \cfsttoalgst{Y} \rightarrow \cfsttoalgst{U}$  
builds on a process $\delta_1$ witnessing $T \simeq U$ and is defined by:

\begin{minipage}{0.6\textwidth}
\begin{lstlisting}
delta y u = match u with {
   XU u -> let y = select XY y in 
           let t = select UnfoldY y in
           delta1 t u }
\end{lstlisting}
\end{minipage}
\begin{minipage}{0.4\textwidth}
	$$\inferrule{Y\doteq T \\ T\,\mathsf{contr}\\\isodecl{\delta_1} T \simeq U}{\isodecl{\delta} Y \simeq U}$$
\end{minipage}
As a result of the definition of $\cfsttoalgst{Y\colon \mathsf{S^{lin}}}$, 
process $\delta$ \lstinline|match|es with $X_U$ on channel $u$ and \lstinline|select|s $X_Y$ on channel $y$, as prescribed by the
protocol definitions resulting from $\cfsttoalgst{\syntax{Y}}$ (see~\cref{fig:cfst-vs-algst-f}). Then, 
 \lstinline|select|s the explicit $\termc{\mathsf{Unfold}_Y}$ introduced by $\cfsttoprot{Y}$ and 
uses $\delta_1$ to consume the leftovers at $t$ and $u$.
In the case $X \simeq U$ where $X \doteq T$, we are provided with a witness $\delta_1$ for $T \simeq U$.
We \lstinline|select| the explicit $\termc{\mathsf{Unfold}_X}$ introduced by $\cfsttoprot{X}$ and then 
use $\delta_1$ to consume the leftovers at $t$ and $u$.

For the case of $!T; V \simeq{} !U;W$, we are provided not only with a witness $\delta_1$ to consume the remainder
of channels $v$ and $w$, but also with a function $\theta \colon \cfsttoalgst{T} \rightarrow \cfsttoalgst{U}$
that witnesses the isomorphism resulting from $T\simeq U$. The functions witnessing the isomorphism 
between types in the functional fragment are presented in~\cref{tab:iso-cfst-to-algst-f}.
Given $\delta_1$ and $\theta$, after the initial administrative \lstinline|match| and \lstinline|select|,
 process $\delta$ \lstinline|receive|s $u$ on channel $w$, 
transforms $u$ with $\theta^{-1}$ and sends the result through channel $v$. Process
$\delta_1$ consumes the remainder of $v$ and $w$.

To witness the isomorphism corresponding to $!T; V \simeq{} !U$ where $V$ is a 
terminated session (see \citet{DBLP:journals/corr/abs-2203-12877} for further details), we are given
a process $\sigma$ that consumes the explicit unfolds resulting from the isorecursive interpretation of $V$. The 
\emph{consumer} processes $\sigma$ for terminated sessions are presented in \cref{tab:iso-is-terminated}.

The other cases in \cref{tab:iso-cfst-to-algst-f,tab:iso-witness-sessions,tab:iso-is-terminated} follow
a similar reasoning. The omitted cases for monomorphic types also fit the same approach. For the polymorphic 
fragment, we would need to consider a \emph{reader monad} where one could store a map from type
variables to the processes that consume the corresponding channels or transform the 
corresponding values, depending on whether the type variables are of kind $\kinds$ or not. 
In the axiom for polymorphic (non-session typed) variables $n\simeq n$, for instance,
we could think of $\theta$ as the identity function, but we wouldn't know which type to 
assign to the bound variable: $\theta = \abse{x}{?}{x}$. Whereas to consume two channels 
typed with a polymorphic session-typed variable, we would need to know how to consume both
endpoints.
%
%
Processes to consume or transform values typed with polymorphic variables would need to 
be provided to $\delta$. For the translation of types in the polymorphic fragment, we would need to collect the free
variables occurrig in the type and use a parametrized $\datak$ or $\protocolk$ definition:
\begin{mathpar}
\cfsttoalgst{X\colon \kappa} = \typec{X\ \overline{\alpha}}
	\text{ for } X \doteq T \text{ and } \kappa \neq \mathsf{S^{lin}}
	\text{ where } \datak\ \typec{X\ \overline{\alpha}}~\blk{=}~\typec{\termc{\mathsf{Unfold}_X}\ \cfsttoalgst{T}} \text{ and }\FVL{T} = \overline\alpha
	\and 
	\cfsttoprot{X\colon \mathsf{S^{lin}}} = \typec{X\ \overline{\alpha}}
	\text{ for } X \doteq T
	\text{ where }
	\protocolk\ \typec{X\ \overline{\alpha}}~\blk{=}~\typec{\termc{\mathsf{Unfold}_X}\ \cfsttoprot{T}} 
	\text{ and }\FVL{T} = \overline\alpha
\end{mathpar}
Although we believe that polymorphic types
do not pose any problem to this analysis, we decided to keep the notation simple and
focus on the monomorphic fragment.

In this analysis we only provided witnesses for the isomorphism in one direction.
For the
other direction of the isomorphism, we could follow a similar approach for defining 
$\mathsf{iso}^{-1} : \cfsttoalgst{U} \rightarrow \cfsttoalgst{T}$ 
and then prove that $\mathsf{iso}$ and $\mathsf{iso}^{-1}$ compose giving the identity.

\begin{table}
\begin{tabular}{| c | l |}
\hline
Equivalence rule & Isomorphism witness\\
$T\simeq U \colon \mathsf{S^{lin}}$&$\cfsttoalgst{T} \to
\TDual{\cfsttoalgst{U}} \to \TUnit$\\\hline
$\axiom{}{\isodecl{\delta} \mathsf{Skip}\simeq \mathsf{Skip}}$ & 
\begin{lstlisting}
delta t u = wait u 
        |> terminate t  
\end{lstlisting}
\\\hline
$\inferrule{X\doteq T \\ T\,\mathsf{contr}\\\isodecl{\delta_1} T \simeq U}{\isodecl{\delta} X \simeq U}$ & 
\begin{lstlisting}
delta x u = let t = select UnfoldX x in 
        delta1 t u
\end{lstlisting}
 \\\hline
$\inferrule{X\doteq T \\ T\,\mathsf{contr}\\\isodecl{\delta_1} T;V \simeq U}{\isodecl{\delta} X;V \simeq U}$ & 
\begin{lstlisting}
delta x u = let t = select UnfoldX x in 
        delta1 t u
\end{lstlisting}
 \\\hline
$\inferrule{\isodecl{\theta} T \simeq U\\\isodecl{\delta_1} V \simeq W}{\isodecl{\delta} !T; V \simeq !U;W}$ & 
\begin{lstlisting}
delta v w = let (u, w) = receive w in 
        let t = theta^-1 u 
        let v = send t v in 
        delta1 v w
\end{lstlisting}
 \\\hline
$\inferrule{\isodecl{\theta} T \simeq U\\\isodecl{\sigma} V \checkmark}{\isodecl{\delta} !T; V \simeq !U}$ & 
\begin{lstlisting}
delta t u = let (u', u) = receive u in 
        let t' = theta^-1 u'
        let v = send t' t in
        let v = sigma v in
        wait u 
        |> terminate v
\end{lstlisting}
 \\\hline
$\inferrule{\isodecl{\theta} T \simeq U }{\isodecl{\delta} !T \simeq !U}$ & 
\begin{lstlisting}
delta t u = let (u', u) = receive u in
        let t' = theta^-1 u' in
        let t = send t' t in
        wait u 
        |> terminate t
\end{lstlisting}
 \\\hline
$\inferrule{\isodecl{\delta_\ell} T_\ell \simeq U_\ell\\ (\forall \ell \in L)}{\isodecl{\delta} \oplus\{\ell\colon T_\ell\}_{\ell \in L} \simeq \oplus\{\ell\colon U_\ell\}_{\ell \in L}}$ & 
\begin{lstlisting}
delta t u = match u with {
          k u -> let t = select k t in 
                 deltak t u }
\end{lstlisting}
 \\\hline
$\inferrule{\isodecl{\delta_1} \oplus\{\ell\colon T_\ell;U\}_{\ell \in L} \simeq V}{\isodecl{\delta} \oplus\{\ell\colon T_\ell\}_{\ell \in L};U \simeq V}$ & 
\begin{lstlisting}
delta t u = delta1 t u
\end{lstlisting}
 \\\hline
$\inferrule{\isodecl{\delta_1} T \simeq U}{\isodecl{\delta} \mathsf{Skip}; T \simeq U}$ & 
\begin{lstlisting}
delta t u = delta1 t u
\end{lstlisting}
 \\\hline
\end{tabular}
\caption{Processes that consume two dual session endpoints.}
\label{tab:iso-witness-sessions}
\end{table}

\begin{table}
\begin{tabular}{| c | l |}
\hline
Equivalence rule & Isomorphism witness\\
$T\simeq U \colon \kappa$ with $\kappa\neq \mathsf{S^{lin}}$&$\theta \colon \cfsttoalgst{T} \rightarrow \cfsttoalgst{U}$\\\hline
$\axiom{}{\isodecl{\theta} ()_{\mathsf{lin}}\simeq ()_{\mathsf{lin}}}$ & 
\begin{lstlisting}
theta = lambda x:Unit.x
\end{lstlisting}
\\\hline
$\inferrule{\isodecl{\theta_1} T \simeq U\\\isodecl{\theta_2} V \simeq W}{\isodecl{\theta} T\rightarrow V \simeq U \rightarrow W}$ & 
\begin{lstlisting}
theta t = lambda x:[|U|].theta2 $ t $ theta^-1 x
\end{lstlisting}
 \\\hline
$\inferrule{\isodecl{\theta_1} T \simeq U\\\isodecl{\theta_2} V \simeq W}{\isodecl{\theta} T\otimes V \simeq U \otimes W}$&
\begin{lstlisting}
theta t = <theta1 $ fst t,theta2 $ snd t> 
\end{lstlisting}
 \\\hline
$\inferrule{X\doteq T \\ T\,\mathsf{contr}\\\isodecl{\theta_1} T \simeq U}{\isodecl{\theta} X \simeq U}$ & 
\begin{lstlisting}
theta x = let t = select UnfoldX x in 
      let u = theta1 t in
      u
\end{lstlisting}
 \\\hline
\end{tabular}
\caption{Transformation of non-session typed expressions.}
\label{tab:iso-cfst-to-algst-f}
\end{table}

\begin{table}
\begin{tabular}{| c | l |}
\hline
Is terminated predicate & Witness\\
$T \colon \mathsf{S^{lin}}$&$\sigma \colon \cfsttoalgst{T} \rightarrow \TEndT$\\\hline
$\axiom{}{\isodecl{\sigma} \mathsf{Skip}\checkmark}$ & 
\begin{lstlisting}
sigma t = t
\end{lstlisting}
\\\hline
$\inferrule{\isodecl{\sigma_1} T \checkmark\\\isodecl{\sigma_2} U \checkmark}{\isodecl{\sigma} T;U \checkmark}$ & 
\begin{lstlisting}
sigma t = let t' = sigma1 t in
      sigma2 t'
\end{lstlisting}
 \\\hline
$\inferrule{T\doteq U\\ \isodecl{\sigma_1} T \checkmark}{\isodecl{\sigma} X \checkmark}$&
\begin{lstlisting}
sigma x = let t = select UnfoldX x in
      sigma1 t
\end{lstlisting}
 \\\hline
\end{tabular}
\caption{Processes that consume the explicit unfolds created by the translation.}
\label{tab:iso-is-terminated}
\end{table}



\end{document}